\newtheorem{theorem}{Theorem}
\newtheorem{lemma}[theorem]{Lemma}
\newtheorem{conjecture}{Conjecture}
\newenvironment{proof}
               {\trivlist \item[\hskip\labelsep {\bf Proof.}]}
               {\hspace*{0pt plus 1fill}$\Box$\endtrivlist}
\newenvironment{constrproof}
               {\trivlist \item[\hskip\labelsep {\bf Proof.}]}
               {\hspace*{0pt plus 1fill}$\Box$\endtrivlist}
\newenvironment{classicproof}
               {\trivlist \item[\hskip\labelsep {\bf Proof (classical).}]}
               {\hspace*{0pt plus 1fill}$\Box$\endtrivlist}
\def\arraystretch{1}     % no extra stretching for inference rules
\def\irulebasic#1#2{            %       #1
   \begin{array}{@{}c@{}}       %  ---------------
     #1 \\[-1.2ex]              %       #2
     \hrulefill \\
     #2
   \end{array}}
\def\iruledoublebasic#1#2{      %       #1
   \begin{array}{@{}c@{}}       % ================
     #1 \\[-1.2ex]              %       #2
     \hrulefill \\[-2.5ex]
     \hrulefill \\
     #2
   \end{array}}
\newenvironment{ruleset}
   {\begin{center}\lineskip=1.5ex plus 0.4ex minus 0.1ex} %more vertical space
   {\end{center}}
\def\ruleleft{\hfilneg\hfil\hskip 1em} %for even horizontal spacing
\def\ruleright{\hskip 1em\hfil}        %for even horizontal spacing
\def\srule#1{\ruleleft$#1$\ruleright}
\def\srulenumber#1#2{\ruleleft$#2$~~(#1)\ruleright}
\def\irule#1#2{\ruleleft$\irulebasic{#1}{#2}$\ruleright}
\def\irulenumber#1#2#3{\ruleleft$\irulebasic{#2}{#3}$~~(#1)\ruleright}
\def\iruledouble#1#2{\ruleleft$\iruledoublebasic{#1}{#2}$\ruleright}
\def\irulenumberdouble#1#2#3{\ruleleft$\iruledoublebasic{#2}{#3}$~~(#1)\ruleright}
\def\and{\hskip 1.5em\relax}
\newenvironment{syntaxleft}
   {\begin{flushleft}$\begin{array}{@{}l@{\quad}r@{~~}r@{~~}l@{\quad}l}}
   {\end{array}$\end{flushleft}}
\def\syntaxclass#1{\mbox{#1}}
\def\explanation#1{\mbox{#1}}
\def\alt{\mid}
\def\becomes{\leftarrow}
\def\fun{\rightarrow}
\def\dN{I \! \! N}              % blackboard bold N
\def\univ{{\cal U}}
\def\ccl{c}
\def\prem{{\cal A}}
\def\ensPart#1{\wp(#1)}
\def\sysInfer{\Phi}
\def\opInfer{F_\sysInfer}
\def\base{S}
\def\lfp#1{\mathrm{lfp}(#1)}
\def\gfp#1{\mathrm{gfp}(#1)}
\def\ensInd#1{\Delta(#1)}
\def\ensCoInd#1{\nabla(#1)}
\def\isvalue#1{#1 \in \mbox{\sf Values}}
\def\evalname{\Rightarrow}
\def\eval#1#2{#1 \evalname #2}
\def\evalinfname{\stackrel{\infty}{\Rightarrow}}
\def\evalinf#1{#1 \evalinfname {}}
\def\coevalname{\stackrel{\mbox{\sf\scriptsize co}}{\Rightarrow}}
\def\coeval#1#2{#1 \coevalname #2}
\def\redonename{\rightarrow}
\def\redone#1#2{#1 \redonename #2}
\def\redname{\stackrel{*}{\rightarrow}}
\def\red#1#2{#1 \redname #2}
\def\redinfname{\stackrel{\infty}{\rightarrow}}
\def\redinf#1{#1 \redinfname {}}
\def\coredname{\stackrel{\mbox{\sf\scriptsize co}*}{\rightarrow}}
\def\cored#1#2{#1 \coredname #2}
\def\implies{\Longrightarrow}
\def\land{\mathrel{\wedge}}
\def\lor{\mathrel{\vee}}
\def\lnot{\neg}
\def\notredname{\not\redone}
\def\notred#1{#1 \notredname {}}
\def\saferedname{\stackrel{\mbox{\sf\scriptsize safe}}{\rightarrow}}
\def\safered#1{#1 \saferedname {}}
\def\evalerr#1{#1 \evalname {\sf err}}
\def\compile#1{[\![#1]\!]}      % denotational brackets
\def\redplusname{\stackrel{+}{\rightarrow}}
\def\redplus#1#2{#1 \redplusname #2}
\def\leftappheight#1{\| #1 \| }
\def\redinfNname#1{\mathop{\rightarrow}\limits^{\infty}_{#1}}
\def\redinfN#1#2{#2 \redinfNname{#1} {}}
\def\computename{{\cal C}}
\def\compute#1#2{\computename_{#1}(#2)}
\def\exec#1#2{{\cal D}(#1, #2)}
\def\error{\texttt{err}}
\def\defequal{\stackrel{\mbox{\scriptsize def}}{=}}
\def\bind{\mathrel{\rhd}}
\def\teval#1#2#3{#1 \evalname #3 \mathbin{/} #2}
\def\tevalinf#1#2{#1 \evalinfname {} \mathbin{/} #2}
\def\tred#1#2#3{#1 \redname #3 \mathbin{/} #2}
\def\tredinf#1#2{#1 \redinfname {} \mathbin{/} #2}
\def\bisim{\cong}
\def\redinfbisimname{\stackrel{\infty,\bisim}{\longrightarrow}}
\def\tredinfbisim#1#2{#1 \redinfbisimname {} \mathbin{/} #2}
\begin{document}

\begin{frontmatter}

\title{Coinductive big-step operational semantics}
\author[INRIA]{Xavier Leroy\corauthref{cor}}
\corauth[cor]{Corresponding author}
\ead{Xavier.Leroy@inria.fr}
\author[EMN]{Hervé Grall}
\address[INRIA]{INRIA Paris-Rocquencourt \\
                Domaine de Voluceau, B.P. 105,
                78153 Le Chesnay, France}
\address[EMN]{École des Mines de Nantes \\
              La Chantrerie, 4, rue Alfred Kastler, B.P. 20722,
              44307 Nantes, France}

\begin{abstract}
Using a call-by-value functional language as an example, this article
illustrates the use of coinductive definitions and proofs in big-step
operational semantics, enabling it to describe diverging evaluations
in addition to terminating evaluations.  We formalize the connections
between the coinductive big-step semantics and the standard small-step
semantics, proving that both semantics are equivalent.  We then study
the use of coinductive big-step semantics in proofs of type soundness
and proofs of semantic preservation for compilers.
A methodological originality of this paper is that all results
have been proved using the Coq proof assistant.  We explain the
proof-theoretic presentation of coinductive definitions and proofs
offered by Coq, and show that it facilitates the discovery and the
presentation of the results.
\end{abstract}

\begin{keyword}
Coinduction \sep
Operational semantics \sep
Big-step semantics \sep
Natural semantics \sep 
Small-step semantics \sep
Reduction semantics \sep
Type soundness \sep
Compiler correctness \sep
Mechanized proofs \sep
The Coq proof assistant
\end{keyword}

\end{frontmatter}

\section{Introduction}

There exist two widely-used styles of operational semantics:
{\em big-step semantics}, popularized by
Kahn \cite{Kahn-natural-semantics}
under the name {\em natural semantics}, relates programs to the final
results of their evaluations; {\em small-step semantics}, popularized
by Plotkin \cite{Plotkin-SOS-TR,Plotkin-SOS} under the name {\em
structural operational semantics}, repeatedly applies a one-step
reduction relation to form reduction sequences.
Small-step semantics is more expressive since it can describe the
evaluation of both terminating and non-terminating programs, as finite
or infinite reduction sequences, respectively.  In contrast, big-step
semantics describes only the evaluation of terminating programs, and
fails to distinguish between non-terminating programs and programs
that ``go wrong''.  For this reason, small-step semantics is generally
preferred, in particular for proving the soundness of type systems.

However, big-step semantics is more convenient than small-step
semantics for some applications.  One that is dear to our heart is
proving the correctness (preservation of program behaviours) of program
transformations, especially compilation of a high-level programming
language down to a lower-level language.  The first author's
experience and that of others
\cite{Leroy-compcert-06,Klein-Nipkow-jinja,Strecker-C0-05} is that
fairly complex, optimizing compilation passes can be proved correct
(for terminating source programs) relatively easily using
big-step semantics and inductions on the
structure of big-step evaluation derivations.  In contrast, compiler
correctness proofs using small-step semantics can address both
terminating and diverging source programs, but are more difficult
even for simple, non-optimizing compilation schemes
\cite{Hardin-Maranget-Pagano}.

In this article, we illustrate how coinductive definitions and
proofs enable big-step semantics to describe both terminating and diverging
evaluations.  The target of our study is a simple call-by-value
functional language.  We study two approaches: the first, initially
proposed by Cousot and Cousot \cite{Cousot-92}, complements the
normal inductive big-step evaluation rules for terminating evaluations with
coinductive big-step rules describing diverging evaluations; the
second simply interprets coinductively the normal big-step evaluation
rules, thus enabling them to describe both terminating and
non-terminating evaluations.  These semantics are defined in
sections~\ref{s-big-step}~and~\ref{s-coeval}, respectively.  

The main technical results of this article are of two kinds.  First,
we prove that the coinductive big-step definition of divergence is
equivalent to the more familiar definitions using either small-step
semantics (section~\ref{s-small-step}) or a simple form
of denotational semantics (section~\ref{s-den-sem}).  We also extend
these equivalence results to trace semantics (section~\ref{s-traces}).
Then, we study two applications of the big-step definition of
divergence: a novel approach to stating and proving the soundness of
type systems (section~\ref{s-type-soundness}), and proofs of semantic
preservation for compilation down to an abstract machine
(section~\ref{s-compiler-correctness}).

An originality of this article is that all results were not only proved
using a proof assistant (the Coq system), but even developed in
interaction with this tool, and only then transcribed to standard
mathematical notations.
The Coq proof assistant \cite{Coq,Bertot-Casteran-Coqart} provides
built-in support for coinductive definitions and proofs by
coinduction.  This support follows a proof-theoretic approach to
induction and coinduction that we present in section~\ref{s-methodology}
and relate with the standard approach using fixed points.  The
proof-theoretic approach leads to proofs by coinduction that are
simpler than the standard arguments based on $F$-consistent relations
\cite{Milner-Tofte-coinduction,Gapeyev-Levin-Pierce}.
Our use of Coq has therefore been
doubly beneficial: it facilitated the discovery and presentation of
the results in this article, while at the same time generating strong
confidence in them.

\section{Induction and coinduction: A proof-theoretic approach} 
\label{s-methodology}

Following the classical presentation of Aczel \cite{Aczel1977}, 
an inference system over a set $\univ$ of judgments 
is a set of inference rules.
An \emph{inference rule} is an ordered pair 
$(\prem,\ccl)$, where $\ccl\in \univ$ is the  
\emph{conclusion} of the rule and
$\prem\subseteq \univ$ is the set of its
\emph{premises} or \emph{antecedents}.
A rule is usually written as follows:
$$\irulebasic{
    ~~\prem~~
}{% --------
    \ccl
}$$
The intuitive interpretation of this rule is that the judgment $\ccl$
can be inferred from the set of judgments $\prem$.

\subsection{Fixed-point approach}
One way to give meaning to an inference system is to consider the
fixed points of the associated inference operator.  If $\sysInfer$ is
an inference system over $\univ$, we define the operator
$\opInfer : \ensPart{\univ} \rightarrow \ensPart{\univ}$ as
$$ \opInfer(\base) = \{ \ccl \in \univ \mid
    \exists \prem \subseteq \base,~ (\prem,\ccl) \in \sysInfer \}. $$
In other terms, $\opInfer(\base)$ is the set of judgments that can be
inferred in one step from the judgments in $\base$ by using the
inference rules.

A set $\base$ is said to be
\emph{closed} if $\opInfer(\base) \subseteq \base$,
and \emph{consistent} if $\base \subseteq \opInfer(\base)$.  A closed set
$\base$ is such that no new judgments can be inferred from $\base$.  A
consistent set~$\base$ is such that all judgments that cannot be inferred
from $\base$ are not in $\base$.  

The inference operator is monotone:
$\opInfer(\base) \subseteq \opInfer(\base')$ if $\base \subseteq \base'$.
By Tarski's fixed point theorem for complete lattices
\cite[p. 286]{Tarski1955}, it follows that the inference operator
possesses both a least fixed point and a greatest fixed point, which
are the smallest $\opInfer$-closed set and the largest
$\opInfer$-consistent set, respectively.
\begin{eqnarray*}
\lfp{\opInfer} & = &
    \bigcap ~ \{ \base \mid \opInfer(\base) \subseteq \base \} \\
\gfp{\opInfer} & = &
    \bigcup ~ \{ \base \mid \base \subseteq \opInfer(\base) \}
\end{eqnarray*}
The least fixed point $\lfp{\opInfer}$ is the \emph{inductive interpretation}
of the inference system $\sysInfer$, and the greatest fixed point
$\gfp{\opInfer}$ is its \emph{coinductive interpretation}.
These interpretations lead to the following two proof principles:
\begin{itemize}
\item Induction principle: to prove that all judgments in the
  inductive interpretation belong to a set $\base$, show that
  $\base$ is $\opInfer$-closed.
\item Coinduction principle: to prove that all judgments in a set $\base$
  belong to the coinductive interpretation, show that $\base$
  is $\opInfer$-consistent.
\end{itemize}

\subsection{Proof-theoretic approach} \label{s-proof-approach}
In contrast with the fixed point approach,
the proof-theoretic approach starts from the \emph{proofs}
admissible in an inference system.  These proofs naturally correspond
to \emph{derivations}, also called \emph{proof trees}.  These are
trees whose nodes are labeled with judgments $\ccl \in \univ$ and such
that for all nodes~$n$, the label~$\ccl$ of~$n$ and the labels~$\prem$
of the children of~$n$ correspond to an inference rule: $(\prem, \ccl) \in
\sysInfer$.  The conclusion of a derivation is the label of its root node.

A derivation $d$ is \emph{well-founded} if it has no infinite branch;
$d$ is \emph{ill-founded} otherwise.  If every rule in $\sysInfer$ has a
finite set of premises, well-founded derivations are finite while
ill-founded derivations are infinite.  

In the proof-theoretic approach, the \emph{inductive interpretation}
of the inference system $\sysInfer$ is the set $\ensInd{\sysInfer}$
of conclusions of well-founded derivations, while the
\emph{coinductive interpretation} is the set $\ensCoInd{\sysInfer}$
of conclusions of arbitrary derivations (ill-founded or well-founded).
These interpretations come with the following proof principles:
\begin{itemize}
\item Induction principle: to prove that all judgments in the
  inductive interpretation belong to a set $\base$, proceed by
  structural induction over well-founded derivations.  That is,
  show that $\ccl \in \base$ if $\ccl$ is the conclusion of a
  derivation $d$, assuming that $j \in \base$ for all
  conclusions $j$ of the strict subderivations of $d$.
\item Coinduction principle: to prove that all judgments in a set $\base$
  are in the coinductive interpretation, build a system
  of recursive equations between derivations, with unknowns $(x_j)_{j \in \base}$.
  Each equation is of the form
$$
    x_j = \irulebasic{
             x_{j_1} \and x_{j_2} \and \ldots
}{%          --------------------------
             j
          }
$$
and must be justified by an inference rule:
$(\{j_1, j_2, \ldots\}, j) \in \sysInfer$.  These equations are
\emph{guarded}, meaning that there are no trivial equations $x_j = x_{j'}$.
It follows that the system has a unique solution \cite{Courcelle1979a},
and this solution $\sigma$ is such that for all $j \in \base$,
$\sigma(x_j)$ is a valid derivation that proves~$j$.  Therefore,
all $j \in \base$ are also in $\ensCoInd{\sysInfer}$.
\end{itemize}

\subsection{Equivalence between the two approaches}
The following theorem shows that the interpretations defined using
fixed points and using derivations coincide.

\begin{theorem}
For all inference systems $\sysInfer$,
$\lfp{\opInfer} = \ensInd{\sysInfer}$ and
$\gfp{\opInfer} = \ensCoInd{\sysInfer}$.
\end{theorem}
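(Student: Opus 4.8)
The plan is to establish the two equalities separately, proving each by a pair of opposite inclusions, and in each case pairing a derivation-building argument with the extremal characterization of the relevant fixed point: $\lfp{\opInfer}$ as the smallest $\opInfer$-closed set and $\gfp{\opInfer}$ as the largest $\opInfer$-consistent set. The two halves are dual, but the closure/consistency arguments and the derivation constructions proceed in opposite directions, so it is cleanest to treat them one at a time.

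For the inductive equality $\lfp{\opInfer} = \ensInd{\sysInfer}$, I would first prove $\ensInd{\sysInfer} \subseteq \lfp{\opInfer}$ by structural induction over well-founded derivations (the induction principle of Section~\ref{s-proof-approach}): if $\ccl$ is the conclusion of a well-founded derivation $d$, then the conclusions $j_1, j_2, \ldots$ of its immediate subderivations are, by the induction hypothesis, all members of $\lfp{\opInfer}$; since these premises together with $\ccl$ instantiate a rule of $\sysInfer$, the definition of $\opInfer$ gives $\ccl \in \opInfer(\lfp{\opInfer}) = \lfp{\opInfer}$. For the reverse inclusion I would use minimality of $\lfp{\opInfer}$ among closed sets, so it suffices to check that $\ensInd{\sysInfer}$ is $\opInfer$-closed. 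Given $\ccl \in \opInfer(\ensInd{\sysInfer})$, a rule $(\prem, \ccl)$ with $\prem \subseteq \ensInd{\sysInfer}$ supplies, for each premise, a well-founded derivation; grafting these under a fresh root labeled $\ccl$ yields a derivation of $\ccl$ that is again well-founded, since any branch either stops at the root or descends into one of the well-founded subderivations and is therefore finite. Hence $\ccl \in \ensInd{\sysInfer}$, and $\lfp{\opInfer} \subseteq \ensInd{\sysInfer}$.

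The coinductive equality $\gfp{\opInfer} = \ensCoInd{\sysInfer}$ follows the dual pattern. For $\ensCoInd{\sysInfer} \subseteq \gfp{\opInfer}$ I would show that $\ensCoInd{\sysInfer}$ is $\opInfer$-consistent and invoke maximality of $\gfp{\opInfer}$: any $\ccl \in \ensCoInd{\sysInfer}$ is the conclusion of some (possibly ill-founded) derivation whose root instantiates a rule $(\prem, \ccl)$, and each element of $\prem$ is the conclusion of an immediate subderivation and so again lies in $\ensCoInd{\sysInfer}$; thus $\prem \subseteq \ensCoInd{\sysInfer}$ and $\ccl \in \opInfer(\ensCoInd{\sysInfer})$. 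The remaining inclusion $\gfp{\opInfer} \subseteq \ensCoInd{\sysInfer}$ is where the real work lies, and I expect it to be the main obstacle, because it demands producing an actual, in general infinite, derivation for every element of the greatest fixed point rather than merely reasoning about a given one. Here I would exploit consistency of $\gfp{\opInfer}$, namely $\gfp{\opInfer} \subseteq \opInfer(\gfp{\opInfer})$: for each $j \in \gfp{\opInfer}$ this yields a rule $(\prem_j, j)$ with $\prem_j \subseteq \gfp{\opInfer}$, and I would select one such rule per $j$. These selections are exactly a guarded system of recursive derivation equations in the sense of the coinduction principle of Section~\ref{s-proof-approach}, with unknowns $(x_j)_{j \in \gfp{\opInfer}}$; its unique solution \cite{Courcelle1979a} assigns to each $j$ a valid derivation concluding $j$, placing every $j \in \gfp{\opInfer}$ in $\ensCoInd{\sysInfer}$ and completing the proof.
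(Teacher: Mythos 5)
Your proposal is correct and follows essentially the same route as the paper: both establish $\lfp{\opInfer} \subseteq \ensInd{\sysInfer}$ and $\ensCoInd{\sysInfer} \subseteq \gfp{\opInfer}$ via closedness of $\ensInd{\sysInfer}$ and consistency of $\ensCoInd{\sysInfer}$, prove $\ensInd{\sysInfer} \subseteq \lfp{\opInfer}$ by structural induction over well-founded derivations, and prove $\gfp{\opInfer} \subseteq \ensCoInd{\sysInfer}$ by selecting a rule for each judgment of a consistent set and solving the resulting guarded system of recursive derivation equations (Courcelle). The only difference is presentational: the paper runs the two hard inclusions for an arbitrary closed (resp.\ consistent) set $\base$ and then instantiates $\base$ with the fixed point, whereas you work with $\lfp{\opInfer}$ and $\gfp{\opInfer}$ directly, filling in the details the paper leaves as ``easy to show.''
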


\begin{proof} 
It is easy to show that $\ensInd{\sysInfer}$ is $\opInfer$-closed
and that $\ensCoInd{\sysInfer}$ is $\opInfer$-consistent.
Therefore, $\lfp{\opInfer} \subseteq \ensInd{\sysInfer}$
and $\ensCoInd{\sysInfer} \subseteq \gfp{\opInfer}$.

Consider a $\opInfer$-closed set $\base$.  A structural induction over
well-founded derivations $d$ shows that the conclusion of $d$ is in
$\base$.  Therefore, $\ensInd{\sysInfer} \subseteq S$.
Since $\lfp{\opInfer}$ is $\opInfer$-closed, the inclusion
$\ensInd{\sysInfer} \subseteq \lfp{\opInfer}$ follows.

Finally, consider a $\opInfer$-consistent set $\base$.  For any judgment
$j$ in $\base$, there exists a rule $( K_j, j)$ in $\sysInfer$,
where $K_j \subseteq \base$.  
We define a system of guarded recursive equations,
with variables $(x_j)_{j\in \base}$.
$$
    x_j = \irulebasic{
             (x_k)_{k \in K_j}
}{%          --------------------------
             j
          }
$$
The solution $\sigma$ of this system is such that for all $j \in \base$,
the derivation $\sigma(x_j)$ is valid in $\sysInfer$ and
proves $j$.  Therefore, $\base \subseteq \ensCoInd{\sysInfer}$.
Since $\gfp{\opInfer}$ is $\opInfer$-consistent, the inclusion
$\gfp{\opInfer} \subseteq \ensCoInd{\sysInfer}$ follows.
\end{proof}

The equality $\lfp{\opInfer} = \ensInd{\sysInfer}$ is proved by 
Aczel \cite{Aczel1977}.  The equality
$\gfp{\opInfer} = \ensCoInd{\sysInfer}$ is proved in
the second author's PhD dissertation \cite[p.~77]{Grall-phd},
but to our knowledge there is no other published proof.
This is, however, a well-known result.  For instance, it has recently
been used to extend logic programming with coinductive terms and
derivations \cite{Simon-Mallya-Bansal-Gupta-06}.

\subsection{Induction and coinduction in the Coq proof assistant}
The Coq proof assistant that we use to
develop the present work follows the proof-theoretic formulation
of induction and coinduction.  In accordance with the
propositions-as-types, proofs-as-programs paradigm, inference systems
are presented as inductively or coinductively-defined predicates,
resembling data type definitions in ML or Haskell.  Such a predicate
is defined by a set of constructors, corresponding to 
inference rules.  Applied to terms representing proofs for its
premises, a constructor returns a proof term for its conclusion.

Proofs by induction and by coinduction are both represented as
recursive functions.  For a proof by induction, the Coq type system
demands that the recursive function
be \emph{structural}: the arguments to recursive calls are strict
subterms of the recursive parameter.  For a proof by coinduction, the
Coq type system demands that the recursive function be \emph{productive}:
its result is a constructor application, and the results of recursive
calls are only used as arguments to this constructor.  Such
productive recursive functions correspond closely to the systems of guarded
equations used above.

While proof terms can be provided explicitly by the user, most of the
time they are built incrementally by the Coq proof assistant in
response to tactics entered by the user.  When using tactics, proofs
by coinduction are as easy to conduct as proofs by induction: in
response to the {\tt cofix} tactic, the system provides the expected
result as an additional hypothesis, then makes sure that this
hypothesis is only used in positions permitted by productive recursive
functions.  (See \cite{Gimenez-coind-94} and
\cite[chap. 13]{Bertot-Casteran-Coqart} for more details, and the
proof of lemma~\ref{evalinf-omega} below for a concrete example.)
The proof sketches we give in the remainder of this article are
written in the same proof style, and play fast and loose with
coinduction.  In particular, except for the very first proofs, we do
not exhibit $\opInfer$-consistent sets nor systems of guarded
equations between derivations.  The skeptical reader is referred to
the corresponding Coq development \cite{Leroy-Grall-coindsem-Coq} for full
details.

Coq is based on a constructive logic (the Calculus of Constructions),
but proofs in classical logic can be expressed in Coq by adding
axioms that are known to be consistent with Coq's logic.
The majority of our proofs are constructive, but some use the axiom
of excluded middle.  The proofs that use this axiom are marked 
``{\bf (classical)}''.

\section{The language and its big-step semantics} \label{s-big-step}

The language we consider in this article is the $\lambda$-calculus
extended with constants: the simplest functional language
that exhibits run-time errors (terms that ``go wrong'').  Its syntax
is as follows:
\begin{syntaxleft}
\syntaxclass{Variables:} &
x, y, z, \ldots
\\
\syntaxclass{Constants:} &
c & ::= & 0 \alt 1 \alt \ldots
\\
\syntaxclass{Terms:} &
a, b, v & ::= & x \alt c \alt \lambda x. a \alt a~b
\end{syntaxleft}
\iffalse
\begin{syntaxleft}
\syntaxclass{Variables:}
x, y, z, \ldots
\syntaxclass{Constants:}
c & ::= & 0 \alt 1 \alt \ldots
\syntaxclass{Terms:}
a, b, v & ::= & x \alt c \alt \lambda x. a \alt a~b
\end{syntaxleft}%
\fi
We write $a[x \becomes b]$ for the capture-avoiding substitution\footnote{%
The Coq development does not treat terms modulo $\alpha$-conversion,
therefore the substitution $a[x \becomes b]$ can capture variables.
However, it is capture-avoiding if $b$ is closed, and this
suffices to define evaluation and reduction of closed source terms.}
of $b$ for all free occurrences of $x$ in $a$.  We say that a term $v$ is
a value, and write $\isvalue v$, if $v$ is either a constant $c$ or an
abstraction $\lambda x. b$.

The standard call-by-value semantics in big-step style for this
language is defined by the inductive interpretation of the following
inference rules.  They define the relation $\eval a v$ (read: ``$a$
evaluates to $v$'').
\begin{ruleset}
\srulenumber{$\evalname$-const}{  \eval c c   }
\srulenumber{$\evalname$-fun}{    \eval {\lambda x.a} {\lambda x.a}  }
\irulenumber{$\evalname$-app}{
    \eval {a_1} {\lambda x.b} \and
    \eval {a_2} {v_2} \and
    \eval {b[x \becomes v_2]} v
}{% ----------------------------------------
    \eval {a_1~a_2} v
}
\end{ruleset}

\begin{lemma} \label{eval-isvalue}
If $\eval a v$, then $\isvalue v$.
\end{lemma}

\begin{constrproof} Induction on a derivation of $\eval a v$.
\end{constrproof}

\begin{lemma} \label{eval-deterministic}
The $\evalname$ relation is deterministic: 
if $\eval a v$ and $\eval a {v'}$, then $v = v'$.
\end{lemma}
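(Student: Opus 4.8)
The plan is to argue by induction on the derivation of $\eval a v$, keeping the statement quantified over the second value $v'$, so that the induction hypothesis reads: for every immediate subderivation $\eval {a_i} {v_i}$ and every $w$, if $\eval {a_i} w$ then $v_i = w$. In each case I inspect the last rule of the given derivation of $\eval a v$; since that rule is forced by the syntactic shape of $a$, the companion derivation of $\eval a {v'}$ must end with the same rule, and I recover its premises by inversion.

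The two non-recursive cases are immediate. If the derivation ends with $\evalname$-const, then $a$ is a constant $c$ and $v = c$; the only rule that can evaluate a constant is $\evalname$-const, so $v' = c = v$. If it ends with $\evalname$-fun, then $a$ is an abstraction $\lambda x. a_0$ and $v = \lambda x. a_0 = v'$ by the same reasoning.

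The interesting case is $\evalname$-app, where $a = a_1~a_2$ with premises $\eval {a_1} {\lambda x.b}$, $\eval {a_2} {v_2}$, and $\eval {b[x \becomes v_2]} v$. An application can only be evaluated by $\evalname$-app, so the derivation of $\eval {a_1~a_2} {v'}$ has premises $\eval {a_1} {\lambda x'.b'}$, $\eval {a_2} {v_2'}$, and $\eval {b'[x' \becomes v_2']} {v'}$. Applying the induction hypothesis to the first premise gives $\lambda x.b = \lambda x'.b'$, hence $x = x'$ and $b = b'$; applying it to the second gives $v_2 = v_2'$. Consequently $b[x \becomes v_2]$ and $b'[x' \becomes v_2']$ are syntactically identical, so the two third premises evaluate the very same term, and a final use of the induction hypothesis yields $v = v'$.

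The only real subtlety is arranging the induction hypothesis to be strong enough: it must quantify over the companion value $v'$ before the induction, and in the application case it must be used on the function subterm to force $\lambda x. b$ and $\lambda x'. b'$ to coincide. Without this, the two third premises need not concern the same argument term and the chaining would break. Everything else reduces to routine inversion on the shape of $a$.
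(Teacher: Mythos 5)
Your proof is correct and follows exactly the paper's own argument: induction on the derivation of $\eval a v$ with case analysis (inversion) on the derivation of $\eval a {v'}$, the paper merely stating this in one line while you spell out the details. The strengthened induction hypothesis quantifying over $v'$ and the chaining through the application premises are precisely what the paper's sketch leaves implicit.
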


\begin{constrproof} By induction on the derivation of $\eval a v$
and case analysis over that of $\eval a {v'}$.
\end{constrproof}

The rules above capture only terminating evaluations.  Writing $\delta
= \lambda x.~x~x$ and $\omega = \delta~\delta$, we have for instance:

\begin{lemma} $\eval \omega v$ is false for all terms $v$. \end{lemma}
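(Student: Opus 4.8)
The plan is to show that no well-founded derivation can conclude $\eval \omega v$, arguing by structural induction on derivations. First I would exploit the shape of $\omega = \delta~\delta$: since it is an application, the only inference rule whose conclusion matches $\eval \omega v$ is rule $\evalname$-app. Inverting that rule, any derivation of $\eval \omega v$ must contain three subderivations, of $\eval \delta {\lambda x.b}$, of $\eval \delta {v_2}$, and of $\eval {b[x \becomes v_2]} v$, for some $b$ and $v_2$.

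Next I would invert the two premises concerning $\delta$. Because $\delta = \lambda x.~x~x$ is an abstraction, the only rule concluding $\eval \delta w$ is $\evalname$-fun, which forces $w = \delta$. Hence $\lambda x. b = \delta$ (so $b = x~x$) and $v_2 = \delta$. Substituting into the third premise gives $\eval {(x~x)[x \becomes \delta]} v$, which is exactly $\eval {\delta~\delta} v = \eval \omega v$. Thus every derivation of $\eval \omega v$ would have, as a strict subderivation, a derivation with the very same conclusion $\eval \omega v$ (and the same $v$). By the induction hypothesis applied to this strict subderivation, no such derivation can exist, a contradiction; therefore $\eval \omega v$ is underivable for all $v$.

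The step that deserves the most care is the rule inversion: being precise that $\evalname$-app and $\evalname$-fun are the only applicable rules, and that the substitution really collapses the third premise back to $\eval \omega v$ rather than to some superficially different judgment. The conceptual crux, though, is that the argument relies essentially on the \emph{inductive} reading of the rules: it is exactly the well-foundedness of derivations that makes a strictly smaller subderivation with an identical conclusion impossible. Under the coinductive reading of the same rules (studied later in the paper) this self-referential derivation would be perfectly admissible, reflecting the fact that $\omega$ diverges rather than goes wrong. Equivalently, one could package the induction as assigning to each derivation its height and observing that the exhibited subderivation would have strictly smaller height yet the same conclusion, which is absurd.
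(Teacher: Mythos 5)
Your proof is correct and takes essentially the same route as the paper, which proves by induction on the derivation of $\eval a v$ that $a \not= \omega$; that is your argument with the statement generalized over $a$ so the induction hypothesis applies in the standard way. The key ingredients are identical in both: inversion of ($\evalname$-app) and ($\evalname$-fun), the computation $(x~x)[x \becomes \delta] = \omega$, and the observation that any derivation of $\eval \omega v$ would contain a strict subderivation of the very same judgment, which well-foundedness forbids.
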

\begin{constrproof} We show that $\eval a v$ implies $a \not= \omega$
by induction on the derivation of $\eval a v$. \end{constrproof}

Following Cousot and Cousot \cite{Cousot-92} and
the second author's PhD work
\cite{Grall-phd}, we define divergence (infinite evaluations) by the
coinductive interpretation\footnote{%
Throughout this article, double horizontal lines in inference rules
denote inference rules that are to be interpreted coinductively;
single horizontal lines denote the inductive interpretation.}
of the following inference rules.  They define the relation
$\evalinf a$ (read: ``$a$ diverges'').

\begin{ruleset}
\irulenumberdouble{$\evalinfname$-app-l}{
         \evalinf {a_1}
}{%      ---------------------
         \evalinf {a_1~a_2}
}
\irulenumberdouble{$\evalinfname$-app-r}{
         \eval {a_1} v \and \evalinf {a_2}
}{%      ------------------------------
         \evalinf {a_1~a_2}
}
\irulenumberdouble{$\evalinfname$-app-f}{
         \eval {a_1} {\lambda x.b} \and \eval {a_2} v \and 
         \evalinf {b[x \becomes v]}
}{%      -------------------------------------------
         \evalinf {a_1~a_2}
}
\end{ruleset}
Note that we have imposed (arbitrarily) a left-to-right evaluation
order for applications.

\begin{lemma} \label{evalinf-omega}
$\evalinf \omega$ holds.
\end{lemma}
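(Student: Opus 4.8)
The plan is to exhibit an infinite (ill-founded) derivation whose conclusion is $\evalinf{\omega}$, relying on the coinductive interpretation of the divergence rules rather than trying to build a finite proof tree (none can exist, consistently with the preceding lemma that $\eval{\omega}{v}$ never holds). Since $\omega = \delta~\delta$ is an application and $\delta = \lambda x.~x~x$ is an abstraction, the rules $\evalinfname$-app-l and $\evalinfname$-app-r are unavailable at the root (they would demand $\evalinf{\delta}$, which cannot hold for a value), so the only applicable rule is $\evalinfname$-app-f.

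First I would discharge the two terminating premises of $\evalinfname$-app-f: taking $a_1 = a_2 = \delta$, both $\eval{\delta}{\lambda x.~x~x}$ and $\eval{\delta}{\delta}$ hold by a single application of $\evalname$-fun. The crux is the third, coinductive premise. Instantiating the rule with $b = x~x$ and $v = \delta$, the required judgment is $\evalinf{(x~x)[x \becomes \delta]}$, and the key computation $(x~x)[x \becomes \delta] = \delta~\delta = \omega$ shows that this premise is again $\evalinf{\omega}$. The derivation therefore refers back to itself.

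Following the proof-theoretic coinduction principle of Section~\ref{s-proof-approach}, I would package this self-reference as the single guarded equation
$$
   x_\omega = \irulebasic{\eval{\delta}{\delta} \and \eval{\delta}{\delta} \and x_\omega}{\evalinf{\omega}}
$$
in which the recursive unknown $x_\omega$ occurs strictly beneath an application of $\evalinfname$-app-f; the equation is thus guarded and has a unique solution, which is a valid derivation proving $\evalinf{\omega}$. Equivalently, one checks that the singleton set $\{\omega\}$ is $\opInfer$-consistent for the divergence operator. The only genuine obstacle is conceptual, namely recognizing that coinduction (a \texttt{cofix} in the Coq development) is needed to close the loop; the substitution computation that makes the loop close is routine.
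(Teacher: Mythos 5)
Your proof is correct and matches the paper's own argument: both close the loop with rule ($\evalinfname$-app-f), using $\eval{\delta}{\delta}$ for the two evaluation premises and the computation $(x~x)[x \becomes \delta] = \omega$ to make the coinductive premise self-referential. The paper likewise justifies this via a guarded recursive equation on derivations and, equivalently, by exhibiting $\{\omega\}$ as an $F$-consistent set, exactly as you do.
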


\begin{constrproof} 
The proof is by coinduction.  Assume $\evalinf \omega$ as
coinduction hypothesis.  We can derive $\evalinf \omega$ with rule
($\evalinfname$-app-f), using the coinduction hypothesis as third
premise.

Since this is the first proof by coinduction in this article, we now
detail the proof sketch given above using the various approaches
outlined in section~\ref{s-methodology}.

{\it Greatest fixed point.}
Consider the inference operator $F$
associated with the rules defining $\evalinfname$, namely
$$ F(S) = \begin{array}[t]{ll}
     & \{ a_1~a_2 \mid a_1 \in S \} \\
\cup & \{ a_1~a_2 \mid \exists v,~ \eval {a_1} v \wedge a_2 \in S \} \\
\cup & \{ a_1~a_2 \mid \exists x,b,v,~
            \eval {a_1} {\lambda x.b} \wedge
            \eval {a_2} v \wedge
            b[x \becomes v] \in S \}
\end{array}$$
The set $S = \{ \omega \}$ is $F$-consistent.  Indeed, $\omega \in
F(\{\omega\})$ by the third line of the definition of $F$.  Therefore,
$S \subseteq \gfp{F}$, implying that $\evalinf \omega$ holds.

{\it Systems of guarded recursive equations.}
Consider the following equation with unknown $d$ (a derivation):
$$
    d = \irulebasic{
             \eval \delta {\lambda x.~x~x} \and
             \eval \delta \delta \and
             d
}{%      --------------------------
             \evalinf {\delta~\delta}
        }
$$
Since $(x~x)[x \becomes \delta] = \delta~\delta$, this equation is
justified by rule ($\evalinfname$-app-f).  Moreover, it is guarded.
Therefore, its solution is a valid derivation that proves
$\evalinf {\delta~\delta}$.  It follows that this judgment holds.

{\it Coq proof term.}
Consider the Coq proof term
{\tt evalinf{\char95}omega} defined by the following corecursion:
\begin{verbatim}
CoFixpoint evalinf_omega : evalinf omega :=
  let eval_delta : eval delta delta :=
    eval_fun x (App (Var x) (Var x)) in 
  evalinf_app_f delta delta x (App (Var x) (Var x)) delta
    eval_delta
    eval_delta
    evalinf_omega.
\end{verbatim}
The two constructor functions {\tt eval{\char95}fun} and {\tt evalinf{\char95}app{\char95}f}
correspond to the inference rules ($\evalname$-fun) and
($\evalinfname$-app-f), respectively.  They receive as arguments
instantiations for the free variables of the rules ($x$ and $a$ for 
($\evalname$-fun); $a_1$, $a_2$, $x$, $b$, $v$ for
($\evalinfname$-app-f)),
followed by proof terms for their premises 
(proofs of $\eval \delta \delta$, $\eval \delta \delta$ and
$\evalinf \omega$ for ($\evalinfname$-app-f)).
The term {\tt evalinf{\char95}omega} has type {\tt evalinf\ omega}, which proves that this
proposition representing $\evalinf \omega$ is true.

{\it Coq proof script.} The following commented sequence of
tactics builds the proof term above in an interactive manner.
\def\comment#1{\hspace*{2cm}{\rm\it#1}}
\begin{alltt}
Lemma evalinf_omega: evalinf omega.
Proof.
  cofix COINDHYP.
\comment{Prepare a proof by coinduction.  The current goal \(\evalinf\omega\)}
\comment{becomes an hypothesis named {\tt{COINDHYP}}}
  unfold omega. eapply evalinf_app_f.
\comment{Apply the constructor for rule \(\evalinfname\)-app-f}
  unfold delta. apply eval_fun. 
\comment{Prove the first premise (evaluation of \(\delta\))}
  unfold delta. apply eval_fun.
\comment{Prove the second premise (evaluation of \(\delta\))}
  simpl. fold delta. fold omega.
\comment{Replace \((x x)[x\becomes\delta]\) by \(\omega\).}
  apply COINDHYP.
\comment{Prove the third premise by invoking the coinduction hypothesis.}
Qed.
\end{alltt}
\end{constrproof}

\begin{lemma} $\eval a v$ and $\evalinf a$ are mutually exclusive.
\end{lemma}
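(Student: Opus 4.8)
The plan is to prove the equivalent implication $\eval a v \implies \lnot\,\evalinf a$ by structural induction on the (necessarily well-founded) derivation of $\eval a v$. The key observation is that although $\evalinfname$ is defined coinductively, refuting $\evalinf a$ requires only a single inversion at the root of the divergence judgment: since $\gfp{\opInfer}$ is a fixed point, if $\evalinf a$ held then $a$ would be the conclusion of one of the three rules ($\evalinfname$-app-l), ($\evalinfname$-app-r), ($\evalinfname$-app-f) applied to premises that themselves lie in the coinductive interpretation. Thus no genuine coinduction is needed; ordinary induction on the finite evaluation derivation suffices, and the divergence hypothesis is only ever unfolded one level deep.

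In the two base cases, $a$ is either a constant (rule ($\evalname$-const)) or an abstraction (rule ($\evalname$-fun)). In both cases $a$ is a value, hence not an application, so none of the three divergence rules --- all of which have an application as conclusion --- can justify $\evalinf a$. Inversion therefore refutes $\evalinf a$ outright.

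The only real work is the application case $a = a_1~a_2$, where the premises give $\eval {a_1} {\lambda x.b}$, $\eval {a_2} {v_2}$ and $\eval {b[x \becomes v_2]} v$, together with three induction hypotheses stating $\lnot\,\evalinf {a_1}$, $\lnot\,\evalinf {a_2}$ and $\lnot\,\evalinf {b[x \becomes v_2]}$. Assuming $\evalinf {a_1~a_2}$, I would invert it and treat the three cases: rule ($\evalinfname$-app-l) yields $\evalinf {a_1}$, contradicting the first hypothesis; rule ($\evalinfname$-app-r) yields $\eval {a_1} {v'}$ and $\evalinf {a_2}$, contradicting the second; and rule ($\evalinfname$-app-f) yields $\eval {a_1} {\lambda x'.b'}$, $\eval {a_2} {v'}$ and $\evalinf {b'[x' \becomes v']}$, contradicting the third. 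The last two cases additionally require determinism of evaluation (Lemma~\ref{eval-deterministic}) to identify the value produced for $a_1$ with $\lambda x.b$ and the value produced for $a_2$ with $v_2$, so that the divergence premise lines up exactly with the subterm named in the corresponding induction hypothesis.

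I expect the main subtlety to be methodological rather than technical: one must resist the temptation to argue by coinduction on $\evalinfname$, and instead recognize that the finiteness of the $\eval{}{}$ derivation is what drives the argument. Once the induction is set up on the right judgment, each case closes immediately by inversion, with Lemma~\ref{eval-deterministic} supplying the value equalities needed to match the divergence premises against the induction hypotheses.
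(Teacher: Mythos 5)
Your proof is correct and takes essentially the same route as the paper, whose proof is precisely ``induction on the derivation of $\eval a v$, case analysis on that of $\evalinf a$, and lemma~\ref{eval-deterministic}.'' One minor refinement: in the ($\evalinfname$-app-r) case determinism is not actually needed, since the premise $\evalinf {a_2}$ contradicts the induction hypothesis for $a_2$ directly; only the ($\evalinfname$-app-f) case requires lemma~\ref{eval-deterministic} to make the substituted bodies coincide.
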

\begin{constrproof} By induction on the derivation of $\eval a v$,
case analysis on that of $\evalinf a$, and lemma~\ref{eval-deterministic}.
\end{constrproof}

Programs that neither evaluate nor diverge according to the rules
above are said to ``go wrong''.  For instance, the program $0~0$ goes
wrong since neither $\eval {0~0} v$ nor $\evalinf {0~0}$ hold for any $v$.

\section{Relation with small-step semantics} \label{s-small-step}

The one-step reduction relation $\redonename$ is defined by the
call-by-value $\beta$-reduction axiom plus two context rules for
reducing under applications, assuming left-to-right evaluation order.

\begin{ruleset}
\irulenumber{$\redonename$-$\beta$}{
   \isvalue v
}{%-------------------
   \redone{(\lambda x.a)~v} {a[x \becomes v]}
}
\\
\irulenumber{$\redonename$-app-l}{
   \redone {a_1} {a_2}
}{%--------------------------------------------------
   \redone {a_1~b} {a_2~b}
}
\irulenumber{$\redonename$-app-r}{
   \isvalue a \and \redone {b_1} {b_2}
}{%--------------------------------------------------
   \redone {a~b_1} {a~b_2}
}
\end{ruleset}

\begin{lemma} \label{red1-determ}
The $\redonename$ relation is deterministic: 
if $\redone a {a'}$ and $\redone a {a''}$, then $a' = a''$.
\end{lemma}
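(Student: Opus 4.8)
The plan is to proceed by induction on the derivation of $\redone{a}{a'}$, performing a case analysis on the last rule used in the derivation of $\redone{a}{a''}$. Since the three reduction rules overlap only in ways constrained by values, the whole argument turns on a single auxiliary observation, which I would establish first: \emph{values are irreducible}, i.e.\ if $\isvalue{v}$ then there is no $v'$ with $\redone{v}{v'}$. This follows immediately by inspection of the three rules: each has an application $a~b$ as its redex, whereas a value is either a constant $c$ or an abstraction $\lambda x.b$, so no rule can fire on a value.

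With irreducibility of values in hand, I would treat the three cases for the first derivation. If it ends with ($\redonename$-$\beta$), then $a = (\lambda x.a_0)~v$ with $\isvalue{v}$ and $a' = a_0[x \becomes v]$; for the second derivation, rule ($\redonename$-app-l) is excluded because it would require the abstraction $\lambda x.a_0$ to reduce, and ($\redonename$-app-r) is excluded because it would require the value $v$ to reduce, both contradicting irreducibility. Hence the second derivation must also end with ($\redonename$-$\beta$), forcing $a'' = a_0[x \becomes v] = a'$.

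If the first derivation ends with ($\redonename$-app-l), then $a = a_1~b$ with $\redone{a_1}{a_2}$ and $a' = a_2~b$. The second derivation cannot use ($\redonename$-$\beta$) (that would make $a_1$ an abstraction, hence irreducible, contradicting $\redone{a_1}{a_2}$) nor ($\redonename$-app-r) (that would make $a_1$ a value, again contradicting reducibility of $a_1$); so it too ends with ($\redonename$-app-l), yielding $\redone{a_1}{a_2'}$ and $a'' = a_2'~b$. The induction hypothesis applied to $\redone{a_1}{a_2}$ gives $a_2 = a_2'$, whence $a' = a''$. The case ($\redonename$-app-r) for the first derivation is symmetric, again using irreducibility of values to discard the ($\redonename$-$\beta$) and ($\redonename$-app-l) subcases before appealing to the induction hypothesis on the reduction of the argument.

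The proof is routine; the only genuine content is the systematic elimination of the overlapping subcases, and this is exactly where the fixed left-to-right evaluation order pays off. The one point deserving care — which I expect to be the main, albeit minor, obstacle — is to make the irreducibility-of-values argument precise enough that each exclusion is justified rather than merely asserted, since it is this fact alone that prevents two distinct rules from applying to the same redex.
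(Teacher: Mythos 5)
Your proposal is correct and follows essentially the same route as the paper, which proves this lemma by induction on the derivation of $\redone a {a'}$ with case analysis over that of $\redone a {a''}$. The only difference is that you make explicit the auxiliary fact (irreducibility of values) that the paper's one-line proof sketch leaves implicit; this is exactly the right detail to isolate.
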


\begin{constrproof} By induction on the derivation of $\redone a {a'}$
and case analysis over that of $\redone a {a''}$.
\end{constrproof}

There are three kinds of reduction sequences of interest.
The first, written $\red a b$ (``$a$ reduces to $b$ in zero, one or
several steps''), is the standard reflexive transitive closure of
$\redonename$; it captures finite reductions.  The second, written $\redinf a$
(``$a$ reduces infinitely''), captures infinite reductions.  The third,
written $\cored a b$ (``$a$ reduces to $b$ in zero, one, several or infinitely
many steps''), is the coinductive interpretation of the rules for
reflexive transitive closure; it captures both finite and infinite
reductions.  These relations are defined by the following rules,
interpreted inductively for $\redname$ and coinductively for
$\redinfname$ and $\coredname$.
\begin{ruleset}
\srule{\red a a }
\hspace*{4.5cm}
\srule{\cored a a }
\\
\irule{
       \redone a {a'} \and \red {a'} {b}
}{%    ------------------------------
       \red a {b}
}
\iruledouble{
       \redone a {a'} \and \redinf {a'}
}{%    ------------------------
       \redinf a
}
\iruledouble{
       \redone a {a'} \and \cored {a'} {b}
}{%    ------------------------
       \cored a {b}
}
\end{ruleset}

It is true that $\coredname$ is the union of $\redname$ and
$\redinfname$, in the following sense.

\begin{lemma} \label{cored-red-or-redinf}
 $\cored a b$ if and only if $\red a b$ or $\redinf a$.
\end{lemma}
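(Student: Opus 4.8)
The plan is to prove the two implications separately, with the direction from $\cored a b$ being the substantial one.

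For the ``if'' direction I would establish $\red a b \implies \cored a b$ and $\redinf a \implies \cored a b$ independently. The first is immediate: the rules defining $\coredname$ are exactly those defining $\redname$ (reflexivity, plus the rule prepending one reduction step), the sole difference being that $\coredname$ is the greatest fixed point and $\redname$ the least fixed point of one and the same inference operator. Since a least fixed point is contained in the greatest, the inclusion follows at once; equivalently, one argues by a short induction on the derivation of $\red a b$, mapping the reflexivity axiom and the step rule of $\redname$ onto their $\coredname$ counterparts. The second implication, $\redinf a \implies \cored a b$ for arbitrary $b$, I would prove by coinduction: inverting $\redinf a$ yields some $a'$ with $\redone a {a'}$ and $\redinf {a'}$, the coinduction hypothesis supplies $\cored {a'} b$, and the step rule for $\coredname$ concludes $\cored a b$.

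The ``only if'' direction is the main obstacle, because from a coinductive derivation of $\cored a b$ we cannot constructively decide whether it is finite (a genuine reduction reaching $b$) or infinite (divergence). I would therefore appeal to the excluded middle on the proposition $\red a b$, which makes this a classical proof. If $\red a b$ holds we are done. Otherwise, assuming $\lnot \red a b$, I would prove $\redinf a$ by coinduction. Inverting $\cored a b$ gives two cases: either $a = b$, which is impossible, since reflexivity would then yield $\red a b$ contrary to the assumption; or $\redone a {a'}$ and $\cored {a'} b$ for some $a'$.

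In the latter case the key move is to propagate the negative hypothesis: I first observe $\lnot \red {a'} b$, for otherwise the step rule for $\redname$ applied to $\redone a {a'}$ and $\red {a'} b$ would reconstruct $\red a b$, a contradiction. The coinduction hypothesis, applied to $\cored {a'} b$ together with $\lnot \red {a'} b$, then yields $\redinf {a'}$, and the step rule for $\redinfname$ concludes $\redinf a$. The delicate points are thus the reliance on classical logic for the initial case split and the need to carry the negative assumption $\lnot \red {\cdot} b$ one reduction step along so that the coinduction hypothesis stays applicable; once this is arranged, guardedness of the coinductive call is automatic, as it sits directly under the step rule for $\redinfname$.
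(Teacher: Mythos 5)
Your proof is correct and follows essentially the same route as the paper: the ``if'' direction by induction on $\red a b$ plus a coinductive argument for $\redinf a \implies \cored a b$, and the ``only if'' direction by proving $\cored a b \land \lnot(\red a b) \implies \redinf a$ coinductively and concluding with excluded middle on $\red a b$. Your expansion of the key step---propagating the negative hypothesis $\lnot \red {a'} b$ across one reduction step so the coinduction hypothesis remains applicable---is exactly the detail the paper's proof sketch leaves implicit.
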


\begin{classicproof} For the ``if'' part, we show that
$\red a b \implies \cored a b$ by induction on $\red a b$,
and that $\redinf a \implies \cored a b$ by coinduction.
For the ``only if'' part, we show that $\cored a b \land \lnot (\red a b)
\implies \redinf a$ by coinduction.  The result follows
by excluded middle over $\red a b$.
\end{classicproof}

We now turn to relating the reduction relations (small-step) and the
evaluation relations (big-step).  It is well known that normal
evaluation is equivalent to finite reduction to a value.

\begin{theorem} \label{eval-red-equiv}
$\eval a v$ if and only if $\red a v$ and $\isvalue v$.
\end{theorem}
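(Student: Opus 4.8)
$\eval a v$ if and only if $\red a v$ and $\isvalue v$.

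We want to prove both directions of this equivalence between big-step evaluation and small-step reduction to a value.

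The plan is to prove the two implications separately, as is standard for such equivalences. For the forward direction ($\eval a v$ implies $\red a v \land \isvalue v$), I would proceed by induction on the derivation of $\eval a v$. The value property $\isvalue v$ comes for free from lemma~\ref{eval-isvalue}, so the real content is establishing $\red a v$. The base cases ($\evalname$-const and $\evalname$-fun) are immediate since $\red a a$ holds reflexively. The interesting case is the application rule $\evalname$-app, where $\eval {a_1~a_2} v$ follows from $\eval {a_1} {\lambda x.b}$, $\eval {a_2} {v_2}$, and $\eval {b[x \becomes v_2]} v$. By the induction hypotheses I have $\red {a_1} {\lambda x.b}$, $\red {a_2} {v_2}$, and $\red {b[x \becomes v_2]} v$. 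I would like to chain these into a single reduction sequence $\red {a_1~a_2} v$.

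The key technical tool for this forward direction is a pair of congruence (context-compatibility) lemmas for $\redname$: namely that $\red {a_1} {a_2}$ implies $\red {a_1~b} {a_2~b}$, and that $\isvalue a$ together with $\red {b_1} {b_2}$ implies $\red {a~b_1} {a~b_2}$. Each is proved by a routine induction on the multi-step reduction, using the context rules $\redonename$-app-l and $\redonename$-app-r together with transitivity of $\redname$. With these in hand, the application case assembles as follows: first reduce the function position, $\red {a_1~a_2} {(\lambda x.b)~a_2}$; then, since $\lambda x.b$ is a value, reduce the argument position, $\red {(\lambda x.b)~a_2} {(\lambda x.b)~v_2}$; then fire the $\beta$-redex via $\redonename$-$\beta$ (legitimate because $\isvalue {v_2}$ by lemma~\ref{eval-isvalue}) to reach $\red {(\lambda x.b)~v_2} {b[x \becomes v_2]}$; and finally append the induction hypothesis $\red {b[x \becomes v_2]} v$. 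Concatenating these by transitivity gives $\red {a_1~a_2} v$.

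For the backward direction ($\red a v \land \isvalue v$ implies $\eval a v$), induction on the structure of $\red a v$ is not directly enough, so I would instead prove a sharper one-step lemma and then do a standard reduction-sequence induction. The central lemma is: if $\redone a {a'}$ and $\eval {a'} v$, then $\eval a v$ (evaluation is preserved backwards along single reduction steps). This is proved by induction on the derivation of $\redone a {a'}$, inverting the derivation of $\eval {a'} v$ in each case; the $\beta$ case matches the premises of $\evalname$-app while the two context cases recurse. I expect this backward-preservation lemma to be the main obstacle, since it requires careful inversion of the big-step derivation in the application cases to reconstruct the $\evalname$-app premises. Granting it, the backward direction follows by induction on $\red a v$: the base case uses $a = v$ with $\isvalue v$, so $\eval v v$ holds by $\evalname$-const or $\evalname$-fun; the inductive case has $\redone a {a'}$ and $\red {a'} v$, and the induction hypothesis gives $\eval {a'} v$, whence the one-step lemma yields $\eval a v$. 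Combining both directions completes the proof.
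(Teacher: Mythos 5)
Your proof is correct and follows essentially the same route as the paper: the forward direction by induction on the big-step derivation (with the congruence lemmas for $\redname$ left implicit in the paper's ``easy induction''), and the backward direction via exactly the paper's two auxiliary lemmas --- $\eval v v$ for values, and backward preservation of evaluation along one reduction step --- followed by induction on $\red a v$.
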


\begin{constrproof}  The ``only if'' part is an easy induction on
$\eval a v$.  For the ``if'' part, we first show the following two
lemmas: (1) $\eval v v$ if $\isvalue v$, and (2) $\eval a v$ if 
$\redone a b$ and $\eval b v$.  The result follows by induction on the
proof of $\red a v$. \end{constrproof}

Similarly, divergence ($\evalinfname$) is equivalent to infinite
reduction ($\redinfname$).  The proof uses the following lemma.

\begin{lemma} \label{red-or-redinf}
For all terms $a$, either $\redinf a$, or there exists $b$ such that
$\red a b$ and $\notred b$, that is, $\forall b', ~\lnot(\redone b {b'})$.
\end{lemma}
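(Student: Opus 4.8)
The plan is to prove this dichotomy classically, by a case analysis on whether $a$ can reach a normal form. Call a term \emph{normalizing} if there is some $b$ with $\red a b$ and $\notred b$; this is precisely the second disjunct of the statement. Applying the excluded middle to this proposition, for every term $a$ either $a$ is normalizing or it is not. If $a$ is normalizing, the second disjunct holds immediately. The interesting case is when $a$ is \emph{not} normalizing, and there I would establish the first disjunct, $\redinf a$, by coinduction.

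So the coinductive argument proves the implication ``$a$ non-normalizing $\implies \redinf a$''. The coinduction hypothesis thus supplies exactly this implication, and the goal is to build from it a derivation of $\redinf a$ whose outermost step is an application of the $\redinfname$ rule, keeping the corecursion productive. First I would observe that $a$ cannot itself be a normal form: since $\red a a$ holds by reflexivity, a normal $a$ would make $a$ normalizing, contradicting the hypothesis. Hence there is some $a'$ with $\redone a {a'}$.

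Next I would check that this successor $a'$ is again non-normalizing. Indeed, any normal form $b$ reachable from $a'$, i.e. with $\red {a'} b$ and $\notred b$, would also be reachable from $a$ by prepending the single step $\redone a {a'}$, making $a$ normalizing — a contradiction. Applying the coinduction hypothesis to $a'$ therefore yields $\redinf {a'}$, and the $\redinfname$ rule applied to $\redone a {a'}$ together with $\redinf {a'}$ concludes $\redinf a$, which completes the coinductive case.

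The main obstacle is the reliance on classical reasoning: one cannot constructively decide whether an arbitrary term reaches a normal form, so the excluded-middle step is unavoidable and the lemma must be marked classical. A secondary technical point is the well-formedness of the coinductive proof: the corecursive appeal to the coinduction hypothesis must sit strictly beneath the $\redinfname$ constructor applied to the chosen step $\redone a {a'}$, which is exactly what the ordering above guarantees. I would also note that determinism of $\redonename$ (lemma~\ref{red1-determ}) is \emph{not} needed here — an arbitrary choice of successor $a'$ suffices.
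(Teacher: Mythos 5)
Your proof is correct and takes essentially the same route as the paper: a classical case split combined with a guarded coinduction that converts ``no reachable normal form'' into $\redinf a$. The only difference is one of factoring: the paper first proves, by a coinduction that is itself constructive, that $(\forall b,~ \red a b \implies \exists b',~ \redone b {b'}) \implies \redinf a$ and only then invokes excluded middle (on $\redinf a$), whereas you invoke excluded middle on the normalization disjunct up front and push the classical step — extracting a successor $a'$ from $\lnot(\notred a)$ — inside the coinductive argument; the two proofs are contrapositives of one another.
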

\begin{classicproof}  We first show that
$\forall b,~ \red a b \implies \exists b', ~\redone b {b'}$ implies
$\redinf a$ by coinduction.  We then argue by excluded
  middle on $\redinf a$.
\end{classicproof}

\begin{theorem} \label{evalinf-redinf-equiv}
 $\evalinf a$ if and only if $\redinf a$. 
\end{theorem}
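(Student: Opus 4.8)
The plan is to prove the two implications separately, each by coinduction on the coinductively-defined relation appearing in its conclusion. Throughout I will freely pass between $\eval a v$ and $\red a v \land \isvalue v$ via Theorem~\ref{eval-red-equiv}, and use the determinism of $\redonename$ (Lemma~\ref{red1-determ}). A useful observation, recorded at the outset, is that $\evalinf$ can only conclude an application, so any $a$ with $\evalinf a$ (and, as we shall see, any reducible $a$) has the form $a_1~a_2$.

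For the ``only if'' part, $\evalinf a \implies \redinf a$, the difficulty is guardedness: the coinduction that builds the infinite reduction must exhibit a concrete first step before invoking the coinduction hypothesis. I would therefore first establish, with no coinduction, the one-step property
$$\evalinf a \implies \exists a'.~ \redone a {a'} \land \evalinf{a'},$$
by structural induction on the term $a = a_1~a_2$, with a case analysis on the rule justifying $\evalinf{a_1~a_2}$. In the ($\evalinfname$-app-l) case I apply the induction hypothesis to the subterm $a_1$, obtaining $\redone{a_1}{a_1'}$ with $\evalinf{a_1'}$, and lift it through ($\redonename$-app-l) and ($\evalinfname$-app-l). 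In the ($\evalinfname$-app-r) and ($\evalinfname$-app-f) cases the premises give $\red{a_1}{v}$ (and $\red{a_2}{v'}$) to values by Theorem~\ref{eval-red-equiv}; when the head is not yet a value its reduction is nonempty, so I peel off its first step directly and rebuild the evaluation premise by Theorem~\ref{eval-red-equiv}; when the head is already a value I recurse on the argument subterm; and when head and argument are both values I fire the $\beta$-redex. With this lemma in hand, $\redinf a$ follows by a trivially guarded coinduction: take the one step it provides, then recurse on $a'$.

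For the ``if'' part, $\redinf a \implies \evalinf a$, I would proceed by coinduction on $\evalinf$. Since $\redinf a$ makes $a$ reducible, $a = a_1~a_2$. I then split the behaviour of $a_1$ with Lemma~\ref{red-or-redinf}. Two routine auxiliary facts are needed: congruence of $\redname$ under application contexts, and preservation of divergence under finite reduction, $\red a b \land \redinf a \implies \redinf b$ (an induction on $\red a b$ using determinism). If $\redinf{a_1}$, rule ($\evalinfname$-app-l) applies with $\evalinf{a_1}$ supplied by the coinduction hypothesis. Otherwise $a_1$ reduces to a normal form $v_1$; transporting this through the context and the preservation lemma yields $\redinf{v_1~a_2}$, and because $\redinf{v_1~a_2}$ keeps $v_1~a_2$ reducible, $v_1$ cannot be a stuck non-value, so $v_1$ is a value and $\eval{a_1}{v_1}$. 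Splitting $a_2$ the same way, a divergent $a_2$ gives ($\evalinfname$-app-r), whereas an argument reducing to a value $v_2$ leaves $\redinf{v_1~v_2}$, which forces $v_1 = \lambda x.b$ and a surviving $\beta$-step, hence $\redinf{b[x \becomes v_2]}$ and rule ($\evalinfname$-app-f), again with the coinduction hypothesis. In each surviving case exactly one $\evalinf$ rule is fired before the recursive call, so the coinduction is guarded.

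The main obstacle is this ``if'' direction. The case analysis is genuinely exhaustive, and its soundness rests on systematically discarding every configuration in which a subterm reduces to a non-value normal form: each such configuration yields a stuck application, which contradicts $\redinf$ only after the infinite reduction has been pushed across the finite prefix by the preservation lemma, and hence by determinism of $\redonename$. Keeping the bookkeeping between $\redname$, $\redinfname$, $\evalname$ and $\evalinfname$ consistent, while ensuring that every appeal to the coinduction hypothesis remains guarded beneath a single applied rule, is where the real effort lies.
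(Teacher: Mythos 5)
Your proof is correct and follows essentially the same route as the paper: the ``only if'' direction via the one-step progress lemma $\evalinf a \implies \exists a'.~\redone a {a'} \land \evalinf {a'}$ (proved by structural induction, then a guarded coinduction), and the ``if'' direction by coinduction with case analysis driven by Lemma~\ref{red-or-redinf} and Theorem~\ref{eval-red-equiv}. The auxiliary facts you make explicit (congruence of $\redname$ under application contexts, preservation of $\redinfname$ along finite prefixes via determinism) are exactly the bookkeeping the paper leaves implicit in its three-case analysis.
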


\begin{classicproof}  For the ``only if'' part, we first show that
$\evalinf a$ implies $\exists b,~ \redone a b \land \evalinf b$ by
structural induction on $a$, then conclude by
coinduction.  For the ``if'' part, we
proceed by coinduction and case analysis over $a$.  The only
non-trivial case is $a = a_1~a_2$.  Using lemma~\ref{red-or-redinf},
we distinguish three cases: (1) $a_1$ reduces infinitely; (2) $a_1$
reduces to a value but $a_2$ reduces infinitely; (3) $a_1$ and $a_2$ reduce
to values $\lambda x.b$ and $v$ respectively, and $b[x \becomes v]$
reduces infinitely.  We conclude $\evalinf a$ by applying the
appropriate inference rule  for each case, the coinduction hypothesis for the
$\evalinfname$ premise, and theorem~\ref{eval-red-equiv} for the
$\evalname$ premises.
\end{classicproof}

\section{Relation with denotational semantics} \label{s-den-sem}

Denotational semantics is an alternate way to characterize divergent
and convergent terms.  In this section, we develop a simple
denotational semantics for call-by-value $\lambda$-calculus and prove
that it captures the same notions of convergence and divergence as our
big-step operational semantics.  To facilitate the mechanization of
these results in the Coq theorem prover, we adopt an elementary
presentation of the denotational semantics that does not require the
full generality of Scott domains.

We define the computation $\compute n a$ of a term $a$ at maximal
recursion depth $n \in \dN$ by recursion over $n$, as follows.
\begin{eqnarray*}
\compute 0 a & = & \bot \\
\compute {n+1} {x} & = & \error \\
\compute {n+1} {c} & = & c \\
\compute {n+1} {\lambda x.a} & = & \lambda x.a \\
\compute {n+1} {a_1~a_2}
& = & \compute n {a_1} \bind (v_1 \mapsto \\
&   & ~~\compute n {a_2} \bind (v_2 \mapsto \\
&   & ~~~~{\tt if\ } v_1 = \lambda x.b {\tt \ then\ }\compute{n}{b[x \becomes v_2]} {\tt \ else\ }\error))
\end{eqnarray*}
The monadic composition operator $\bind$ used in the application case is
defined by
$$ \bot \bind f = \bot \qquad
   \error \bind f = \error \qquad
   v \bind f = f(v). $$
The result of $\compute n a$, or in other terms the outcome of
executing $a$ at depth $n$, is one of the following three possibilities:
(1) a value~$v$, denoting normal termination with $v$ as final value;
(2) the symbol $\error$, denoting abrupt termination on a run-time
  error (such as encountering a free variable or an application of a
  constant);
(3) the symbol $\bot$, indicating that the computation cannot
  complete within $n$ recursive steps.

The flat ordering $\le$ over results is defined by $\bot \le r$ and 
$r \le r$ for all~$r$.  The $\computename$ function is monotone with
respect to this ordering:

\begin{lemma} \label{compute-incr}
If $n \le m$, then $\compute n a \le \compute m a$.
\end{lemma}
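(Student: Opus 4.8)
The plan is to prove the statement by induction on $n$, exploiting the recursive structure of the definition of $\computename$ together with the monotonicity of the binding operator $\bind$. Since $\compute n a$ is defined by recursion over the depth parameter $n$, and the flat ordering $\le$ is reflexive with $\bot$ as least element, induction on $n$ is the natural route. It suffices to prove the single-step version, namely that $\compute n a \le \compute {n+1} a$ for all $n$ and $a$; the general case $n \le m$ then follows by transitivity of $\le$ along the chain $n, n+1, \ldots, m$. So I would first reduce to this one-step claim.

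First I would handle the base case. When $n = 0$, we have $\compute 0 a = \bot$ by definition, and since $\bot \le r$ for every result $r$, we immediately get $\compute 0 a \le \compute 1 a$. This requires no assumption on $a$.

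Next I would carry out the inductive step, proving $\compute {n+1} a \le \compute {n+2} a$ assuming the hypothesis that $\compute n a' \le \compute {n+1} a'$ for every term $a'$. I would proceed by case analysis on the syntax of $a$. The cases $a = x$, $a = c$, and $a = \lambda x.b$ are immediate, since in each the result does not depend on the depth (it is $\error$, $c$, and $\lambda x.b$ respectively), so both sides are equal and reflexivity of $\le$ applies. The only substantial case is the application $a = a_1~a_2$, where both sides are built using $\bind$ and recursive calls at one lower depth.

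The main obstacle, then, is the application case, and the key lemma I expect to need is monotonicity of the monadic bind: if $r \le r'$ and $f(v) \le g(v)$ for all values $v$, then $r \bind f \le r' \bind g$. I would establish this by case analysis on $r$ (the cases $\bot$, $\error$, and a value $v$), reading off the behaviour of $\bind$ in each. Granting this, the application case goes through: by the induction hypothesis, $\compute n {a_1} \le \compute {n+1} {a_1}$ and $\compute n {a_2} \le \compute {n+1} {a_2}$, and for the continuation the inner term $\compute n {b[x \becomes v_2]} \le \compute {n+1} {b[x \becomes v_2]}$ is again the induction hypothesis at the appropriate substituted term. Applying bind-monotonicity twice, once for each nested $\bind$, yields $\compute {n+1} {a_1~a_2} \le \compute {n+2} {a_1~a_2}$, completing the step. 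The subtlety worth care is that the depth used in the recursive calls on the right-hand side is $n+1$ rather than $n+2$, so the induction hypothesis, which compares depths $n$ and $n+1$, applies directly without needing a strengthened statement.
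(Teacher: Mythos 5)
Your proof is correct and takes essentially the same approach as the paper's: induction on the depth parameter $n$ combined with case analysis on the structure of $a$. The paper's one-line proof simply leaves implicit the details you spell out, namely the reduction to the successor case via transitivity of $\le$ and the monotonicity of $\bind$.
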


\begin{constrproof} By induction over $n$ and case analysis over $a$.
\end{constrproof}

We say that a term $a$ executes with result $r$, or in other terms
that $r$ is the denotation of $a$, and we write $\exec a r$, if
$\compute n a = r$ for almost all $n$:
$$ \exec a r ~ \defequal ~
   \exists p,~ \forall n,~n \ge p ~\implies~ \compute n a = r. $$
Since $\computename$ is monotone, the following properties hold trivially:

\begin{lemma}\label{exec-either}
If $\exec a r$, then for all $n$, either $\compute n a = \bot$
or $\compute n a = r$.
\end{lemma}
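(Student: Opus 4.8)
The plan is to prove Lemma~\ref{exec-either} directly from the definition of $\exec a r$ together with the monotonicity established in Lemma~\ref{compute-incr}. Recall that $\exec a r$ unfolds to the statement that there exists a threshold $p$ such that $\compute n a = r$ for all $n \ge p$. The goal is to show that, for an \emph{arbitrary} $n$ (not just those above the threshold), $\compute n a$ is either $\bot$ or equal to $r$.

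First I would extract from the hypothesis $\exec a r$ the witness $p$ satisfying $\forall n,~ n \ge p \implies \compute n a = r$. Then, given an arbitrary $n$, I would compare $n$ with $p$ by case analysis. In the case $n \ge p$, the hypothesis applies directly and gives $\compute n a = r$, which is the right disjunct. In the case $n < p$, hence $n \le p$, I would invoke Lemma~\ref{compute-incr} to obtain $\compute n a \le \compute p a$. Since $p \ge p$, the hypothesis also gives $\compute p a = r$, so $\compute n a \le r$. By the definition of the flat ordering $\le$, the relation $\compute n a \le r$ means that either $\compute n a = \bot$ or $\compute n a = r$; in both sub-cases we land in one of the two permitted outcomes.

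The crux of the argument is thus the interplay between monotonicity and the flat order: monotonicity forces $\compute n a$ to sit below the eventual value $r$, and flatness of $\le$ leaves only the two options $\bot$ and $r$. There is essentially no obstacle here — the statement is, as the paper remarks, trivial given monotonicity — but I would be careful to note that the threshold comparison need not distinguish $n \ge p$ from $n < p$ as two genuinely separate arguments: the single inequality $n \le \max(n,p)$ together with $\max(n,p) \ge p$ already yields $\compute n a \le \compute{\max(n,p)} a = r$, collapsing both cases into one. Either formulation works; the only point requiring attention is that the conclusion is a disjunction, matching exactly the two elements comparable below $r$ in the flat ordering.
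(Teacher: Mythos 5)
Your proof is correct and matches the paper's intent exactly: the paper offers no explicit argument, stating only that the lemma ``holds trivially'' by monotonicity of $\computename$, and your argument --- comparing $n$ against the threshold $p$, applying Lemma~\ref{compute-incr}, and reading off the disjunction from the flat ordering --- is precisely that trivial argument spelled out. The collapsed $\max(n,p)$ formulation you note is a nice tightening but changes nothing essential.
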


\begin{lemma}\label{exec-notbot}
If $r \not= \bot$ and $\compute n a = r$ for some $n$, then $\exec a r$.
\end{lemma}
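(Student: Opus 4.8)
The plan is to take $p = n$ as the witness in the definition of $\exec a r$ and to show directly that $\compute m a = r$ for every $m \ge n$.

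First I would fix an arbitrary $m \ge n$ and invoke the monotonicity of $\computename$ (lemma~\ref{compute-incr}), which gives $\compute n a \le \compute m a$, that is, $r \le \compute m a$ since $\compute n a = r$ by hypothesis. The crucial observation is then that the ordering on results is \emph{flat}: by its very definition, $x \le y$ holds precisely when $x = \bot$ or $x = y$. Because $r \neq \bot$ by assumption, the first disjunct is impossible, so the relation $r \le \compute m a$ can only hold by equality, forcing $\compute m a = r$. Since $m \ge n$ was arbitrary, this establishes $\forall m,~ m \ge n \implies \compute m a = r$, which is exactly $\exec a r$ with $p := n$.

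There is really no hard step here: the result is an immediate consequence of monotonicity together with the fact that in a flat ordering any element lying above a non-$\bot$ element must coincide with it. The only point requiring care is to use the hypothesis $r \neq \bot$ at the right moment — namely to rule out the $\bot \le \compute m a$ case of the flat ordering — since without it the conclusion would fail (for instance $\compute 0 a = \bot$ holds for every $a$, yet $a$ need not execute to $\bot$). This also explains why lemma~\ref{exec-notbot} is the natural complement of lemma~\ref{exec-either}: monotonicity pins down the eventual value of $\computename$ in both directions, the only freedom being whether the computation has yet escaped the $\bot$ approximation.
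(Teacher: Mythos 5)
Your proof is correct and follows exactly the route the paper intends: the paper dismisses lemma~\ref{exec-notbot} as holding ``trivially'' by monotonicity of $\computename$, and your argument is precisely that trivial argument spelled out — take $p = n$, apply lemma~\ref{compute-incr}, and use flatness of $\le$ together with $r \neq \bot$ to force $\compute m a = r$ for all $m \ge n$.
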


\begin{lemma}\label{exec-bot}
$\exec a \bot$ if and only if $\compute n a = \bot$ for all $n$.
\end{lemma}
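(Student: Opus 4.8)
The plan is to prove the two implications separately, both directly from the definition of $\exec a \bot$ together with the monotonicity of $\computename$ (lemma~\ref{compute-incr}). Recall that $\exec a \bot$ unfolds to $\exists p,~\forall n,~ n \ge p \implies \compute n a = \bot$, so the statement amounts to upgrading the ``for almost all $n$'' in this definition to ``for all $n$''.

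For the ``if'' direction, suppose $\compute n a = \bot$ for all $n$. Then the witness $p = 0$ immediately satisfies the definition of $\exec a \bot$, since $\compute n a = \bot$ holds in particular for every $n \ge 0$. This direction uses nothing beyond unfolding the definition.

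For the ``only if'' direction, suppose $\exec a \bot$, so there is some $p$ with $\compute n a = \bot$ for all $n \ge p$. I must show $\compute n a = \bot$ for \emph{every} $n$, the only new cases being $n < p$. For such an $n$ we have $n \le p$, so lemma~\ref{compute-incr} gives $\compute n a \le \compute p a = \bot$. The key observation is then that in the flat ordering nothing lies strictly below $\bot$: the relation $\le$ consists only of the pairs $(\bot, r)$ and $(r, r)$, so $r \le \bot$ forces $r = \bot$. Applying this with $r = \compute n a$ yields $\compute n a = \bot$, as required.

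There is no real obstacle here. The whole content is the flat-ordering remark that $r \le \bot \implies r = \bot$, which converts monotonicity into downward constancy of $\compute{\cdot}{a}$ below a value of $\bot$; this is precisely why the paper can announce this lemma (alongside lemmas~\ref{exec-either} and~\ref{exec-notbot}) as holding trivially once $\computename$ is known to be monotone.
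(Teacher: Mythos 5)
Your proof is correct and is exactly the argument the paper leaves implicit when it states that this lemma (with lemmas~\ref{exec-either} and~\ref{exec-notbot}) follows ``trivially'' from the monotonicity of $\computename$: the witness $p=0$ for the easy direction, and monotonicity plus the flat-ordering fact that $r \le \bot$ forces $r = \bot$ for the other.
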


It follows that every term has one and exactly one denotation.

\begin{lemma} \label{evaluates-total}
For all terms $a$, there exists a result $r$ such that
$\exec a r$.
\end{lemma}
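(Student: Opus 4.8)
The plan is to reduce the statement to the two lemmas already established, Lemma~\ref{exec-notbot} and Lemma~\ref{exec-bot}, via a single case split. Fix a term $a$. The key observation is that the behaviour of $\computename$ on $a$ falls into exactly one of two patterns: either $\compute n a$ eventually escapes $\bot$, or it remains $\bot$ forever. I would make this dichotomy precise by applying the law of excluded middle to the proposition ``there exists $n$ such that $\compute n a \not= \bot$''. This is what makes the argument classical, so the proof should be marked \textbf{(classical)}.

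In the first case, suppose there is some $n$ with $\compute n a \not= \bot$, and set $r = \compute n a$. Then $r \not= \bot$ and $\compute n a = r$ for this particular $n$, so Lemma~\ref{exec-notbot} immediately yields $\exec a r$, and $r$ is the required denotation. In the second case, the negation of the existential tells us that no $n$ gives a non-$\bot$ result; pushing this negation inward gives $\compute n a = \bot$ for all $n$. Lemma~\ref{exec-bot} then delivers $\exec a \bot$, so $r = \bot$ works. Since the two cases are exhaustive by excluded middle, the existence of a denotation follows in all cases.

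The proof is essentially trivial once the monotonicity machinery of the preceding lemmas is in place; all the real content lives in Lemma~\ref{compute-incr} and its corollaries, which guarantee that the sequence $(\compute n a)_n$ is increasing in the flat ordering and hence can change value at most once. The one point that deserves care is the step from $\lnot\,(\exists n,~\compute n a \not= \bot)$ to $\forall n,~\compute n a = \bot$: constructively this requires either the appeal to excluded middle already made, or the (available) decidability of equality on results, which lets one turn $\lnot\lnot(\compute n a = \bot)$ into $\compute n a = \bot$. This is the only subtlety, and it is the reason the proof is not purely constructive. With that handled, combining Lemma~\ref{exec-notbot} and Lemma~\ref{exec-bot} closes both branches and completes the argument.
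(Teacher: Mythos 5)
Your proposal is correct and matches the paper's own proof essentially step for step: both apply excluded middle to the dichotomy between $\exists n,~\compute n a \not= \bot$ and $\forall n,~\compute n a = \bot$, handle the first case via Lemma~\ref{exec-notbot} with $r = \compute n a$, and the second via $\exec a \bot$ (the paper calls this ``obvious''; you cite Lemma~\ref{exec-bot}, which is the same fact). Your additional remark about pushing the negation inward is a fair point of constructive hygiene but does not change the argument.
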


\begin{classicproof} By excluded middle, either
$\forall n, \compute n a = \bot$ or $\exists n, \compute n a \not= \bot$.
In the former case, we obviously have $\exec a \bot$.  In the latter case,
pick $n$ such that $\compute n a \not= \bot$ and take $r = \compute n a$.  By
lemma~\ref{exec-notbot}, we have $\exec a r$.
\end{classicproof}

\begin{lemma} If $\exec a {r_1}$ and $\exec a {r_2}$, then $r_1 = r_2$.
\end{lemma}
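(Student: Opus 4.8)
The plan is to exploit the fact that the predicate $\exec a r$ pins $r$ down as the common value taken by $\compute n a$ for all sufficiently large $n$; two such results must therefore agree at any depth that is large enough for both, and hence be equal. First I would unfold the definition of $\exec$ in the two hypotheses. From $\exec a {r_1}$ I obtain a threshold $p_1$ such that $\compute n a = r_1$ whenever $n \ge p_1$, and from $\exec a {r_2}$ a threshold $p_2$ such that $\compute n a = r_2$ whenever $n \ge p_2$.

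Next I would instantiate both universally quantified statements at a single depth that dominates both thresholds, for instance $n = p_1 + p_2$ (any $n \ge \max(p_1, p_2)$ works). Since $n \ge p_1$, the first hypothesis gives $\compute n a = r_1$; since $n \ge p_2$, the second gives $\compute n a = r_2$. Transitivity of equality on $\compute n a$ then yields $r_1 = r_2$, which is the desired conclusion.

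I expect no real obstacle here: this is a one-line uniqueness argument that follows directly from the ``almost all $n$'' formulation of $\exec$. In particular it needs neither the monotonicity lemma~\ref{compute-incr} nor the axiom of excluded middle, and the proof is fully constructive. The only mild point of care is choosing a depth that exceeds both $p_1$ and $p_2$ simultaneously; since the quantifier in the definition of $\exec$ ranges over every $n \ge p$, any such choice closes the proof.
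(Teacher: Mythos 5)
Your proof is correct and is exactly the paper's argument: the paper's one-line proof ``$r_1 = \compute n a = r_2$ for sufficiently large $n$'' is precisely your instantiation at a depth dominating both thresholds, and like yours it is constructive, needing neither monotonicity nor excluded middle.
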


\begin{constrproof} 
Notice that $r_1 = \compute n a = r_2$ for sufficiently large $n$.
\end{constrproof}

We now relate this denotational semantics with the big-step
operational semantics of section~\ref{s-big-step}, starting with the
terminating case.

\begin{theorem} \label{eval-converges}
$\eval a v$ if and only if $\exec a v$.
\end{theorem}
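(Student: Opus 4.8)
The plan is to prove the two implications separately, handling the ``only if'' direction by induction on the big-step derivation and the ``if'' direction by induction on the recursion depth.

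For the ``only if'' direction, I would argue by induction on the derivation of $\eval a v$. In the two axiom cases, $a$ is already a value: $\compute 1 c = c$ and $\compute 1 {\lambda x.a'} = \lambda x.a'$, and since the result differs from $\bot$, Lemma~\ref{exec-notbot} immediately gives $\exec a v$. In the application case $\eval{a_1~a_2}{v}$ derived from $\eval{a_1}{\lambda x.b}$, $\eval{a_2}{v_2}$ and $\eval{b[x\becomes v_2]}{v}$, the induction hypotheses yield $\exec{a_1}{\lambda x.b}$, $\exec{a_2}{v_2}$ and $\exec{b[x\becomes v_2]}{v}$; note that $v_2$ and $v$ are values by Lemma~\ref{eval-isvalue}, hence distinct from $\bot$ and $\error$. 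Choosing a common depth $n$ above the three thresholds and unfolding the definition of $\compute{n+1}{a_1~a_2}$, monotonicity (Lemma~\ref{compute-incr}) lets me replace each of $\compute{n}{a_1}$, $\compute{n}{a_2}$ and $\compute{n}{b[x\becomes v_2]}$ by its limit; the two monadic binds then go through, the scrutinees being genuine values, and the conditional selects the $\lambda x.b$ branch, so $\compute{n+1}{a_1~a_2} = v$. Since $v \neq \bot$, Lemma~\ref{exec-notbot} gives $\exec{a_1~a_2}{v}$.

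For the ``if'' direction, the crux is the auxiliary statement
$$ \forall n,\ \forall a,\ \forall v,\quad \isvalue v \land \compute n a = v \implies \eval a v, $$
which I would prove by induction on $n$ with a case analysis on $a$. The case $n = 0$ is vacuous since $\compute 0 a = \bot$ is not a value; for $n+1$, the variable case gives $\error$ and is likewise vacuous, while the constant and abstraction cases are immediate from the corresponding big-step axioms. The interesting case is $a = a_1~a_2$. Here I invert the two monadic binds in the definition of $\compute{n+1}{a_1~a_2}$: because the overall result $v$ is neither $\bot$ nor $\error$, each bind must have received a genuine value, forcing $\compute n{a_1} = v_1$ and $\compute n{a_2} = v_2$ for some values $v_1,v_2$, and the conditional must have taken the abstraction branch, so $v_1 = \lambda x.b$ and $\compute n{b[x\becomes v_2]} = v$. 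Applying the induction hypothesis at depth $n$ to these three computations yields $\eval{a_1}{\lambda x.b}$, $\eval{a_2}{v_2}$ and $\eval{b[x\becomes v_2]}{v}$, and rule ($\evalname$-app) concludes $\eval{a_1~a_2}{v}$. Finally, given $\exec a v$ with $v$ a value, the definition of $\exec a v$ furnishes some $n$ with $\compute n a = v$, and the auxiliary statement delivers $\eval a v$.

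The main obstacle is the application case of the ``if'' direction: one must carry out the inversion of the nested monadic binds carefully, observing that a proper value outcome propagates backwards to force each subcomputation to be a value and the head to be an abstraction, and one must set up the induction on the depth $n$ (rather than on the structure of $a$) so that the induction hypothesis remains available for the substituted body $b[x\becomes v_2]$, which is not a subterm of $a_1~a_2$.
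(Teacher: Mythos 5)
Your proof is correct and takes essentially the same route as the paper's: the ``only if'' direction by induction on the derivation of $\eval a v$, exhibiting a sufficiently large depth via monotonicity and concluding with Lemma~\ref{exec-notbot}, and the ``if'' direction via the auxiliary statement that $\compute n a = v$ implies $\eval a v$, proved by induction on the depth $n$ (not on the structure of $a$, precisely so the hypothesis applies to $b[x \becomes v_2]$) with case analysis and inversion of the monadic binds. The only cosmetic differences are your explicit $\isvalue v$ hypothesis (harmless and in fact redundant, since any result of $\computename$ other than $\bot$ and $\error$ is a value by construction) and phrasing the induction hypotheses through $\exec{\cdot}{\cdot}$ rather than explicit depths $n_1, n_2, n_3$.
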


\begin{constrproof} For the ``if'' part, we show that
$\compute n a = v$ implies $\eval a v$ by induction over $n$ and
case analysis over $a$ and over the results of the recursive computations.
The case $a = x$ contradicts the hypothesis $\compute n a = v$.
For the cases $a = c$ or $a = \lambda x.b$, we have $v = a$ by
definition of $\computename$ and the result follows by 
rules ($\evalname$-const) or ($\evalname$-fun).
Finally, if $a = a_1~a_2$, the exploitation of the hypothesis
$\compute n a = v$ leads to $\compute {n-1} {a_1} = \lambda x.b$
and $\compute {n-1} {a_2} = v_2$ and
$\compute {n-1} {b[x \becomes v_2]} = v$.  The result follows from the
induction hypothesis and rule ($\evalname$-app).

For the ``only if'' part, we proceed by induction over the
derivation of $\eval a v$ and exhibit an $n$ such that $\compute n a = v$.
From this, $\exec a v$ follows by lemma~\ref{exec-notbot}.
The cases where $a$ is a constant or a
function are trivial, since $\compute 1 a = v$ in these cases.
For the application case $a = a_1~a_2$, the induction hypothesis leads to
$\compute {n_1} {a_1} = {\lambda x.b}$ and
$\compute {n_2} {a_2} = {v_2}$ and
$\compute {n_3} {b[x \becomes v_2]} = v$ for some $n_1, n_2, n_3$.
Taking $n = 1 + \max(n_1,n_2,n_3)$, we have $\compute n a = v$ by definition
and monotonicity of $\computename$, and the result follows.
\end{constrproof}

\begin{theorem} $\evalinf a$ if and only if $\exec a \bot$.  \end{theorem}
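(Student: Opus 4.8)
The plan is to prove the two implications separately, using induction on the recursion depth for the forward direction and coinduction for the backward direction.

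For the ``only if'' part I would show $\evalinf a \implies \compute n a = \bot$ for every $n$, by induction on $n$. The base case $\compute 0 a = \bot$ is immediate. For $n+1$, I invert the hypothesis $\evalinf a$: since every divergence rule has an application as its conclusion, $a = a_1~a_2$ and one of the three rules applies. In each case I unfold $\compute{n+1}{a_1~a_2}$ and show that one of the monadic binds produces $\bot$. When a premise of the form $\eval{a_1}{v}$ is available, Theorem~\ref{eval-converges} gives $\exec{a_1}{v}$, and Lemma~\ref{exec-either} tells me $\compute n {a_1}$ is either $\bot$ (in which case the first bind already yields $\bot$) or $v$; and when a premise $\evalinf{a_2}$ or $\evalinf{b[x \becomes v]}$ is available, the induction hypothesis at depth $n$ makes the corresponding computation equal to $\bot$. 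Combining these, $\compute{n+1}{a_1~a_2} = \bot$ in all three cases. This direction is constructive.

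For the ``if'' part I would prove $\exec a \bot \implies \evalinf a$ by coinduction, with the coinduction hypothesis available for the $\evalinf$ premises of the rules. First, from $\compute 1 a = \bot$ (Lemma~\ref{exec-bot}) I observe that $a$ cannot be a variable, a constant, or an abstraction, since those all have non-$\bot$ computations at depth $1$; hence $a = a_1~a_2$. I then classify the denotations of $a_1$ and $a_2$, each of which exists by Lemma~\ref{evaluates-total} (this is where excluded middle enters, so the proof is classical). If $\exec{a_1}{\bot}$, the coinduction hypothesis gives $\evalinf{a_1}$ and rule ($\evalinfname$-app-l) concludes. The case $\exec{a_1}{\error}$ is impossible, since it would force $\compute n {a_1~a_2} = \error$ for large $n$, contradicting $\exec a \bot$. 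Otherwise $\exec{a_1}{v_1}$ for a value $v_1$, so $\eval{a_1}{v_1}$ by Theorem~\ref{eval-converges}, and I repeat the analysis on $a_2$: the $\bot$ subcase uses ($\evalinfname$-app-r), the $\error$ subcase is again impossible, and otherwise $\exec{a_2}{v_2}$ with $\eval{a_2}{v_2}$.

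The main obstacle is this last subcase, where both arguments converge. Here I must first rule out $v_1$ being a constant: if it were, the conditional in the definition of $\computename$ would take the $\error$ branch and give $\compute n {a_1~a_2} = \error$ for large $n$, contradicting $\exec a \bot$; so $v_1 = \lambda x.b$. Then for all sufficiently large $n$ the definition unfolds to $\compute{n+1}{a_1~a_2} = \compute n {b[x \becomes v_2]}$, so $\compute n {b[x \becomes v_2]} = \bot$ for all large $n$, and by monotonicity (Lemma~\ref{compute-incr}) this forces $\compute n {b[x \becomes v_2]} = \bot$ for every $n$, that is, $\exec{b[x \becomes v_2]}{\bot}$. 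The coinduction hypothesis then yields $\evalinf{b[x \becomes v_2]}$, and rule ($\evalinfname$-app-f), fed with the two $\eval$ premises, concludes $\evalinf{a_1~a_2}$. Checking that the coinduction hypothesis is only ever used to justify an $\evalinf$ premise, hence guardedly, is routine.
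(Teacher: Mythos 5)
Your proposal is correct and follows essentially the same route as the paper's proof: the ``only if'' direction by induction on the recursion depth $n$ with case analysis on the last divergence rule (using Theorem~\ref{eval-converges} and Lemma~\ref{exec-either} exactly as the paper does), and the ``if'' direction by guarded coinduction with a classical case analysis on the denotations $r_1, r_2$ of the two subterms obtained from Lemma~\ref{evaluates-total}. The only difference is one of detail: where the paper simply asserts that all cases other than its three listed ones ``contradict the hypothesis $\exec a \bot$,'' you spell out why the $\error$ and constant-in-function-position cases are impossible and how monotonicity (Lemma~\ref{compute-incr}) yields $\exec{b[x \becomes v_2]}{\bot}$, which is exactly the reasoning the paper leaves implicit.
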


\begin{constrproof} For the ``only if'' part, we show that
$\evalinf a$ implies $\compute n a = \bot$
by induction over $n$ and case analysis on the last rule used in the
derivation of $\evalinf a$.  In all three cases, $a = a_1~a_2$.
If $\evalinf {a_1}$, $\compute n a = \compute {n-1} {a_1} = \bot$ by
induction hypothesis.  If $\eval {a_1} {v_1}$ and $\evalinf {a_2}$,
we have $\exec a {v_1}$ by theorem~\ref{eval-converges}. By induction
hypothesis, $\compute {n-1} {a_2} = \bot$. By
lemma~\ref{exec-either}, either $\compute {n-1} {a_1} = \bot$ or
$\compute {n-1} {a_1} = v_1$.  In both cases, $\compute n a = \bot$.
The third and last case ($\eval {a_1} {\lambda x.b}$ and
$\eval {a_2} {v_2}$ and $\evalinf {b[x \becomes v_2]}$) is similar.

The ``if'' part is proved by coinduction
and case analysis over $a$.  The cases $a = x$, $a = c$ and $a =
\lambda x.b$ trivially contradict the hypothesis $\exec a \bot$.
Therefore, it must be the case that $a = a_1~a_2$.  
Let $r_1$ and $r_2$ be the denotations of $a_1$ and $a_2$.  (They exist by
lemma~\ref{evaluates-total}.)  We argue by case over $r_1$ and $r_2$,
exploiting the definition of $\computename$ for sufficiently large
values of $n$.  There are only three cases that do not contradict the
hypothesis $\exec a \bot$: 
(1) $r_1 = \bot$;
(2) $r_1$ is a value $v_1$ and $r_2 = \bot$;
(3) $r_1$ is a value $\lambda x.b$ and $r_2$ is a value $v_2$
and $\exec {b[x \becomes v_2]} \bot$.
We conclude $\evalinf a$ by applying the appropriate inference rule 
for each case, the coinduction hypothesis for the
$\evalinfname$ premise, and theorem~\ref{eval-converges} for the
$\evalname$ premises.
\end{constrproof}

\section{Extension to trace semantics} \label{s-traces}

Besides expressing both terminating and diverging executions,
small-step semantics have another advantage over big-step semantics:
reduction sequences contain all intermediate reducts of the source
term in addition to its final value, therefore providing a complete
trace of the execution.  Such execution traces are useful both for
static analysis (by abstract interpretation of collecting semantics)
and to state and prove stronger semantic preservation properties for
program transformations.  In particular, when the input language is
imperative and features observable actions such as input/output,
traces of observable events are crucial to state and prove
observational equivalence results.

In this section, following the second
author's work \cite{Grall-phd}, we show how to extend the big-step
semantics of section~\ref{s-big-step} so that they produce
not only the outcome of an evaluation (final
value or divergence), but also a (possibly infinite) execution trace.

\subsection{Traces}

The traces we consider are finite or infinite sequences of terms
representing the intermediate reducts of the source program.
\begin{syntaxleft}
\syntaxclass{Finite traces:} &
t & ::= & \epsilon \alt a.t & \explanation{(inductive interpretation)}
\\
\syntaxclass{Infinite traces:} &
T & ::= & a.T & \explanation{(coinductive interpretation)}
\end{syntaxleft}%
By abuse of notation, we write $t.t'$ and $t.T$ for the concatenation
of a finite trace $t$ and a finite or infinite trace.  Concatenation
is associative and $\epsilon$ is a neutral element for concatenation.

If $t = a_1.a_2\ldots a_n$ is a finite trace, we define the left
application $t~b$ of this trace to a term $b$ and the right
application $v~t$ of a value $v$ to this trace as follows:
\begin{eqnarray*}
t~b & = & (a_1~b).(a_2~b)\ldots(a_n~b) \\
v~t & = & (v~a_1).(v~a_2)\ldots(v~a_n)
\end{eqnarray*}
We similarly define  the applications $T~b$ and $v~T$ where $T$ is an
infinite trace.

We define bisimilarity between infinite traces, written $T_1 \bisim T_2$,
by the following coinductive rule:
$$\iruledoublebasic{
         T_1 \bisim T_2
}{%      --------------------------
         a.T_1 \bisim a.T_2
}$$
Concatenation and application of traces are compatible with
bisimilarity.

In set theory, bisimilarity is equivalent to equality.  In Coq's
constructive logic, bisimilarity is coarser than equality: there
exists infinite traces that are bisimilar but cannot be proved equal
\cite[chap. 13]{Bertot-Casteran-Coqart}.  Some of the following
results require the use of bisimilarity instead of equality in
definitions and statements, in order to be provable in Coq.  

\subsection{Small-step semantics with traces}

While our objective is to instrument big-step semantics to produce
execution traces, we start by doing this for the small-step semantics,
which is easier and helps us define precisely the traces we expect for
an execution.  For a finite reduction sequence
$a_1 \redonename a_2 \redonename \cdots \redonename a_{n-1} \redonename a_n$,
the expected (finite) trace is $t = a_1.a_2\ldots a_{n-1}$, that is, the
initial term and its intermediate reducts but not the final term.
Equivalently, the trace comprises the source terms for all reduction
steps performed in the sequence.  This is formalized by the following
rules for the predicate $\tred a t {a'}$ (read: ``$a$ reduces in zero,
one or several steps to $a'$ with trace $t$'').
\begin{ruleset}
\srule{\tred a \epsilon a }
\irule{
       \redone a {a'} \and \tred {a'} t {b}
}{%    ------------------------------
       \tred a {a.t} {b}
}
\end{ruleset}
For an infinite reduction sequence $a_1 \redonename \ldots \redonename
a_n \redonename \ldots$, the expected (infinite) trace is
$T = a_1 \ldots a_n \ldots$  This is captured by the following
coinductive rule defining the predicate $\tredinf a T$ (read: ``$a$
reduces infinitely with trace $T$'').
$$
\iruledoublebasic{
       \redone a b \and \tredinf b T
}{%    ------------------------
       \tredinf a {a.T}
}
$$
It is intuitively clear that the small-step semantics with traces is
a refinement of that without traces.  We now formalize this intuition,
which is not obvious to prove constructively in the case of infinite
reductions.

\begin{lemma} \label{tred-red}
$\red a b$ if and only if $\exists t,~ \tred a t b$.
\end{lemma}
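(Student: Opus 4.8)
The plan is to prove the biconditional in Lemma~\ref{tred-red} by establishing each direction separately, mirroring the inductive structures of the two relations involved. The ``if'' direction---that $\tred a t b$ for some $t$ implies $\red a b$---is the straightforward one: I would proceed by induction on the derivation of $\tred a t b$. In the base case, the rule $\tred a \epsilon a$ gives $a = b$, and we conclude with the reflexivity rule $\red a a$. In the inductive case, the derivation ends with a step $\redone a {a'}$ together with a subderivation $\tred {a'} t {b}$; the induction hypothesis yields $\red {a'} b$, and prepending the reduction step via the transitivity rule for $\redname$ gives $\red a b$. This direction is essentially a structural erasure of the trace component, so no obstacle arises.

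For the ``only if'' direction, I would prove that $\red a b$ implies $\exists t,~ \tred a t b$ by induction on the derivation of $\red a b$. In the base case $\red a a$, I take $t = \epsilon$ and apply the rule $\tred a \epsilon a$. In the inductive case, the derivation of $\red a b$ consists of a first step $\redone a {a'}$ followed by $\red {a'} b$; the induction hypothesis provides a finite trace $t'$ with $\tred {a'} {t'} b$, and I then form the trace $t = a.t'$ and apply the second trace rule, using the single step $\redone a {a'}$ as its first premise and $\tred {a'} {t'} b$ as its second. Since both $\redname$ and $\tred{}{}{}$ are interpreted inductively and their rule structures are in exact correspondence (the finite trace simply records the source term at each step), both inductions go through routinely.

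I do not expect a genuine obstacle here, since this lemma concerns only finite reductions and finite traces, both defined inductively. The only point requiring mild care is the correct threading of the existential witness in the ``only if'' direction: the trace must be built incrementally as $a.t'$ rather than asserted wholesale, so that the guardedness-free inductive rule applies cleanly. It is worth noting that the statement deliberately avoids the infinite case; the authors flag that the analogous result for $\tredinf{}{}$ and $\redinfname$ ``is not obvious to prove constructively,'' which suggests the infinite version would require coinduction and possibly bisimilarity (given the remarks that bisimilarity is coarser than equality in Coq's constructive setting). For the present finite lemma, however, the two simple inductions suffice, and I would expect the Coq formalization to follow the same two-case structure with no appeal to classical axioms.
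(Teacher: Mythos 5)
Your proof is correct and matches the paper's approach exactly: the paper's own proof is simply ``straightforward by induction over the reduction sequences $\red a b$ and $\tred a t b$,'' which is precisely the pair of inductions you carry out in detail, including the incremental construction of the trace witness $a.t'$ in the ``only if'' direction.
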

\begin{constrproof}
Straightforward by induction over the reduction sequences
$\red a b$ and $\tred a t b$.
\end{constrproof}

\begin{lemma} \label{tredinf-redinf}
$\redinf a$ if and only if $\exists T, ~ \tredinf a T$.
\end{lemma}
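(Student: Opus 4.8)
The plan is to prove the two implications separately; the ``only if'' direction is the one the surrounding text flags as delicate, so I would dispatch the easy direction first.

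For the ``if'' direction I would show $(\exists T,~\tredinf a T) \implies \redinf a$ by coinduction on $\redinfname$. Given a witness $T$ with $\tredinf a T$, inverting its single coinductive rule yields $T = a.T'$ together with a reduct $b$ such that $\redone a b$ and $\tredinf b {T'}$. Hence $\exists T',~\tredinf b {T'}$, the coinduction hypothesis gives $\redinf b$, and combining this with $\redone a b$ through the rule for $\redinfname$ produces $\redinf a$. This argument merely forgets the trace component and is completely constructive.

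The ``only if'' direction $\redinf a \implies \exists T,~\tredinf a T$ is the crux. A single coinduction on the whole implication does not go through constructively: the witness $T$ is bound by an existential sitting \emph{outside} the coinductive predicate, so the coinduction hypothesis cannot be used in a guarded (productive) position --- one would have to emit part of $T$ before being allowed to recurse, yet $T$ is not yet available. The way out is to \emph{decouple} the construction of the trace from the verification that it is correct. First I would build the witness trace corecursively, prepending the current term to the trace generated from the tail of the infinite reduction; because one-step reduction is deterministic (lemma~\ref{red1-determ}), the next reduct is uniquely determined, so the trace is canonical and the corecursion is productive, the recursive call occurring strictly underneath the leading $a.(-)$ constructor that guards it. With this trace $T$ fixed, I would then prove $\tredinf a T$ by a separate coinduction: unfolding $T$ exposes its head $a$ and its tail, the premise $\redone a b$ is the reduction step used to generate that head, and the premise $\tredinf b {T'}$ for the tail is discharged by the coinduction hypothesis.

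I expect the trace construction itself to be the main obstacle, and this is exactly what the text means by ``not obvious to prove constructively''. The subtlety is that a proof of $\redinf a$ cannot naively be taken apart to extract the sequence of reducts as a trace; the construction must be arranged as a guarded corecursion that the productivity check accepts, which is precisely why separating ``build the trace'' from ``prove the predicate'' is essential rather than cosmetic. Once the trace is available as an independent object, the two coinductions above are routine.
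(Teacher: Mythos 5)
Your ``if'' direction is correct and is exactly the paper's (an easy coinduction that forgets the trace), and your overall plan for the ``only if'' direction --- separate the corecursive \emph{construction} of the trace from the coinductive \emph{proof} that it is correct --- is also the paper's plan. The gap is in the construction itself, which you rightly identify as the crux. You propose to build $T$ by corecursion over ``the tail of the infinite reduction'', i.e.\ by destructing the derivation of $\redinf a$ to read off the next reduct. Constructively, this is precisely what is unavailable: in the Coq setting of this paper, $\redinf a$ is a coinductive \emph{proposition} while $T$ is coinductive \emph{data}, and the elimination restriction forbids computing data by case analysis on a proof of a proposition --- the same restriction that makes the existential-outside-the-coinductive-predicate problem real in the first place. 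Moreover, your appeal to lemma~\ref{red1-determ} is misplaced: determinism says the reduct is unique \emph{if it exists}, but provides no way to compute it, and productivity of a corecursive definition comes from guardedness, not from uniqueness of the object being defined.

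The ingredient actually needed is \emph{decidability} of one-step reduction. The paper first defines, by induction over the term $a$, a total computable function ${\cal R}$ from terms to optional terms such that ${\cal R}(a) = \texttt{Some}(b)$ if $\redone a b$ and ${\cal R}(a) = \texttt{None}$ if $\notred a$. With ${\cal R}$ in hand, the trace ${\cal T}(a)$ is defined by guarded corecursion on the term alone: ${\cal T}(a) = a.{\cal T}(b)$ when ${\cal R}(a) = \texttt{Some}(b)$, and ${\cal T}(a) = a.{\cal T}(a)$ in the stuck case, so that ${\cal T}$ is total on all terms, diverging or not. Only then does one prove $\redinf a \implies \tredinf a {{\cal T}(a)}$ by coinduction, using the fact that ${\cal T}(a) = a.{\cal T}(b)$ whenever $\redone a b$ --- and it is here, in identifying the output of ${\cal R}$ with the reduct appearing in the derivation of $\redinf a$, that determinism does its (minor) work. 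So your argument becomes correct once ``the trace generated from the tail of the infinite reduction'' is replaced by a trace generated from the term itself via a decidable reduction function.
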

\begin{constrproof}
The ``if'' part is an easy proof by coinduction.  The ``only if'' part
is more involved: since the conclusion $\exists T, ~ \tredinf a T$ is
not a coinductively-defined predicate, we cannot reason directly by
coinduction.  Instead, we must construct explicitly a suitable
infinite trace $T$.  To this end, we first define a reduction function
$\cal R$ from terms to optional terms that is equivalent to the one-step
reduction predicate, that is
$$
{\cal R}(a) = \cases{ {\tt Some}(b), & if $\redone a b$; \cr
                      {\tt None},    & if $\notred a$. \cr}
$$
This function is total (by induction over $a$), therefore
proving that one-step reduction is decidable.  Next, to
every term $a$ we associate an infinite trace ${\cal T}(a)$ of all the
successive reducts of $a$.  This trace is defined, by guarded
corecursion, as
$$ 
{\cal T}(a) = \cases{
       a.{\cal T}(b), & if ${\cal R}(a) = {\tt Some}(b)$; \cr
       a.{\cal T}(a), & if ${\cal R}(a) = {\tt None}$. \cr
}$$
We then show that $\redinf a$ implies $\tredinf a {{\cal T}(a)}$.
This follows by coinduction from the fact that ${\cal T}(a) = a.{\cal T}(b)$
whenever $\redone a b$.
\end{constrproof}

As a corollary, we obtain the following analogue of
lemma~\ref{red-or-redinf}.

\begin{lemma} \label{tred-or-tredinf}
For all terms $a$, either there exist a term $b$ and a trace $t$ such that 
$\tred a t b$ and $\notred b$, or there exists an infinite trace $T$ such that
$\tredinf a T$.
\end{lemma}
\begin{classicproof}
Follows from lemmas \ref{red-or-redinf}, \ref{tred-red}
and \ref{tredinf-redinf}. 
\end{classicproof}

Additionally, the trace-based reduction relations are deterministic up
to bisimilarity between infinite traces.  This is an immediate
consequence of the determinism of one-step reductions
(lemma~\ref{red1-determ}).

\begin{lemma} \label{tred-determ}
If $\tred a {t_1} {v_1}$ and $\tred a {t_2} {v_2}$, then $t_1 =
t_2$ and $v_1 = v_2$.
\end{lemma}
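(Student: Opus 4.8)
The plan is to reduce everything to the determinism of single steps (lemma~\ref{red1-determ}) and to proceed exactly as in the other determinism proofs of the paper: by induction on the derivation of $\tred a {t_1} {v_1}$ with a case analysis on the derivation of $\tred a {t_2} {v_2}$. Since each trace-based reduction is built from only two rules, the reflexive axiom $\tred a \epsilon a$ and the step rule combining $\redone a {a'}$ with $\tred {a'} t b$, this splits into four combinations, which I would handle by walking the two reduction sequences forward one step at a time. Because the traces here are finite, I expect to work with honest equality throughout, so no appeal to bisimilarity is needed.

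First I would dispose of the two aligned combinations. If both derivations use the reflexive axiom, then $t_1 = t_2 = \epsilon$ and $v_1 = v_2 = a$, and there is nothing more to prove. If both take a step, we have $\redone a {a_1'}$ with $\tred {a_1'} {t_1'} {v_1}$ and $\redone a {a_2'}$ with $\tred {a_2'} {t_2'} {v_2}$; lemma~\ref{red1-determ} forces $a_1' = a_2'$, so the induction hypothesis applies to the two residual derivations issued from this common term and yields $t_1' = t_2'$ and $v_1 = v_2$. Prefixing the shared head $a$ then gives $t_1 = a.t_1' = a.t_2' = t_2$, closing the case.

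The crux, and the only step I expect to require more than bookkeeping, is the pair of mixed combinations in which one derivation has already stopped (via the reflexive axiom, so its endpoint is $a$ reached with the empty trace) while the other still performs a reduction $\redone a {a'}$. These must be shown impossible, but the reflexive axiom $\tred a \epsilon a$ is available for \emph{every} term $a$, so stopping does not by itself preclude a further step. The configuration is ruled out only because a completed reduction ends at a value: here the metavariables $v_1, v_2$ range over values, and a value --- a constant or an abstraction --- is never an application, so none of the three reduction rules ($\redonename$-$\beta$, $\redonename$-app-l, $\redonename$-app-r) applies to it. Thus a stopped derivation forces $a = v_1$ with $\isvalue {v_1}$, contradicting $\redone a {a'}$, and the mixed cases cannot arise. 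Identifying this irreducibility-of-values ingredient is the real content of the proof; once it is in place, the induction goes through mechanically from lemma~\ref{red1-determ}.
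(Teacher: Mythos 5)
Your proof is correct and takes essentially the same route as the paper, which gives no detailed argument at all: it merely states the lemma as an immediate consequence of the determinism of one-step reductions (lemma~\ref{red1-determ}), and your induction with case analysis on the two derivations is the natural expansion of that remark. Your identification of the mixed cases as the real content --- that lemma~\ref{red1-determ} alone does not suffice, and one must also use the fact that the endpoints $v_1, v_2$ are values and hence irreducible, since the reflexive rule $\tred a \epsilon a$ is available for every term --- is precisely the detail the paper leaves implicit, and it is genuinely needed for the statement to hold.
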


\begin{lemma} \label{tredinf-determ}
If $\tredinf a {T_1}$ and $\tredinf a {T_2}$, then $T_1 \bisim T_2$.
\end{lemma}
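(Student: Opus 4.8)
The plan is to prove the statement directly by coinduction on the definition of bisimilarity, with the determinism of one-step reduction (lemma~\ref{red1-determ}) as the essential ingredient. I would take as coinduction hypothesis the statement itself: for all $a$, $T_1$, $T_2$, if $\tredinf a {T_1}$ and $\tredinf a {T_2}$, then $T_1 \bisim T_2$. The goal is to produce a bisimilarity derivation, so the proof must be productive, emitting one application of the bisimilarity constructor before appealing to the hypothesis.

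First I would invert the two premises $\tredinf a {T_1}$ and $\tredinf a {T_2}$ against the single coinductive rule defining $\tredinf a T$. Each inversion exposes the head of the trace together with one reduction step: there exist terms $b_1$, $b_2$ and traces $T_1'$, $T_2'$ such that $T_1 = a.T_1'$ with $\redone a {b_1}$ and $\tredinf {b_1} {T_1'}$, and symmetrically $T_2 = a.T_2'$ with $\redone a {b_2}$ and $\tredinf {b_2} {T_2'}$. Crucially, both traces necessarily begin with the same head $a$, namely the source term itself, so the heads already agree and the bisimilarity rule is applicable.

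The decisive step is to align the two continuations. Since $\redone a {b_1}$ and $\redone a {b_2}$ both hold, lemma~\ref{red1-determ} forces $b_1 = b_2$; call this common reduct $b$. Consequently $\tredinf b {T_1'}$ and $\tredinf b {T_2'}$ are two infinite reduction traces issuing from the \emph{same} term $b$, which is precisely the shape required to invoke the coinduction hypothesis.

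I would then conclude by applying the bisimilarity constructor: to establish $a.T_1' \bisim a.T_2'$ it suffices to prove $T_1' \bisim T_2'$, and this follows from the coinduction hypothesis applied to $\tredinf b {T_1'}$ and $\tredinf b {T_2'}$. The argument is guarded, since the result is a constructor application whose only recursive ingredient is the coinduction hypothesis used to discharge the premise $T_1' \bisim T_2'$. The one point that matters is that determinism is indispensable here: without lemma~\ref{red1-determ} the two tails could be traces of reductions from distinct terms $b_1 \ne b_2$, for which no bisimilarity need hold. Everything else is routine, and I expect no real obstacle beyond correctly inverting the coinductive definitions to bring the two reduction steps into view.
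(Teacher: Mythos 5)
Your proof is correct and is exactly the argument the paper has in mind: the paper dispatches this lemma (together with lemma~\ref{tred-determ}) with the single remark that it is ``an immediate consequence of the determinism of one-step reductions (lemma~\ref{red1-determ})'', and your coinduction --- invert both derivations of $\tredinfname$, use lemma~\ref{red1-determ} to identify the two reducts, then apply the bisimilarity constructor with the coinduction hypothesis as its premise --- is precisely the guarded expansion of that remark.
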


Note that the stronger conclusion $T_1 = T_2$ is not provable in Coq.
Another consequence of the determinism of one-step reductions is the
following obvious decomposition property for infinite reductions.

\begin{lemma} \label{tredinf-decompose}
If $\tredinf a T$ and $\tred a t b$, there exists $T'$ such that
$\tredinf b {T'}$ and $T = t.T'$.
\end{lemma}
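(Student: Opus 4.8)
The plan is to proceed by structural induction on the derivation of the finite trace reduction $\tred a t b$, which is an inductively-defined predicate, and to use determinism of one-step reduction (lemma~\ref{red1-determ}) as the key ingredient for aligning the infinite trace $T$ with the finite prefix $t$. The infinite reduction hypothesis $\tredinf a T$ will be consumed by inversion at each step, peeling off one term from the head of $T$ and matching it against the head of $t$.

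In the base case, the derivation of $\tred a t b$ is the reflexivity rule $\tred a \epsilon a$, so $t = \epsilon$ and $b = a$. I would simply take $T' = T$: then $\tredinf b {T'}$ is exactly the hypothesis $\tredinf a T$, and $T = \epsilon.T' = t.T'$ holds since $\epsilon$ is a left neutral element for concatenation. In the inductive case, the derivation is built from $\redone a {a'}$ and $\tred {a'} {t'} b$, with $t = a.t'$. Inverting the coinductive hypothesis $\tredinf a T$ — which has a single defining rule — yields a term $c$ and an infinite trace $T_1$ with $\redone a c$, $\tredinf c {T_1}$, and $T = a.T_1$. Determinism of one-step reduction (lemma~\ref{red1-determ}) applied to $\redone a {a'}$ and $\redone a c$ forces $c = a'$, hence $\tredinf {a'} {T_1}$. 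Applying the induction hypothesis to $\tred {a'} {t'} b$ and $\tredinf {a'} {T_1}$ produces a trace $T'$ with $\tredinf b {T'}$ and $T_1 = t'.T'$. Assembling, $T = a.T_1 = a.(t'.T') = (a.t').T' = t.T'$, where the third step uses associativity of concatenation.

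The point that deserves care is that the conclusion asserts a genuine equality $T = t.T'$, not merely bisimilarity $T \bisim t.T'$, despite the infinite traces involved. This is sound here precisely because the induction runs over the finite trace $t$: the concatenation $t.T'$ prepends only finitely many terms to $T'$, so the equality is established by a finite sequence of head-matching steps, each justified by the structural equality $T = a.T_1$ extracted from inversion of the coinductive predicate. No coinductive reasoning about the tail is needed to prove the equality, which is why the usual equality-versus-bisimilarity gap (as arises in lemma~\ref{tredinf-determ}) does not obstruct the argument. The main obstacle, then, is purely the bookkeeping of the inversion step: one must destruct $\tredinf a T$ to expose its head and tail, and then invoke determinism to identify the exposed intermediate term with the $a'$ supplied by the finite reduction before the induction hypothesis becomes applicable.
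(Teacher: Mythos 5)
Your proof is correct and is exactly the argument the paper intends: the paper states this lemma without an explicit proof, presenting it as an immediate consequence of the determinism of one-step reduction (lemma~\ref{red1-determ}), which is precisely the key ingredient of your induction on the derivation of $\tred a t b$ combined with inversion of $\tredinf a T$. Your side remark is also well taken: the genuine equality $T = t.T'$ (rather than mere bisimilarity) is indeed attainable because the induction exposes only finitely many head decompositions of $T$, each as a propositional equality obtained by inversion, so no coinductive reasoning about equality of infinite tails is required.
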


\subsection{Big-step semantics with traces}

We now add traces to the big-step definitions of evaluation and
divergence.  The corresponding predicates are $\teval a t v$
(``$a$ evaluates to $v$ with finite trace $t$'') and
$\tevalinf a T$ (``$a$ diverges with infinite trace $T$'').

\begin{ruleset}
\srulenumber{$\evalname$-const}{  \teval c \epsilon c   }
\srulenumber{$\evalname$-fun}{    \teval {\lambda x.a} \epsilon {\lambda x.a}  }%
\irulenumber{$\evalname$-app}{
    \teval {a_1} {t_1} {\lambda x.b} \and
    \teval {a_2} {t_2} {v_2} \and
    \teval {b[x \becomes v_2]} {t_3} v \\
    t = (t_1~a_2).((\lambda x.b)~t_2).((\lambda x.b)~v_2).t_3
}{% ----------------------------------------
    \teval {a_1~a_2} {t} v
}
\\
\irulenumberdouble{$\evalinfname$-app-l}{
         \tevalinf {a_1} {T_1} \and T \bisim T_1~a_2
}{%      -----------------------------------
         \tevalinf {a_1~a_2} {T}
}
\irulenumberdouble{$\evalinfname$-app-r}{
         \teval {a_1} {t_1} v \and \tevalinf {a_2} {T_2} \and
         T \bisim (t_1~a_2).(v~T_2)
}{%      ------------------------------
         \tevalinf {a_1~a_2} {T}
}
\irulenumberdouble{$\evalinfname$-app-f}{
         \teval {a_1} {t_1} {\lambda x.b} \and \teval {a_2} {t_2} {v_2} \and 
         \tevalinf {b[x \becomes v_2]} {T_3} \\
         T \bisim (t_1~a_2).((\lambda x.b)~t_2).((\lambda x.b)~v_2).T_3
}{%      -------------------------------------------
         \tevalinf {a_1~a_2} {T}
}
\end{ruleset}

The construction of the trace in the rules for applications 
is justified as follows.  Assume, for instance,
$\teval {a_1} {t_1} {\lambda x.b}$ and
$\teval {a_2} {t_2} {v_2}$.  The application $a_1~a_2$ performs one
$\beta$-reduction $\redone {(\lambda x.b)~v_2} {b[x \becomes v_2]}$
in addition to those coming from the evaluations of the premises of
the rule.  The source term for this reduction, $(\lambda x.b)~v_2$, is
therefore added to the trace.  It is preceded by $t_1~a_2$ (the trace
for $a_1$ put into a left application context $[\,]~a_2$) and by
$(\lambda x.b)~t_2$ (the trace for $a_2$ put into a right application
context $(\lambda x.b)~[\,]$).  The source of the $\beta$-reduction is
then followed by the trace corresponding to the evaluation of the
function body $b[x \becomes v_2]$.

Another point to note is the use of bisimilarity $T \bisim \ldots$
instead of equality $T = \ldots$ in the coinductive rules defining
$\evalinfname$.  This allows traces to be replaced by bisimilar traces
at every inference step, therefore enabling us to prove more
statements about $\evalinfname$ within the limits of Coq's coinductive proofs.
(For instance, the proof of theorem~\ref{tredinf-tevalinf} no longer
goes through if $\evalinfname$ is defined with equalities between traces
instead of bisimilarities.) This subtle point is moot in set theory,
where bisimilarity is equivalent to equality.

\begin{lemma} \label{tevalinf-omega}
$\tevalinf \omega T$ holds where $T$ is the infinite trace
$\omega.\omega.\omega\ldots$
\end{lemma}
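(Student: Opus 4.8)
The plan is to prove $\tevalinf \omega T$ by coinduction, in direct analogy with lemma~\ref{evalinf-omega} but now carrying the trace. Recall $\delta = \lambda x.~x~x$ and $\omega = \delta~\delta$. Evaluating $\omega$ amounts to reducing $\delta~\delta$ to the body $(x~x)[x \becomes \delta] = \delta~\delta = \omega$, which diverges again; this self-similarity is exactly what rule ($\evalinfname$-app-f) captures. So I would take $\tevalinf \omega T$ as the coinduction hypothesis and build a single application of ($\evalinfname$-app-f) whose divergence premise is discharged by that hypothesis.

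Concretely, I would instantiate the rule with $a_1 = a_2 = \delta$, $\lambda x.b = \delta$ (so $b = x~x$) and $v_2 = \delta$. The first two premises, $\teval \delta {t_1} {\lambda x.b}$ and $\teval \delta {t_2} {v_2}$, hold by rule ($\evalname$-fun) with empty traces $t_1 = t_2 = \epsilon$, since $\delta$ is a value. Because $(x~x)[x \becomes \delta] = \omega$, the third premise is precisely $\tevalinf \omega {T_3}$, which the coinduction hypothesis supplies with $T_3 = T$. This is the productive step: the derivation is one use of the ($\evalinfname$-app-f) constructor whose divergence argument is the corecursive call, matching the guarded-equation pattern of section~\ref{s-methodology}.

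The step requiring the most care is the trace side-condition. With the instantiation above, the trace built by the rule is $(t_1~a_2).((\lambda x.b)~t_2).((\lambda x.b)~v_2).T_3$, which I would simplify using $\epsilon~a_2 = \epsilon$, $(\lambda x.b)~\epsilon = \epsilon$, the fact that $(\lambda x.b)~v_2 = \delta~\delta = \omega$ contributes the single term $\omega$, and the neutrality of $\epsilon$ for concatenation. This yields $\omega.T_3 = \omega.T$, so the side-condition to discharge is $T \bisim \omega.T$. The hard part, such as it is, lies here rather than in the structure of the derivation: it is the use of bisimilarity instead of equality in the coinductive rules that makes the argument close in Coq. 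Since $T = \omega.\omega.\omega\ldots$ unfolds to $\omega.T$, the two traces are bisimilar by one application of the coinductive rule defining $\bisim$; in set theory this is immediate because bisimilarity coincides with equality, and in Coq it succeeds precisely because we never demand that the produced trace be syntactically identical to $T$.
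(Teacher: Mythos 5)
Your proof is correct and follows exactly the paper's approach: the paper's own proof is the one-line ``By coinduction, using rule ($\evalinfname$-app-f)'', and your expansion---instantiating the rule with $a_1 = a_2 = v_2 = \delta$, discharging the evaluation premises by ($\evalname$-fun) with empty traces, feeding the coinduction hypothesis to the divergence premise, and closing the trace side-condition $T \bisim \omega.T$ via bisimilarity rather than equality---is precisely the detail that proof elides.
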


\begin{proof} By coinduction, using rule ($\evalinfname$-app-f). \end{proof}

\subsection{Equivalence between the trace semantics}

We now show the equivalence between the big-step and small-step
semantics with traces, extending the results of section~\ref{s-small-step}.

\begin{theorem} \label{teval-tred-equiv}
$\teval a t v$ if and only if $\tred a t v$ and $\isvalue v$.
\end{theorem}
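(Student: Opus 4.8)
The plan is to mirror the structure of the proof of Theorem~\ref{eval-red-equiv}, which established the trace-free analogue $\eval a v \iff \red a v \land \isvalue v$, now tracking traces along the way. The statement is an ``if and only if'', so I would prove the two directions separately, and in each case proceed by induction over the relevant derivation.

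For the ``only if'' part ($\teval a t v \implies \tred a t v \land \isvalue v$), I would argue by induction on the derivation of $\teval a t v$. The cases ($\evalname$-const) and ($\evalname$-fun) are immediate: here $t = \epsilon$ and $v = a$ is already a value, and $\tred a \epsilon a$ holds by the reflexivity rule for $\tredname$. The only interesting case is ($\evalname$-app), where $a = a_1~a_2$ and $t = (t_1~a_2).((\lambda x.b)~t_2).((\lambda x.b)~v_2).t_3$. By the induction hypotheses I obtain $\tred {a_1} {t_1} {\lambda x.b}$, $\tred {a_2} {t_2} {v_2}$, and $\tred {b[x \becomes v_2]} {t_3} v$. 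The key step is to lift each of these trace reductions into the appropriate application context and splice them together. Concretely, I would first establish context-closure lemmas for $\tredname$ analogous to the context rules ($\redonename$-app-l) and ($\redonename$-app-r): namely that $\tred {a_1} {t_1} {\lambda x.b}$ implies $\tred {a_1~a_2} {t_1~a_2} {(\lambda x.b)~a_2}$, and that $\isvalue v_1 \land \tred {a_2} {t_2} {v_2}$ implies $\tred {v_1~a_2} {v_1~t_2} {v_1~v_2}$. Each of these is a straightforward induction on the given trace-reduction derivation, with the context rule applied at each step so that the accumulated trace picks up exactly the expected application-context decoration. Chaining these two, then performing the single $\beta$-step $\redone {(\lambda x.b)~v_2} {b[x \becomes v_2]}$ (which contributes the term $(\lambda x.b)~v_2$ to the trace), and finally appending the reduction $\tred {b[x \becomes v_2]} {t_3} v$, yields a reduction $\tred {a_1~a_2} t v$ with exactly the trace prescribed by rule ($\evalname$-app). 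The associativity of concatenation and the definitions of $t_1~a_2$ and $(\lambda x.b)~t_2$ make the traces match up.

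For the ``if'' part ($\tred a t v \land \isvalue v \implies \teval a t v$), I would follow the pattern of Theorem~\ref{eval-red-equiv} and prove two auxiliary lemmas: (1) $\teval v \epsilon v$ whenever $\isvalue v$, which is immediate from ($\evalname$-const)/($\evalname$-fun); and (2) a single-step expansion lemma stating that if $\redone a b$ and $\teval b {t} v$, then $\teval a {a.t} v$. With these in hand, the result follows by induction on the derivation of $\tred a t v$: the base case $\tred a \epsilon a$ uses lemma~(1), and the inductive case, where $\tred a {a.t'} v$ decomposes as $\redone a {a'}$ followed by $\tred {a'} {t'} v$, uses the induction hypothesis $\teval {a'} {t'} v$ together with lemma~(2) to obtain $\teval a {a.t'} v$.

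The main obstacle is lemma~(2), the single-step expansion property for traces. In the trace-free setting the analogous lemma ``$\redone a b$ and $\eval b v$ imply $\eval a v$'' is proved by induction on the structure of the reduction $\redone a b$, but here the evaluation trace must be reconstructed so that prepending a single reduction step prepends exactly the source term $a$ to the trace. The delicate point is the context cases: when $\redone a b$ is derived by ($\redonename$-app-l) or ($\redonename$-app-r), the reduction happens inside a subterm, and one must verify that the trace decorations in the ($\evalname$-app) rule (the $t_1~a_2$, $(\lambda x.b)~t_2$, and $(\lambda x.b)~v_2$ components) absorb the extra step correctly, yielding precisely $a.t$ and not some bisimilar-but-unequal trace. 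Since this part of the theorem deals only with finite traces, equality (rather than merely bisimilarity) should suffice here, but the bookkeeping of trace-concatenation identities is where the care is needed.
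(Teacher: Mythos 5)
Your proposal is correct and follows essentially the same route as the paper: the ``only if'' direction by induction on the derivation of $\teval a t v$ (with the context-lifting lemmas for $\tredname$ that the paper leaves implicit in its ``easy induction''), and the ``if'' direction via exactly the paper's two auxiliary lemmas --- $\teval v \epsilon v$ for values, and the single-step expansion $\redone a b \land \teval b t v \implies \teval a {a.t} v$ --- followed by induction on the derivation of $\tred a t v$. Your attention to how the trace decorations in rule ($\evalname$-app) absorb the prepended step (e.g.\ $(a_1.t_1)~a_2 = (a_1~a_2).(t_1~a_2)$) is precisely the bookkeeping the paper's terse proof glosses over, and equality of finite traces does indeed suffice there.
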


\begin{constrproof}  The ``only if'' part is an easy induction on
the derivation of $\teval a t v$.
For the ``if'' part, we first show the following two
lemmas: (1) $\teval v \epsilon v$ if $\isvalue v$, and 
(2) $\teval a {a.t} v$ if $\redone a b$ and $\teval b t v$.
The result follows by induction on the
derivation of $\tred a t v$.
\end{constrproof}

\begin{theorem} \label{tevalinf-tredinf}
$\tevalinf a T$ implies $\tredinf a T$.
\end{theorem}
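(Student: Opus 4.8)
The plan is to mirror the two-stage strategy used for the ``only if'' direction of Theorem~\ref{evalinf-redinf-equiv}: first isolate a one-step decomposition lemma, then close the argument by coinduction. The lemma I would prove is that if $\tevalinf{a}{T}$, then there exist a term $b$ and an infinite trace $T'$ such that $\redone{a}{b}$, $\tevalinf{b}{T'}$ and $T \bisim a.T'$. This exposes a single small-step reduction out of $a$ together with a divergent continuation whose trace is $T'$, which is exactly the shape consumed by the coinductive rule defining trace-based infinite reduction.

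I would prove the one-step lemma by structural induction on $a$, with $a = a_1~a_2$ the only non-vacuous case, proceeding by inversion on the last rule of the derivation of $\tevalinf{a_1~a_2}{T}$. If the rule is ($\evalinfname$-app-l), then $a_1$ diverges and I apply the induction hypothesis to $a_1$ to obtain its first reduction $\redone{a_1}{b_1}$, lift it to $\redone{a_1~a_2}{b_1~a_2}$ by ($\redonename$-app-l), and re-apply ($\evalinfname$-app-l) to the residual. If the rule is ($\evalinfname$-app-r), then $a_1$ terminates with finite trace $t_1$ and value $v$ while $a_2$ diverges; here I case-split on whether $t_1 = \epsilon$, in which case $a_1 = v$ is already a value and I apply the induction hypothesis to $a_2$ and reduce by ($\redonename$-app-r), or $t_1 \neq \epsilon$, in which case Theorem~\ref{teval-tred-equiv} exposes a genuine first step $\redone{a_1}{b_1}$ and I reduce by ($\redonename$-app-l). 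If the rule is ($\evalinfname$-app-f), then both $a_1$ and $a_2$ terminate and the body diverges; I split first on $t_1 = \epsilon$ and then on $t_2 = \epsilon$, reducing the leading redex by ($\redonename$-app-l) or ($\redonename$-app-r) when a trace is non-empty, and performing the $\beta$-step $\redone{(\lambda x.b)~v_2}{b[x \becomes v_2]}$ by ($\redonename$-$\beta$) once both arguments are values. In every subcase the required bisimilarity $T \bisim a.T'$ follows from the trace-context identities such as $\epsilon~a_2 = \epsilon$ and $(a_1.t_1')~a_2 = (a_1~a_2).(t_1'~a_2)$, together with the compatibility of trace application and concatenation with $\bisim$.

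With the one-step lemma in hand, the theorem follows by coinduction. Given $\tevalinf{a}{T}$, the lemma yields $\redone{a}{b}$, $\tevalinf{b}{T'}$ and $T \bisim a.T'$; inverting the bisimilarity forces $T = a.T''$ with $T'' \bisim T'$, so that $a$ is the head of $T$, exactly as the rule for trace-based infinite reduction demands. A single non-coinductive inversion shows that $\tevalinf{b}{T'}$ is closed under bisimilarity in its trace, since every rule defining $\evalinfname$ carries its trace constraint up to $\bisim$; this yields $\tevalinf{b}{T''}$. The coinduction hypothesis then produces $\tredinf{b}{T''}$, and one application of the infinite-reduction rule delivers $\tredinf{a}{a.T''}$, which is $\tredinf{a}{T}$. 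Guardedness holds because the coinduction hypothesis is used only underneath the infinite-reduction constructor.

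The main obstacle will be the one-step decomposition lemma, and specifically the interplay in the app-r and app-f cases between the finite evaluation traces $t_1, t_2$ of the converging sub-terms and the single small-step reduction that must be produced: one must repeatedly distinguish whether a converging sub-term is already a value (empty trace) or still takes a first reduction step, and in the latter case decompose its finite trace reduction via Theorem~\ref{teval-tred-equiv} while keeping the overall trace bisimilar to $a.T'$. An alternative route --- forgetting traces to obtain $\evalinf a$, invoking Theorem~\ref{evalinf-redinf-equiv} and Lemma~\ref{tredinf-redinf} to produce some $\tredinf{a}{T''}$, and then matching $T$ against $T''$ --- is available, but it merely relocates the difficulty into proving the two traces bisimilar through Lemma~\ref{tredinf-determ}, so I would favour the direct coinductive argument above.
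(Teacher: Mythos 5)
Your proof is correct, and its first stage is exactly the paper's: the same one-step decomposition lemma (from $\tevalinf a T$ obtain $b$ and $T'$ with $\redone a b$, $\tevalinf b {T'}$ and $T \bisim a.T'$), proved by structural induction on $a$; your case analysis on empty versus non-empty traces via theorem~\ref{teval-tred-equiv} is a faithful elaboration of what the paper leaves implicit. Where you genuinely depart is in how the bisimilarity slack is discharged afterwards. The paper defines an auxiliary coinductive relation $\redinfbisimname$ (infinite reduction relaxed up to bisimilarity, with the single rule: from $\redone a b$, $\tredinfbisim b {T'}$ and $T \bisim a.T'$ conclude $\tredinfbisim a T$) and then performs two coinductive proofs: $\tevalinf a T$ implies $\tredinfbisim a T$, and $\tredinfbisim a T$ implies $\tredinf a T$. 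You instead run a single coinduction targeting $\tredinf a T$ directly and absorb the slack on the hypothesis side: invert the bisimilarity to expose $T = a.T''$ with $T'' \bisim T'$, transport $\tevalinf b {T'}$ to $\tevalinf b {T''}$ by a non-coinductive inversion --- which is legitimate precisely because every $\evalinfname$-rule states its trace constraint up to $\bisim$, the design point the paper emphasizes just before lemma~\ref{tevalinf-omega} --- and then apply the $\tredinf$ constructor with the coinduction hypothesis as its guarded premise. Both devices solve the same mismatch, namely that the $\tredinf$ constructor demands the literal trace $a.T''$ while the decomposition lemma only provides it up to $\bisim$; yours is arguably the more economical, needing one coinductive proof and no auxiliary relation at the price of an easy closure lemma, and it loses nothing, since the paper's second coinduction must perform essentially the same bisimilarity inversion and transport on $\redinfbisimname$ itself. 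Your guardedness analysis is also sound: the coinduction hypothesis occurs exactly once, directly under the infinite-reduction constructor, with all inversions and lemma applications performed before it.
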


\begin{constrproof} We first show by induction on $a$ that
$\tevalinf a T$ implies the existence of $b$ and $T'$ such that
$\redone a b$ and $\tevalinf b T'$ and $T \bisim a.T'$.  We then
define the following variant $\redinfbisimname$ of the infinite
reduction predicate, by the coinductive inference rule
$$\iruledoublebasic{
        \redone a b \and \tredinfbisim b T' \and T \bisim a.T'
}{%     --------------------------------------------------
        \tredinfbisim a T
}$$
This variant enables us to replace the infinite trace $T$ by a
bisimilar one at every proof step, while remaining within the subset
of proofs that Coq accepts as productively coinductive.
We can therefore show that $\tevalinf a T$ implies $\tredinfbisim a T$
by coinduction, using the decomposition property stated earlier.
We conclude by proving that $\tredinfbisim a T$
implies $\tredinf a T$, again by coinduction.
\end{constrproof}

As a corollary of theorem~\ref{tevalinf-tredinf}, the big-step divergence
relation $\evalinfname$ is deterministic up to bisimilarity of the
traces.  It is interesting to note that we could not find a more
direct Coq proof of this fact.

\begin{lemma} \label{tevalinf-determ}
If $\tevalinf a {T_1}$ and $\tevalinf a {T_2}$, then $T_1 \bisim T_2$.
\end{lemma}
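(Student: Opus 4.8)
The plan is to avoid a direct coinductive argument on the two big-step divergence derivations and instead to route the proof through the small-step trace semantics, whose determinism up to bisimilarity is already in hand. First I would apply theorem~\ref{tevalinf-tredinf} to each hypothesis, obtaining $\tredinf a {T_1}$ from $\tevalinf a {T_1}$ and $\tredinf a {T_2}$ from $\tevalinf a {T_2}$. A single application of lemma~\ref{tredinf-determ} to these two infinite reduction sequences then yields $T_1 \bisim T_2$, which is exactly the desired conclusion. Thus the whole argument is a two-line corollary, once the machinery of section~\ref{s-traces} is available.

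The reason for taking this detour, rather than proving the statement directly, is the main obstacle. A direct proof would proceed by coinduction with a case analysis on the last rules of the two derivations of $\tevalinf a {T_1}$ and $\tevalinf a {T_2}$. When $a = a_1~a_2$, one would first have to argue, using the determinism of the finite trace evaluation relation (lemma~\ref{tred-determ}), that the two derivations commit to compatible rules among ($\evalinfname$-app-l), ($\evalinfname$-app-r) and ($\evalinfname$-app-f), so that the finite prefixes they prepend to the traces coincide up to bisimilarity; one would then appeal to the coinduction hypothesis on the common divergent continuation. The difficulty is that in Coq this use of the coinduction hypothesis sits \emph{underneath} the reasoning that reconciles the finite prefixes via associativity of concatenation and compatibility of concatenation with $\bisim$, so the recursive call is no longer in a syntactically guarded (productive) position and the proof is rejected. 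This is precisely the obstruction flagged by the remark that no more direct Coq proof was found.

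By contrast, the infinite small-step reduction-with-traces predicate is generated by a single rule that prepends exactly one term to the trace, so its determinism up to bisimilarity (lemma~\ref{tredinf-determ}) is a clean guarded coinduction, and theorem~\ref{tevalinf-tredinf} has already done the work of transferring each big-step divergence derivation to a small-step one carrying a bisimilar trace. Composing these two facts delivers $T_1 \bisim T_2$ without ever having to fight the guardedness checker on the big-step side.
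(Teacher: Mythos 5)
Your proof is correct and is exactly the paper's own argument: apply theorem~\ref{tevalinf-tredinf} to each hypothesis and then conclude with lemma~\ref{tredinf-determ}. Your additional discussion of why a direct guarded coinduction fails matches the paper's remark that no more direct Coq proof could be found.
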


\begin{proof} Follows from lemma \ref{tredinf-determ} and theorem
\ref{tevalinf-tredinf}.
\end{proof}

The converse of theorem~\ref{tevalinf-tredinf} relies on the following
inversion lemma for infinite reduction sequences starting with an
application.

\begin{lemma} \label{tredinf-inv}
Assume $\tredinf {a~b} T$.
\begin{enumerate}
\item If $\tredinf a T'$, then $T \bisim T'~b$.
\item If $\isvalue a$ and $\tredinf b T'$, then $T \bisim a~T'$.
\item If $\tred a t {a'}$, then
there exists $T'$ such that $\tredinf {a'~b} {T'}$ and $T = (t~b).T'$.
\item If $\isvalue a$ and $\tred b t {b'}$, then
there exists $T'$ such that $\tredinf {a~b'} {T'}$ and $T = (a~t).T'$.
\end{enumerate}
\end{lemma}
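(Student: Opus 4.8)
The plan is to reduce all four parts to two families of \emph{context-closure} properties, after which parts 1--2 follow from the determinism lemma \ref{tredinf-determ} and parts 3--4 from the decomposition lemma \ref{tredinf-decompose}. The guiding observation is that as long as $a$ has not reached a value, every reduction step of $a~b$ is a left-context step (rule $\redonename$-app-l) reducing $a$; and once $a$ is a value, every step reducing $b$ is a right-context step (rule $\redonename$-app-r). Thus the reductions of $a~b$ are just the reductions of $a$, and then of $b$, transported into their respective application contexts, and the same holds for the associated traces.

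First I would establish four auxiliary lemmas. For finite traces: (a) $\tred a t {a'}$ implies $\tred {a~b} {t~b} {a'~b}$, and (b) if $\isvalue a$ then $\tred b t {b'}$ implies $\tred {a~b} {a~t} {a~b'}$. Both follow by a routine induction on the derivation of the finite reduction, applying rule ($\redonename$-app-l) (resp. rule ($\redonename$-app-r)) at each step and unfolding the definitions of $t~b$ and $a~t$. For infinite traces: (c) $\tredinf a {T'}$ implies $\tredinf {a~b} {T'~b}$, and (d) if $\isvalue a$ then $\tredinf b {T'}$ implies $\tredinf {a~b} {a~T'}$. These are proved by coinduction: inverting $\tredinf a {T'}$ gives $\redone a {a_1}$ and $\tredinf {a_1} {T''}$ with $T' = a.T''$; rule ($\redonename$-app-l) lifts this to $\redone {a~b} {a_1~b}$, the coinduction hypothesis handles $\tredinf {a_1~b} {T''~b}$, and the head element $a~b$ is emitted, yielding $\tredinf {a~b} {(a~b).(T''~b)}$ with $(a~b).(T''~b) = T'~b$ definitionally.

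With these in hand the four cases are immediate. For part 1, property (c) gives $\tredinf {a~b} {T'~b}$; combining this with the hypothesis $\tredinf {a~b} T$ through lemma \ref{tredinf-determ} yields $T \bisim T'~b$. Part 2 is symmetric, using (d) and determinism to obtain $T \bisim a~T'$. For part 3, property (a) gives $\tred {a~b} {t~b} {a'~b}$; feeding this and the hypothesis $\tredinf {a~b} T$ into lemma \ref{tredinf-decompose} produces the required $T'$ with $\tredinf {a'~b} {T'}$ and $T = (t~b).T'$. Part 4 is symmetric, using (b) and decomposition.

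The main obstacle is the pair of coinductive properties (c) and (d): one must phrase the coinductive step so that the embedded reduction $\redone {a~b} {a_1~b}$ is visibly a constructor application emitting the head $a~b$ of the trace, so that Coq accepts the corecursion as productive. Once (c) and (d) are available, everything else is mechanical, and the split between the bisimilarity conclusions of parts 1--2 (inherited from \ref{tredinf-determ}) and the equality conclusions of parts 3--4 (inherited from \ref{tredinf-decompose}) falls out automatically.
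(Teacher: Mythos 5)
Your proposal is correct and matches the paper's own proof: parts 1--2 via the coinductively proved context-closure properties $\tredinf{a~b}{T'~b}$ and $\tredinf{a~b}{a~T'}$ combined with the determinism lemma~\ref{tredinf-determ}, and parts 3--4 via the lifting of finite reductions into application contexts combined with the decomposition lemma~\ref{tredinf-decompose}. The extra detail you give on making the corecursion productive is consistent with what the paper leaves implicit.
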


\begin{constrproof} For (1) and (2), we show by coinduction that
$\tredinf {a~b} {T'~b}$ and $\tredinf {a~b} {a~T'}$, respectively,
then conclude by lemma~\ref{tredinf-determ}.

Property (3) follows from the decomposition
lemma~\ref{tredinf-decompose} and the fact that
$\tred {a~b} {t~b} {a'~b}$ whenever $\tred a t {a'}$.
Similarly, property (4) follows from the decomposition
lemma~\ref{tredinf-decompose} and the fact that
$\tred {a~b} {a~t} {a~b'}$ if $\isvalue a$ and $\tred b t {b'}$.
\end{constrproof}

\begin{theorem} \label{tredinf-tevalinf}
$\tredinf a T$ implies $\tevalinf a T$.
\end{theorem}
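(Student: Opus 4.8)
The plan is to prove that $\tredinf a T$ implies $\tevalinf a T$ by coinduction on the derivation of $\tevalinf$, but since $\tevalinf$ is coinductive and $\tredinf$ is the hypothesis, I would structure the argument as a proof by coinduction where the coinduction hypothesis states exactly the implication I am trying to prove. The main difficulty is the same one that afflicted theorem~\ref{tevalinf-tredinf} in the other direction: the traces produced by the big-step rules are not syntactically equal to those of the small-step reduction, only bisimilar, so I expect to work up to bisimilarity throughout.

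First I would proceed by case analysis on the term $a$. The cases $a = x$, $a = c$, and $a = \lambda x.b$ all contradict the hypothesis $\tredinf a T$, since a variable, constant, or abstraction does not reduce, so there can be no infinite reduction sequence starting from it. This leaves the application case $a = a_1~a_2$. Here I would appeal to the disjunction provided by lemma~\ref{tred-or-tredinf} applied to $a_1$: either $a_1$ reduces infinitely, or $a_1$ reduces in finitely many steps to a normal form. In the first subcase, lemma~\ref{tredinf-inv}(1) gives $T \bisim T_1~a_2$ where $\tredinf {a_1} {T_1}$; applying the coinduction hypothesis to $a_1$ yields $\tevalinf {a_1} {T_1}$, and rule ($\evalinfname$-app-l) then delivers $\tevalinf {a_1~a_2} T$.

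In the second subcase, $a_1$ reduces to a value $v_1$ (it must be a value, not merely a normal form, since the reduction of $a_1~a_2$ continues), so by theorem~\ref{teval-tred-equiv} we have $\teval {a_1} {t_1} {v_1}$. I would then split on $v_1$: if $v_1$ is a constant, the application is stuck, contradicting $\tredinf {a_1~a_2} T$; hence $v_1 = \lambda x.b$. Now I apply lemma~\ref{tredinf-inv}(3) to peel off the reduction of $a_1$, obtaining an infinite reduction of $(\lambda x.b)~a_2$, and then lemma~\ref{tred-or-tredinf} to $a_2$. If $a_2$ reduces infinitely, lemma~\ref{tredinf-inv}(2) and (4) give the trace decomposition and rule ($\evalinfname$-app-r) applies, using theorem~\ref{teval-tred-equiv} for the $a_1$ premise and the coinduction hypothesis for $a_2$. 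Otherwise $a_2$ reduces to a value $v_2$, and after the single $\beta$-step $\redone {(\lambda x.b)~v_2} {b[x \becomes v_2]}$ the remaining infinite reduction is of the body $b[x \becomes v_2]$; rule ($\evalinfname$-app-f) then closes the case, with the coinduction hypothesis applied to $b[x \becomes v_2]$.

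The hard part will be managing the bisimilarity bookkeeping: at each application of an $\evalinfname$ rule the side condition is $T \bisim \ldots$, and I must assemble the pieces of $T$ delivered by the various parts of lemma~\ref{tredinf-inv} and the trace-decomposition lemma~\ref{tredinf-decompose} into a trace bisimilar to the one demanded by the rule, using associativity of concatenation and compatibility of application contexts with bisimilarity. Because the rules are stated with $\bisim$ rather than $=$, this rearrangement stays within what Coq accepts as productively coinductive, exactly as the deliberate design choice noted after the trace rules was meant to allow. I expect the variable-binding and substitution details in the ($\evalinfname$-app-f) case to require the most care, but no genuinely new idea beyond the inversion lemmas already established.
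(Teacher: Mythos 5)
Your overall strategy matches the paper's: coinduction, case analysis on $a$ (only the application case survives), lemma~\ref{tred-or-tredinf} to split on the behaviour of the subterms, the inversion lemma~\ref{tredinf-inv} to decompose the trace, and theorem~\ref{teval-tred-equiv} to convert terminating reductions into big-step evaluations. However, there is a genuine error in your handling of the second subcase. You claim that once $a_1$ reduces to a value $v_1$, you may immediately conclude $v_1 = \lambda x.b$ because ``if $v_1$ is a constant, the application is stuck, contradicting $\tredinf {a_1~a_2} T$.'' This is false: $c~a_2$ is not stuck when $a_2$ can still reduce, since rule ($\redonename$-app-r) applies whenever $\isvalue c$ and $a_2$ reduces. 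Concretely, take $a = 0~\omega$: here $a_1 = 0$ evaluates to the constant $0$, yet $\tredinf {0~\omega} T$ holds (the machine loops forever reducing $\omega$ in argument position), and $\tevalinf {0~\omega} T$ holds as well, via rule ($\evalinfname$-app-r) --- note that this rule only requires $\teval {a_1} {t_1} v$ for \emph{some} value $v$, not for an abstraction. Your case split dismisses this situation as contradictory, so your proof silently drops a family of diverging terms that the theorem must cover; as written, it does not establish the statement.

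The fix is to postpone the analysis of the shape of $v_1$ until after you know what $a_2$ does, which is exactly how the paper structures the argument. When $a_2$ diverges (the paper's case 2), no claim about the shape of $v_1$ is needed or true: one shows $\isvalue {v_1}$ from the inversion lemma (any infinite reduction from $v_1~a_2$ with $\notred {v_1}$ forces $v_1$ to be a value) and applies ($\evalinfname$-app-r) directly. Only when both $a_1$ and $a_2$ reduce to normal forms (the paper's case 3) does the continuation of the infinite reduction force a $\beta$-step, and hence $v_1 = \lambda x.b$, via cases (3) and (4) of lemma~\ref{tredinf-inv}. With that reordering, the rest of your outline --- including your remarks on the bisimilarity bookkeeping and why the rules are stated with $\bisim$ rather than $=$ --- goes through as in the paper.
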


\begin{classicproof} The proof proceeds by coinduction and case
analysis over $a$.  It must be the case that $a = a_1~a_2$, otherwise
$a$ cannot reduce infinitely.  Using lemma~\ref{tred-or-tredinf},
we distinguish three cases:
\begin{enumerate}
\item $\tredinf {a_1} {T_1}$.
This implies $\tevalinf {a_1} {T_1}$ by coinduction hypothesis.
Moreover, we have $T \bisim T_1~a_2$ by case (1) of lemma~\ref{tredinf-inv},
which implies the expected result by rule~($\evalinfname$-app-l).

\item $\tred {a_1} {t_1} {v}$ and $\notred {v}$ and $\tredinf {a_2} {T_2}$.
By case (3) of lemma~\ref{tredinf-inv},
we have $\tredinf {v~a_2} {T'}$ for some $T'$ such that $T = (t_1~a_2).T'$.
This implies that $\isvalue v$.
Moreover, $T' \bisim v~T_2$ by case (2) of lemma~\ref{tredinf-inv}.
Theorem~\ref{teval-tred-equiv} gives $\teval {a_1} {t} {v}$
and the coinduction hypothesis gives $\evalinf {a_2} {T_2}$.
The result follows from rule~($\evalinfname$-app-r).

\item $\tred {a_1} {t_1} {v_1}$ and $\notred {v_1}$ and
$\tred {a_2} {t_2} {v_2}$ and $\notred {v_2}$.
Using cases (3) and (4) of lemma~\ref{tredinf-inv}, 
it follows that $v_1 = \lambda x.b$ for some $x$, $b$,
that $\isvalue {v_2}$, and that 
$\tredinf {(\lambda x.b)~v_2} {T'}$ for some $T'$
such that 
$T = (t_1~a_2).((\lambda x.b)~t_2).T'$.
By inversion, we deduce $\tredinf {b[x \becomes v_2]} {T_3}$
for some $T_3$ such that $T' \bisim ((\lambda x.b)~v_2). T_3$.
The result follows by rule~($\evalinfname$-app-f), the coinduction
hypothesis, and theorem~\ref{teval-tred-equiv}.
\end{enumerate}
\end{classicproof}

\section{Coevaluation} \label{s-coeval}

\subsection{Definition and properties} \label{s-coeval-1}

So far, we have described terminating and non-terminating evaluations
using two separate sets of inference rules, one interpreted
inductively and the other coinductively.  An attempt to describe both
kinds of evaluations at the same time, in a more concise way,
is to interpret coinductively the standard evaluation rules for
terminating evaluations.  This defines the relation $\coeval a b$ (read:
``$a$ coevaluates to $b$'').

\begin{ruleset}
\srulenumber{$\coevalname$-const}{  \coeval c c   }
\srulenumber{$\coevalname$-fun}{    \coeval {\lambda x.a} {\lambda x.a}  }
\irulenumberdouble{$\coevalname$-app}{
    \coeval {a_1} {\lambda x.b} \and
    \coeval {a_2} {v_2} \and
    \coeval {b[x \becomes v_2]} v
}{% ---------------------------------
    \coeval {a_1~a_2} v
}
\end{ruleset}

It is clear from the definition of $\coevalname$ that coevaluation
includes all terminating evaluations, plus some diverging ones.

\begin{lemma} \label{eval-coeval}
If $\eval a v$, then $\coeval a v$. 
\end{lemma}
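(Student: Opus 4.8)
The plan is to proceed by straightforward structural induction on the derivation of $\eval a v$. The key observation is that the three coevaluation rules ($\coevalname$-const), ($\coevalname$-fun), ($\coevalname$-app) have exactly the same shape as the corresponding evaluation rules ($\evalname$-const), ($\evalname$-fun), ($\evalname$-app); they differ only in being interpreted coinductively rather than inductively. Since any derivation witnessing $\eval a v$ is well-founded (finite), it is in particular a valid derivation in the sense of section~\ref{s-proof-approach}, so its conclusion lies in $\ensCoInd{\sysInfer}$ as well as in $\ensInd{\sysInfer}$. Concretely, I would map each evaluation rule to the coevaluation rule of the same name.

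First, the two base cases. If the last rule is ($\evalname$-const), then $a = v = c$ and I conclude $\coeval c c$ by ($\coevalname$-const); if it is ($\evalname$-fun), then $a = v = \lambda x.a_0$ and I conclude by ($\coevalname$-fun). For the inductive step, the last rule is ($\evalname$-app), so $a = a_1~a_2$ and there are subderivations of $\eval {a_1} {\lambda x.b}$, $\eval {a_2} {v_2}$ and $\eval {b[x \becomes v_2]} v$. Applying the induction hypothesis to each of the three subderivations gives $\coeval {a_1} {\lambda x.b}$, $\coeval {a_2} {v_2}$ and $\coeval {b[x \becomes v_2]} v$, and rule ($\coevalname$-app) then yields $\coeval {a_1~a_2} v$, as required.

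The only point worth flagging is conceptual rather than technical: even though $\coevalname$ is defined coinductively, this lemma is emphatically \emph{not} proved by coinduction. Because the input derivation of $\eval a v$ is finite, the resulting coevaluation derivation is finite as well, so in Coq the proof is an ordinary structural recursion over the inductive derivation that produces a finite $\coevalname$ proof term, with no appeal to a coinduction hypothesis. I therefore expect no genuine obstacle: the inclusion $\ensInd{\sysInfer} \subseteq \ensCoInd{\sysInfer}$ is precisely the easy direction of the fixed-point/derivation correspondence, here instantiated for the particular inference system whose inductive reading is $\evalname$ and whose coinductive reading is $\coevalname$.
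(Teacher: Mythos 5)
Your proof is correct and matches the paper's own argument, which is exactly an induction on the derivation of $\eval a v$, mapping each $\evalname$ rule to the identically shaped $\coevalname$ rule. Your observation that no coinduction is needed — the finite evaluation derivation is itself a valid derivation for the coinductive interpretation, via the inclusion $\ensInd{\sysInfer} \subseteq \ensCoInd{\sysInfer}$ — is precisely why the paper can dispatch this in one line.
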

\begin{constrproof} By induction on the derivation of $\eval a v$.
\end{constrproof}

\begin{lemma} \label{coeval-omega}
$\coeval \omega v$ for all terms $v$. \end{lemma}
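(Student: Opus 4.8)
The plan is to prove $\coeval \omega v$ by coinduction for an arbitrary but fixed $v$, following exactly the pattern established in the proof of lemma~\ref{evalinf-omega}. First I would fix $v$ and take $\coeval \omega v$ as the coinduction hypothesis. The goal is then to exhibit a single application of an inference rule whose only non-trivial remaining premise is the coinduction hypothesis itself.

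The natural choice is rule ($\coevalname$-app) instantiated with $a_1 = a_2 = \delta = \lambda x.~x~x$. Its three premises become: $\coeval \delta {\lambda x.b}$, forcing $\lambda x.b = \delta$ and hence $b = x~x$, which holds by rule ($\coevalname$-fun); $\coeval \delta {v_2}$, forcing $v_2 = \delta$, again by ($\coevalname$-fun); and $\coeval {b[x \becomes v_2]} v$. Since $(x~x)[x \becomes \delta] = \delta~\delta = \omega$, this last premise is literally $\coeval \omega v$, which is discharged by the coinduction hypothesis. Assembling these three premises under ($\coevalname$-app) yields the desired conclusion $\coeval \omega v$.

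In the guarded-equation formulation of section~\ref{s-proof-approach}, I would present this as
$$
    d = \irulebasic{
             \coeval \delta \delta \and \coeval \delta \delta \and d
}{
             \coeval \omega v
        }
$$
where the first two premises are finite derivations built from ($\coevalname$-fun) and the recursive occurrence of $d$ provides the third. Because $d$ appears as a genuine premise of an instance of ($\coevalname$-app) rather than in a trivial equation, the system is guarded and admits a unique solution, namely a valid (ill-founded) derivation proving $\coeval \omega v$.

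The feature to highlight — more a point of interest than an obstacle — is that the value $v$ is threaded unchanged through the entire infinite derivation: it is never produced by a base rule, only carried along by successive applications of ($\coevalname$-app). This is precisely why coevaluation fails to be deterministic, and it is what makes the argument go through uniformly for every $v$. The only care needed is the syntactic check that the substitution collapses $(x~x)[x \becomes \delta]$ back to $\omega$, which guarantees that the third premise coincides with the coinduction hypothesis and that the corecursion is productive.
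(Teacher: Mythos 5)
Your proof is correct and follows exactly the paper's approach: coinduction on $\coeval \omega v$ with the hypothesis discharged as the third premise of rule ($\coevalname$-app), the first two premises being instances of ($\coevalname$-fun). The guarded-equation elaboration and the observation that $v$ is threaded through unchanged are accurate supplements, in the same spirit as the detailed expansion the paper gives for lemma~\ref{evalinf-omega}.
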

\begin{constrproof} By coinduction, using rule ($\coevalname$-app) with
the coinduction hypothesis as third premise. \end{constrproof}

Naively, we could expect that $\coevalname$~is equivalent to the union
of the $\evalname$~and~$\evalinfname$ relations.  This equivalence
holds in one direction only, from coevaluation to evaluation.

\begin{lemma}  \label{coeval-eval-or-evalinf}
If $\coeval a v$, then either $\eval a v$ or $\evalinf a$.
\end{lemma}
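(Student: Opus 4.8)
The plan is to prove the strengthened statement $\coeval a v \land \lnot(\eval a v) \implies \evalinf a$ by coinduction, and then recover the lemma by excluded middle on whether $\eval a v$ holds; this will therefore be a classical proof. Equivalently, in the fixed-point language of section~\ref{s-methodology}, one shows that the set $S = \{a \mid \exists v,~ \coeval a v \land \lnot(\eval a v)\}$ is $\opInfer$-consistent for the inference operator $\opInfer$ associated with the $\evalinfname$ rules, so that $S \subseteq \gfp{\opInfer}$ and hence every $a \in S$ diverges.

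First I would analyse the shape of $a$. Since $\coeval a v$ holds, $a$ cannot be a bare variable, as no coevaluation rule has a variable as conclusion. If $a$ is a constant or an abstraction, then rule ($\coevalname$-const) or ($\coevalname$-fun) forces $v = a$, and the matching rule ($\evalname$-const) or ($\evalname$-fun) gives $\eval a v$, contradicting $\lnot(\eval a v)$. Hence $a = a_1~a_2$, and inverting rule ($\coevalname$-app) yields $\coeval {a_1} {\lambda x.b}$, $\coeval {a_2} {v_2}$ and $\coeval {b[x \becomes v_2]} v$.

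Next I would perform a nested case analysis, using excluded middle at each step. If $\lnot(\eval {a_1} {\lambda x.b})$, then the coinduction hypothesis applied to $\coeval {a_1} {\lambda x.b}$ gives $\evalinf {a_1}$, and rule ($\evalinfname$-app-l) concludes $\evalinf {a_1~a_2}$. Otherwise $\eval {a_1} {\lambda x.b}$; if moreover $\lnot(\eval {a_2} {v_2})$, the coinduction hypothesis applied to $\coeval {a_2} {v_2}$ gives $\evalinf {a_2}$, and rule ($\evalinfname$-app-r), using the terminating premise $\eval {a_1} {\lambda x.b}$, concludes. Finally, if both $\eval {a_1} {\lambda x.b}$ and $\eval {a_2} {v_2}$ hold, I would observe that $\eval {b[x \becomes v_2]} v$ must fail: were it to hold, rule ($\evalname$-app) would give $\eval {a_1~a_2} v$, contradicting $\lnot(\eval a v)$. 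Thus the coinduction hypothesis applied to $\coeval {b[x \becomes v_2]} v$ yields $\evalinf {b[x \becomes v_2]}$, and rule ($\evalinfname$-app-f) concludes.

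The step I expect to be most delicate is the first case, where I invoke the coinduction hypothesis on $a_1$ after assuming only $\lnot(\eval {a_1} {\lambda x.b})$: one must be confident this does not wrongly declare a terminating $a_1$ to diverge. This is exactly where the $\opInfer$-consistency viewpoint is reassuring, since there one merely checks the set-membership $a_1 \in S$ and leaves soundness to the framework; the scenario in which $a_1$ terminates to a value different from $\lambda x.b$ while $\coeval {a_1} {\lambda x.b}$ holds simply never arises, because coevaluation agrees with evaluation on terminating terms (a fact provable by a routine induction on the evaluation derivation, using lemma~\ref{eval-deterministic}). The other mild subtlety is that each case split rests on excluded middle over the undecidable predicate $\eval{\cdot}{\cdot}$, which is why the proof is classical rather than constructive. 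Everything else — the inversion of ($\coevalname$-app) and the re-application of the three divergence rules — is routine, and each use of the coinduction hypothesis sits in a guarded position beneath an $\evalinfname$ constructor, so the underlying corecursion is productive.
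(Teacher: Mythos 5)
Your proof is correct and follows essentially the same route as the paper's: the same strengthened statement ($\coeval a v \land \lnot(\eval a v) \implies \evalinf a$) proved by coinduction with excluded middle, the same case analysis on $a$ with inversion of rule ($\coevalname$-app), and the same conclusion by applying whichever $\evalinfname$ rule matches the failing subterm. Your only (welcome) refinement is to case-split on whether each subterm evaluates \emph{to its specific coevaluation value} ($\lambda x.b$, $v_2$, $v$) rather than on whether it evaluates at all, which feeds the coinduction hypothesis directly and sidesteps the appeal to agreement between $\evalname$ and $\coevalname$ on terminating terms that the paper's coarser split implicitly relies on.
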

\begin{classicproof} We show that $\coeval a v$ and $\lnot(\eval a v)$
implies $\evalinf a$.  The result then follows by
excluded middle on $\eval a v$.  The auxiliary property is proved
by coinduction and case analysis on $a$.
The cases for variables, constants and abstractions trivially
contradict one of the hypotheses.
If $a = a_1~a_2$, an inversion on the hypothesis $\coeval a v$
shows that $\coeval {a_1} {\lambda x.b}$ and $\coeval {a_2} {v_2}$
and $\coeval {b[x \becomes v_2]} v$.  Using excluded middle, it must be
that at least one of these three terms does not evaluate, otherwise,
$\eval a v$ would hold.  The result follows by applying the rule for
$\evalinfname$~that matches the term that does not evaluate, and using the
coinduction hypothesis.
\end{classicproof}

However, the reverse implication from evaluation to coevaluation 
does not hold: there exists terms
that diverge but do not coevaluate.  Consider for instance $a =
\omega~(0~0)$.  It is true that $\evalinf a$, but there is no term $v$
such that $\coeval a v$, because the coevaluation of the argument
$0~0$ goes wrong (there is no $v$ such that $\coeval {0~0} v$).
Section~\ref{s-soundness-bigstep} shows another example of a diverging
term that does not coevaluate, this time involving no subterm that
goes wrong.

Another unusual feature of coevaluation is that it is not
deterministic. For instance, $\coeval \omega v$ for any term $v$.
However, $\coevalname$ is deterministic for terminating terms, in the
following sense:

\begin{lemma} If $\eval a v$ and $\coeval a {v'}$, then $v' = v$.
\end{lemma}

\begin{constrproof} By induction on the derivation of $\eval a v$
and inversion on $\coeval a {v'}$.
\end{constrproof}

Moreover, there exists diverging terms that coevaluate to only one value.
An example is $(\lambda x.0)~\omega$, which coevaluates to $0$ but not
to any other term.

\subsection{Connection with small-step semantics} \label{s-coeval-2}

Concerning the connections between coevaluation (big-step) and
coreduction (small-step) in the style of section~\ref{s-small-step},
the expected equivalence between $\coevalname$ and $\coredname$
holds in one direction only.

\begin{lemma} $\coeval a v$ implies $\cored a v$. \end{lemma}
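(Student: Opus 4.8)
The plan is to avoid a direct coinductive argument altogether and instead obtain the statement as an immediate corollary of the decomposition lemma~\ref{coeval-eval-or-evalinf} together with the small-step equivalences already established in section~\ref{s-small-step}. The key observation is that $\coevalname$ has already been related, in one direction, to the disjoint union of $\evalname$ and $\evalinfname$, and that each of $\evalname$ and $\evalinfname$ has in turn been shown equivalent to a finite, respectively infinite, reduction sequence. Since $\coredname$ is precisely the union of those two reduction behaviours (lemma~\ref{cored-red-or-redinf}), the result should drop out without any new coinduction.

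Concretely, I would start from the hypothesis $\coeval a v$ and apply lemma~\ref{coeval-eval-or-evalinf} to split into the two cases $\eval a v$ and $\evalinf a$. In the first case, theorem~\ref{eval-red-equiv} gives $\red a v$ (and $\isvalue v$), whence $\cored a v$ follows by the ``if'' direction of lemma~\ref{cored-red-or-redinf}, taking the left disjunct $\red a v$. In the second case, theorem~\ref{evalinf-redinf-equiv} gives $\redinf a$, and lemma~\ref{cored-red-or-redinf} again yields $\cored a v$, this time through the right disjunct $\redinf a$, which makes the target $v$ irrelevant. Combining the two cases gives $\cored a v$ in all situations.

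I expect there to be no genuine obstacle here: the real work has already been done, and in particular the delicate point that coevaluation may ``produce'' a value while actually diverging is entirely absorbed by lemma~\ref{coeval-eval-or-evalinf}, whose disjunction $\eval a v \lor \evalinf a$ is established classically by excluded middle on $\eval a v$. For that reason the proof inherits the classical flavour of lemma~\ref{coeval-eval-or-evalinf} and should be marked accordingly. The only care needed is to invoke the correct direction of the biconditional in lemma~\ref{cored-red-or-redinf}, namely that $\red a v$ or $\redinf a$ implies $\cored a v$.

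For completeness I note the alternative of a direct proof by coinduction on $\coredname$, proceeding by case analysis on $a$: the constant and abstraction cases close reflexively with $v=a$, and the application case inverts $\coeval{a_1~a_2}{v}$ into the three coevaluation premises and pushes the reductions of $a_1$ and $a_2$ through the evaluation contexts. That route, however, runs into the difficulty that the $\beta$-step $\redone{(\lambda x.b)~v_2}{b[x \becomes v_2]}$ requires $\isvalue{v_2}$, which need not hold since $\coeval{a_2}{v_2}$ permits a non-value $v_2$; one would then have to observe that a non-value $v_2$ forces $\evalinf{a_2}$ (again via lemma~\ref{coeval-eval-or-evalinf}), so that the context reduction is already infinite and the $\beta$-step is never reached. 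Since this case management is exactly what the corollary argument sidesteps, I would present only the short proof above.
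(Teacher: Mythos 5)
Your proof is correct, and in fact the paper itself acknowledges it in passing: its proof of this lemma opens with the remark that, using classical logic, the result follows from lemma~\ref{coeval-eval-or-evalinf} together with theorems~\ref{eval-red-equiv}, \ref{evalinf-redinf-equiv} and~\ref{cored-red-or-redinf} --- exactly your argument. However, the paper then deliberately discards this route and instead gives a direct \emph{constructive} proof: it first establishes a progress-style property, $\coeval a v \implies \isvalue a \lor \exists b,~\redone a b \land \coeval b v$, by induction on the structure of $a$, and then concludes $\cored a v$ by coinduction. The difference matters in the paper's setting: every proof is mechanized in Coq, and the authors carefully flag which results need the axiom of excluded middle. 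Your derivation inherits classicality from lemma~\ref{coeval-eval-or-evalinf} (and also from theorem~\ref{evalinf-redinf-equiv} and lemma~\ref{cored-red-or-redinf}, both of which are classical), so the lemma would have to be downgraded to a ``(classical)'' result, whereas the paper's version keeps it constructive. What your approach buys is brevity and reuse of established equivalences; what the paper's approach buys is a self-contained argument in Coq's native logic, and its progress lemma also neatly sidesteps the very difficulty you identify in your sketched direct coinduction --- when the induction on $a$ reaches the application case with $\isvalue{a_1}$ and $\isvalue{a_2}$, inversion of coevaluation forces $v_2 = a_2$ to be a value, so the $\beta$-step is always legitimate, with no appeal to excluded middle needed.
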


\begin{constrproof} Using classical logic, this follows from lemmas
\ref{coeval-eval-or-evalinf} and equivalence theorems \ref{eval-red-equiv},
\ref{evalinf-redinf-equiv} and \ref{cored-red-or-redinf}.  However,
the result can be proved directly in constructive logic.
We first show that $\coeval a v \implies \isvalue a \lor \exists b,~
\redone a b \land \coeval b v$ by induction on $a$.  The result
follows by coinduction.
\end{constrproof}

The reverse implication obviously does not hold for terms $a$ that
diverge but do not coevaluate, such as the term $a = \omega~(0~0)$
mentioned previously: if $\evalinf a$, we have $\redinf a$ and
therefore $\cored a v$ for any $v$, but $\coeval a v$ does not hold.
Another counterexample to the reverse implication is
$a = (\lambda x.~0)~\omega$ and $v = 1$.  Since $\redinf a$, we have
$\cored a v$.  However, $\coeval a v$ does not hold since the only
term to which $a$ coevaluates is $0$.

\subsection{Coevaluation for CPS terms} \label{s-coeval-CPS}

Notwithstanding the negative results of
sections~\ref{s-coeval-1}~and~\ref{s-coeval-2}, there exists a class
of terms for which coevaluation correctly captures both terminating
and diverging evaluations: terms that are in continuation-passing
style (CPS).  A distinguishing feature of these terms is that function
arguments are always values.  
CPS terms are defined by the following grammar:
$$\begin{array}{l@{~~}c@{~~}l}
a \in \mbox{\sf Atoms} & ::= & x \alt c \alt \lambda x. b \\
b \in \mbox{\sf CPS-terms} & ::= & a \alt b~a
\end{array}$$
Less formally, CPS terms are built from atoms (variables,
constants and function abstractions) using multiple applications in
tail-call position.

It is well known that CPS terms are stable by substitution of atoms
for variables.

\begin{lemma} \label{isbody-subst}
If $a \in \mbox{\sf Atoms}$ and $b \in \mbox{\sf CPS-terms}$, then
$b[x \becomes a] \in \mbox{\sf CPS-terms}$.
\end{lemma}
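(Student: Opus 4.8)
The plan is to prove the statement by a \emph{simultaneous} (mutual) structural induction over the two mutually recursive grammars defining $\mbox{\sf Atoms}$ and $\mbox{\sf CPS-terms}$. A naive induction on the structure of $b$ alone will not close, because the clause $\lambda x.b$ for atoms refers back to CPS-terms while the clause $a$ for CPS-terms refers back to atoms: the two syntactic classes are interdependent. I would therefore strengthen the statement to the conjunction of the following two claims, to be proved together:
\begin{enumerate}
\item if $a' \in \mbox{\sf Atoms}$ and $a \in \mbox{\sf Atoms}$, then $a'[x \becomes a] \in \mbox{\sf Atoms}$;
\item if $b \in \mbox{\sf CPS-terms}$ and $a \in \mbox{\sf Atoms}$, then $b[x \becomes a] \in \mbox{\sf CPS-terms}$.
\end{enumerate}
The lemma is the second claim; in Coq this corresponds to a mutual fixpoint over the inductive predicates characterizing the two classes.

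For claim (1), I would case-split on the atom $a'$. If $a'$ is a constant $c$, then $c[x \becomes a] = c$ is an atom. If $a'$ is a variable $y$, then either $y = x$, so that $y[x \becomes a] = a \in \mbox{\sf Atoms}$ by hypothesis, or $y \neq x$, so that $y[x \becomes a] = y \in \mbox{\sf Atoms}$. If $a'$ is an abstraction $\lambda y.b'$ with $b' \in \mbox{\sf CPS-terms}$, then the substitution either leaves it unchanged (when $y = x$) or descends into the body, $(\lambda y.b')[x \becomes a] = \lambda y.(b'[x \becomes a])$; in the latter case $b'[x \becomes a] \in \mbox{\sf CPS-terms}$ by the induction hypothesis for claim (2) (modulo the capture-avoidance caveat of the paper's footnote), so the result is again an atom.

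For claim (2), I would case-split on the CPS-term $b$. If $b$ is an atom, the result is immediate from claim (1), since every atom is a CPS-term. If $b$ is an application $b_0~a_0$ with $b_0 \in \mbox{\sf CPS-terms}$ and $a_0 \in \mbox{\sf Atoms}$, then $(b_0~a_0)[x \becomes a] = (b_0[x \becomes a])~(a_0[x \becomes a])$; by the induction hypotheses, $b_0[x \becomes a] \in \mbox{\sf CPS-terms}$ and $a_0[x \becomes a] \in \mbox{\sf Atoms}$, so the whole term is of the form ``CPS-term applied to atom'', hence a CPS-term.

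The one place where the hypothesis really bites --- and the step I expect to be the only real obstacle --- is the variable case $y = x$ of claim (1): there we must know that the substituted term $a$ is itself an \emph{atom}, not merely a CPS-term. This is exactly the invariant that keeps argument positions atomic. If one were allowed to substitute an arbitrary CPS-term (say an application $b_0~a_0$) for $x$, then a term such as $b~x$, in which $x$ sits in argument position, would rewrite to $b~(b_0~a_0)$, whose argument is no longer an atom, and the CPS shape would be destroyed. Recognizing that claim (1) must be carried alongside claim (2), and that the atom hypothesis on $a$ is indispensable precisely in the variable case, is the crux; the remaining cases are routine.
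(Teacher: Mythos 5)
Your proof is correct: the mutual structural induction over the interdependent grammars of \mbox{\sf Atoms} and \mbox{\sf CPS-terms}, with the strengthened two-part statement, is exactly the canonical argument, and your identification of the variable case as the point where atomicity of the substituted term is essential is accurate. The paper itself states this lemma without proof (calling it ``well known'' and deferring details to the Coq development), so your proposal simply supplies the standard argument the authors left implicit.
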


Consequently, the value of a CPS term is an atom.

\begin{lemma} \label{eval-body}
If $b \in \mbox{\sf CPS-terms}$ and $\eval b v$, then
$v \in \mbox{\sf Atoms}$.  As a corollary, if 
 $b \in \mbox{\sf CPS-terms}$ and $\eval b {\lambda x.b'}$,
then $b' \in  \mbox{\sf CPS-terms}$.
\end{lemma}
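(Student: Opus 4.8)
The plan is to prove the main statement—that $\eval b v$ with $b \in \mbox{\sf CPS-terms}$ forces $v \in \mbox{\sf Atoms}$—by structural induction on the derivation of $\eval b v$, carrying the hypothesis $b \in \mbox{\sf CPS-terms}$ through every case. The corollary is then immediate: if the main statement gives $v = \lambda x.b' \in \mbox{\sf Atoms}$, then by the very definition of \mbox{\sf Atoms} (abstractions $\lambda x.b$ with $b \in \mbox{\sf CPS-terms}$) we read off $b' \in \mbox{\sf CPS-terms}$.

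First I would dispatch the two axiom cases. For rule ($\evalname$-const), $b = v = c$ is an atom. For rule ($\evalname$-fun), $b = v = \lambda x.a$; since $b$ is a CPS-term that is not an application, it must be an atom via the production $b ::= a$, so the result is immediate (and in fact records that $a \in \mbox{\sf CPS-terms}$).

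The substantive case is rule ($\evalname$-app), where $b = a_1~a_2$ with premises $\eval {a_1} {\lambda x.b''}$, $\eval {a_2} {v_2}$ and $\eval {b''[x \becomes v_2]} v$. The crucial observation is that a CPS-term of the form $a_1~a_2$ can only be built by the production $b~a$, so $a_1 \in \mbox{\sf CPS-terms}$ and $a_2 \in \mbox{\sf Atoms}$. I would apply the induction hypothesis to $a_1$, obtaining $\lambda x.b'' \in \mbox{\sf Atoms}$ and hence $b'' \in \mbox{\sf CPS-terms}$. Next, since every atom is also a CPS-term, $a_2 \in \mbox{\sf CPS-terms}$, so the induction hypothesis applies to $a_2$ as well and yields $v_2 \in \mbox{\sf Atoms}$. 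Now lemma~\ref{isbody-subst} gives $b''[x \becomes v_2] \in \mbox{\sf CPS-terms}$, and a final application of the induction hypothesis to the third premise delivers $v \in \mbox{\sf Atoms}$.

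The only real obstacle is the bookkeeping in the application case: one must combine the closure of the CPS grammar under its two destructuring steps—splitting $a_1~a_2$ into a CPS head and an atomic argument, and unfolding an atomic abstraction $\lambda x.b''$ into its CPS body $b''$—with the substitution stability of lemma~\ref{isbody-subst}, so that the induction hypothesis is legitimately applicable to all three premises. The mild point worth flagging is that the induction hypothesis can be invoked on $a_2$ only because \mbox{\sf Atoms} is a subclass of \mbox{\sf CPS-terms}; without this inclusion the argument would not close.
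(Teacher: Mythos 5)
Your proof is correct and is exactly the paper's argument: the paper proves this by induction on the derivation of $\eval b v$, invoking lemma~\ref{isbody-subst} in the application case, which is precisely what you do (your write-up merely spells out the bookkeeping—splitting $a_1~a_2$ into CPS head and atomic argument, unfolding the atomic abstraction, and using the inclusion of atoms in CPS-terms—that the paper leaves implicit).
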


\begin{constrproof} By induction on the derivation of $\eval b v$,
using lemma~\ref{isbody-subst} for the application case.
\end{constrproof}

The main result of this section is that a closed CPS term coevaluates
to a value if and only if it evaluates or it diverges.  The
restriction to closed terms is important since, for instance,
the CPS term $\omega~x$ diverges but its coevaluation goes wrong on
the free variable $x$.

The following lemma lists useful properties of CPS atoms.

\begin{lemma}  \label{atom-props}
Let $a \in \mbox{\sf Atoms}$.
\begin{enumerate}
\item $\eval a a$ if $a$ is closed.
\item It is not the case that $\evalinf a$.
\item If $\eval a v$, then $v = a$.
\end{enumerate}
\end{lemma}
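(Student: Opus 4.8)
The plan is to prove all three claims by a single case analysis on the syntactic shape of the atom $a$, which by definition of $\mbox{\sf Atoms}$ is a variable $x$, a constant $c$, or an abstraction $\lambda x. b$. In each case the conclusion follows by inspecting which inference rules of the relevant relation can possibly apply, i.e., by inversion. The structural observation used throughout is that an atom is never an application $a_1~a_2$.

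For claim (3), I would assume $\eval a v$ and invert. If $a = x$, there is no evaluation rule whose conclusion has the form $\eval x v$, so this case is vacuous. If $a = c$, the only rule that can conclude $\eval c v$ is ($\evalname$-const), forcing $v = c = a$. If $a = \lambda x. b$, the only applicable rule is ($\evalname$-fun), forcing $v = \lambda x. b = a$. Claim (1) is the same analysis run forward: a closed atom cannot be a variable, since a variable is not a closed term, and for $a = c$ or $a = \lambda x. b$ the corresponding reflexive rule (($\evalname$-const) or ($\evalname$-fun)) directly yields $\eval a a$.

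For claim (2), the key point is that every inference rule defining $\evalinfname$ — namely ($\evalinfname$-app-l), ($\evalinfname$-app-r), and ($\evalinfname$-app-f) — has a conclusion of the form $\evalinf {a_1~a_2}$, that is, an application. Since an atom $a$ is never an application, no rule can conclude $\evalinf a$, and so by inversion $\evalinf a$ cannot hold.

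There is no real difficulty here; each part reduces to a one-step inversion. The only point requiring a moment's care is claim (2), where one must remember that inversion (``the judgment must have been produced by one of these rules'') remains sound for a \emph{coinductively} interpreted relation: membership in the greatest fixed point $\gfp{F}$ of the associated operator $F$ implies that the judgment lies in $F(\gfp{F})$ and hence is the conclusion of some rule instance, and none of the rules for $\evalinfname$ matches an atom.
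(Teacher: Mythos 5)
Your proof is correct: all three parts follow by exactly the case analysis and inversion you describe, and your side remark about why inversion remains sound for the coinductively interpreted relation $\evalinfname$ (a member of $\gfp{F}$ lies in $F(\gfp{F})$, hence is the conclusion of some rule instance) is the right justification for part (2). The paper states this lemma without any proof, treating it as immediate, and your argument is precisely the routine one it leaves implicit.
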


The key technical lemma below shows that diverging, closed CPS terms
coevaluate to a well-chosen value.

\begin{lemma} \label{evalinf-coeval}
Define $\Omega = \lambda x.\omega$.  
If $b \in \mbox{\sf CPS-terms}$, $b$ is closed and $\evalinf b$,
then $\coeval b \Omega$.
\end{lemma}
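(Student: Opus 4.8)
The plan is to prove the statement by coinduction on the coevaluation judgment, following the proof style of the paper. The coinduction hypothesis states that every closed, diverging CPS-term coevaluates to $\Omega$. Given $b$, I would first use the grammar of CPS-terms to split into two cases. If $b$ is an atom, then by Lemma~\ref{atom-props}(2) it cannot be that $\evalinf b$, contradicting the hypothesis; so $b$ must be an application $b'~a$ with $b' \in \mbox{\sf CPS-terms}$ closed and $a \in \mbox{\sf Atoms}$ closed.

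Next I would invert the hypothesis $\evalinf{b'~a}$, which must have been derived by one of the three rules for $\evalinfname$. Rule ($\evalinfname$-app-r) requires $\evalinf a$, impossible for a closed atom by Lemma~\ref{atom-props}(2), so it is discarded. In each remaining case the goal is to produce $\coeval{b'~a}\Omega$ by a single application of rule ($\coevalname$-app), which demands $\coeval{b'}{\lambda x.c}$, $\coeval a{v_2}$ and $\coeval{c[x\becomes v_2]}\Omega$ for suitable $x,c,v_2$. The middle premise is uniform: since $a$ is a closed atom, Lemma~\ref{atom-props}(1) gives $\eval a a$, hence $\coeval a a$ by Lemma~\ref{eval-coeval}, so I take $v_2 = a$.

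For the ($\evalinfname$-app-f) case, inversion yields $\eval{b'}{\lambda x.c}$, $\eval a{v_2}$ and $\evalinf{c[x\becomes v_2]}$, with $v_2 = a$ by Lemma~\ref{atom-props}(3). The first two premises of ($\coevalname$-app) follow from Lemma~\ref{eval-coeval}, and the third, $\coeval{c[x\becomes v_2]}\Omega$, is supplied by the coinduction hypothesis: the term $c[x\becomes v_2]$ diverges by assumption, it is a CPS-term by Lemma~\ref{eval-body} (which gives $c \in \mbox{\sf CPS-terms}$) together with Lemma~\ref{isbody-subst} (as $v_2 = a$ is an atom), and it is closed because $b'$ is closed, hence $\lambda x.c$ is closed and $c$ has at most $x$ free, while $v_2$ is closed. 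For the ($\evalinfname$-app-l) case, inversion yields only $\evalinf{b'}$; the coinduction hypothesis then gives $\coeval{b'}\Omega$, that is $\coeval{b'}{\lambda x.\omega}$ by the definition of $\Omega$, so I take $c = \omega$. Since $\omega$ is closed, $c[x\becomes v_2] = \omega[x\becomes a] = \omega$, and $\coeval\omega\Omega$ holds by Lemma~\ref{coeval-omega}. In both cases the coinduction hypothesis is used only as a direct premise of the ($\coevalname$-app) constructor, so the corecursion is guarded and the coinductive proof is accepted.

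The main obstacle, and the point where the specific choice $\Omega = \lambda x.\omega$ earns its keep, is the ($\evalinfname$-app-l) case: there the function part diverges and has no genuine value to coevaluate to, yet the coinduction hypothesis lets us treat it as coevaluating to $\Omega$, after which applying $\Omega$ to $a$ produces $\omega$, which itself coevaluates to $\Omega$. Apart from this, the only delicate bookkeeping is the propagation of closedness and of the CPS shape through evaluation and substitution in the ($\evalinfname$-app-f) case, which is precisely what Lemmas~\ref{eval-body} and~\ref{isbody-subst} are designed to handle.
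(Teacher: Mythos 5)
Your proof is correct and follows essentially the same route as the paper's: coinduction with case analysis on the last rule of the derivation of $\evalinf b$, discarding the ($\evalinfname$-app-r) case via lemma~\ref{atom-props}, handling the ($\evalinfname$-app-l) case by feeding the coinduction hypothesis $\coeval{b'}\Omega$ into rule ($\coevalname$-app) together with $\coeval \omega \Omega$, and handling the ($\evalinfname$-app-f) case via lemmas~\ref{eval-body}, \ref{isbody-subst} and~\ref{eval-coeval}. The only differences are cosmetic (order of cases, naming), and your explicit remarks on guardedness and closedness match the paper's implicit reasoning.
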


\begin{constrproof} By coinduction.  The CPS term $b$ cannot be
an atom (this would contradict the divergence hypothesis), therefore
$b = b'~a$ with $b'$ a closed CPS term and $a$ a closed CPS atom.
Analysis on the last rule used in the derivation of $\evalinf b$
reveals three cases.  In the first case, $\evalinf {b'}$.  By
coinduction hypothesis, $\coeval {b'} {\Omega = \lambda x.\omega}$.  
By lemmas~\ref{atom-props}~and~\ref{eval-coeval}, $\coeval a a$.
Finally, $\omega[x \becomes a] = \omega$ coevaluates to $\Omega$
by lemma~\ref{coeval-omega}.  Applying rule~($\coevalname$-app),
it follows that $\coeval b \Omega$.

The second case, $\evalinf a$, is impossible by lemma~\ref{atom-props}.
This leaves the third case: $\eval {b'} {\lambda x.b''}$ and $\eval a v$
and $\evalinf {b''[x \becomes v]}$.
By lemma~\ref{eval-body}, $b''$ is a CPS term. By lemma~\ref{atom-props}, 
$v = a$ and therefore $v$ is a CPS atom.  It follows that
$b''[x \becomes v]$ is a CPS term (lemma~\ref{isbody-subst}).
Moreover, this term is closed because of the usual properties of free
variables w.r.t. evaluation and substitution.  Using
lemma~\ref{eval-coeval} and the coinduction hypothesis, we obtain
$\coeval {b'} {\lambda x.b''}$ and $\coeval a v$ and $\coeval {b''[x
\becomes v]} \Omega$, from which $\coeval b \Omega$ follows by
rule~($\coevalname$-app).
\end{constrproof}

The claimed equivalence result follows as a corollary.

\begin{theorem} Let $b$ be a closed CPS term.
We have $\exists v, \coeval b v$ if and only if
$\evalinf b$ or $\exists v, \eval b v$.
\end{theorem}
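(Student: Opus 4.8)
The plan is to derive this equivalence directly from the three lemmas relating $\coevalname$ to $\evalname$ and $\evalinfname$, since all the genuine work has already been carried out in lemma~\ref{evalinf-coeval}. I would split the proof along the biconditional.

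For the ``only if'' direction, I would assume $\coeval b v$ for some $v$ and apply lemma~\ref{coeval-eval-or-evalinf}, which immediately yields that either $\eval b v$ or $\evalinf b$, hence the disjunction $\evalinf b \lor \exists v,~\eval b v$. I note that this direction needs neither the CPS hypothesis nor the closedness of $b$: it holds for arbitrary terms.

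For the ``if'' direction, I would reason by case analysis on the disjunction. If $\eval b v$ for some $v$, then lemma~\ref{eval-coeval} gives $\coeval b v$ directly, so $\exists v,~\coeval b v$. If instead $\evalinf b$, this is precisely where the hypotheses that $b$ is a closed CPS term come into play: lemma~\ref{evalinf-coeval} applies and yields $\coeval b \Omega$ with $\Omega = \lambda x.\omega$, so again $\exists v,~\coeval b v$ (taking $v = \Omega$).

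I do not expect a genuine obstacle at this level. The only subtlety---the fact that a diverging closed CPS term must coevaluate to the specific value $\Omega$, chosen so that the third premise of rule ($\coevalname$-app) can always be discharged by the coinduction hypothesis---has already been isolated and dispatched in lemma~\ref{evalinf-coeval}. The corollary therefore reduces to combining the three lemmas, with the CPS restriction and closedness entering only through the divergence case.
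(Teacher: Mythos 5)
Your proof is correct and follows exactly the paper's own argument: the paper proves this theorem as an immediate corollary of lemmas~\ref{eval-coeval}, \ref{coeval-eval-or-evalinf} and \ref{evalinf-coeval}, which is precisely the combination you spell out. Your additional observations (the ``only if'' direction needing neither hypothesis, and the CPS/closedness assumptions entering only through the divergence case) are accurate refinements of the same decomposition.
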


\begin{constrproof} Follows from lemmas
\ref{eval-coeval}, \ref{coeval-eval-or-evalinf} and
\ref{evalinf-coeval}.
\end{constrproof}

\section{Type soundness proofs} \label{s-type-soundness}

We now turn to using our coinductive evaluation and reduction
relations for proving the soundness of type systems.  To be more
specific, we will use the simply-typed $\lambda$-calculus with
recursive types as our type system.  We obtain recursive types
by interpreting the type algebra $\tau ::= {\tt int} \alt \tau_1 \fun
\tau_2$ coinductively, as in \cite{Gapeyev-Levin-Pierce}.
The typing rules are recalled below.
Type environments, written $E$, are finite maps from
variables to types.

\begin{ruleset}
\irule{   E(x) = \tau
}{%      ---------------
         E \vdash x : \tau
}
\srule{  E \vdash c : {\tt int}  }
\\
\irule{
         E + \{ x : \tau' \} \vdash a : \tau
}{%      ------------------------------
         E \vdash \lambda x.a : \tau' \fun \tau
}
\irule{
         E \vdash a_1 : \tau' \fun \tau \and E \vdash a_2 : \tau'
}{%      --------------------------------------------------
         E \vdash a_1 ~ a_2 : \tau
}
\end{ruleset}
Enabling recursive types makes the type system non-normalizing
and makes it possible to write interesting programs.  In
particular, the call-by-value fixpoint operator
$ Y = \lambda f.~ (\lambda x.~f~(x~x))
                  ~(\lambda x.~f~(\lambda y.~(x~x)~y)) $
is well-typed, with types $((\tau \fun \tau') \fun \tau
\fun \tau') \fun \tau \fun \tau'$ for all types~$\tau$ and~$\tau'$.
(The self-applications $x~x$ are well-typed under the assumption $x : \sigma$,
where the recursive type $\sigma$ is defined by the equation
$\sigma = \sigma \fun \tau \fun \tau'$.)

\subsection{Type soundness proofs using small-step semantics}

Wright and Felleisen \cite{Felleisen-Wright} introduced a proof
technique for showing type soundness that relies on small-step semantics
and is standard nowadays.  The proof relies on the twin
properties of {\em type preservation} (also called {\em subject
reduction\/}) and {\em progress\/}:

\begin{lemma}[Preservation]
If $\redone a b$ and $\emptyset \vdash a : \tau$, then $\emptyset \vdash b : \tau$
\end{lemma}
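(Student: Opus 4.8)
The plan is to proceed by induction on the derivation of the one-step reduction $\redone a b$, following the three inference rules defining $\redonename$. The type judgment $\emptyset \vdash a : \tau$ will be analyzed by inversion in each case, and the goal is to reconstruct a typing derivation for $b$ at the same type $\tau$.

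First I would handle the two context rules. For rule ($\redonename$-app-l), we have $a = a_1~c$ reducing to $b = a_2~c$ with $\redone {a_1} {a_2}$; inversion on $\emptyset \vdash a_1~c : \tau$ yields $\emptyset \vdash a_1 : \tau' \fun \tau$ and $\emptyset \vdash c : \tau'$ for some $\tau'$. The induction hypothesis applied to $\redone {a_1} {a_2}$ gives $\emptyset \vdash a_2 : \tau' \fun \tau$, and reapplying the application typing rule yields $\emptyset \vdash a_2~c : \tau$. Rule ($\redonename$-app-r) is symmetric, using the induction hypothesis on the right subterm.

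The crux is the base case, rule ($\redonename$-$\beta$), where $a = (\lambda x.a')~v$ reduces to $b = a'[x \becomes v]$. Here inversion on $\emptyset \vdash (\lambda x.a')~v : \tau$ gives $\emptyset \vdash \lambda x.a' : \tau' \fun \tau$ and $\emptyset \vdash v : \tau'$; a second inversion on the abstraction judgment yields $\{x : \tau'\} \vdash a' : \tau$. To conclude $\emptyset \vdash a'[x \becomes v] : \tau$, I need a substitution lemma: if $E + \{x : \tau'\} \vdash a' : \tau$ and $E \vdash v : \tau'$, then $E \vdash a'[x \becomes v] : \tau$. I expect this substitution lemma to be the main obstacle, as it is the only place requiring genuine new work rather than mechanical inversion. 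I would establish it separately, by induction on the derivation of the typing judgment for $a'$, with the variable case splitting on whether the bound variable equals $x$ and the abstraction case requiring the usual care with environment weakening and the side condition that the substituted value is closed (so no capture occurs, per the footnote on substitution in section~\ref{s-big-step}). Note that coinductively interpreting the type algebra poses no difficulty here, since the typing rules themselves are interpreted inductively and the types $\tau, \tau'$ are simply carried along; the recursive structure of types is never unfolded in this argument.
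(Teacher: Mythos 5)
Your proof is correct and follows essentially the approach the paper intends: the paper states this lemma without an in-text proof (deferring to the standard Wright--Felleisen argument and the Coq development), and the substitution lemma you identify as the crux is precisely the fact the paper invokes explicitly in its proof of the big-step preservation lemma (lemma~\ref{eval-preservation}), with closedness of the substituted value (typable in the empty environment) ensuring the paper's non-capture-avoiding substitution behaves correctly. Your remark that the coinductive interpretation of the type algebra is harmless here is also right, since the typing judgment itself is inductive and types are only carried along, never structurally unfolded.
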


\begin{lemma}[Progress]
If $\emptyset \vdash a : \tau$, then either $\isvalue a$ or there exists
 $b$ such that $\redone a b$.
\end{lemma}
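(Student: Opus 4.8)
The plan is to prove the Progress lemma by structural induction on the typing derivation $\emptyset \vdash a : \tau$, following the standard Wright--Felleisen technique. Since the context is empty, the variable case cannot arise: the typing rule for variables requires $E(x) = \tau$, which is impossible when $E = \emptyset$. This leaves three cases to consider, corresponding to the remaining typing rules.

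For the base cases, constants and abstractions are immediate: if $a = c$ then $\isvalue a$ holds since constants are values, and likewise if $a = \lambda x. a'$ then $\isvalue a$ holds since abstractions are values. In both cases the left disjunct of the conclusion is satisfied directly, with no appeal to the induction hypothesis.

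The interesting case is the application $a = a_1~a_2$, typed by the rule with premises $\emptyset \vdash a_1 : \tau' \fun \tau$ and $\emptyset \vdash a_2 : \tau'$. Here I would apply the induction hypothesis to both subderivations, obtaining for each $a_i$ that either $\isvalue{a_i}$ or $a_i$ reduces. I then distinguish cases. If $a_1$ reduces, say $\redone{a_1}{a_1'}$, then rule ($\redonename$-app-l) gives $\redone{a_1~a_2}{a_1'~a_2}$. Otherwise $\isvalue{a_1}$; if moreover $a_2$ reduces, say $\redone{a_2}{a_2'}$, then rule ($\redonename$-app-r) applies (its side condition $\isvalue{a_1}$ is met) to give $\redone{a_1~a_2}{a_1~a_2'}$. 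The remaining subcase has both $\isvalue{a_1}$ and $\isvalue{a_2}$, and this is where the main obstacle lies.

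The crux is showing that a closed value of arrow type $\tau' \fun \tau$ must be an abstraction, so that a $\beta$-redex can fire. This is a \emph{canonical forms} argument: inverting the typing derivation of the value $a_1$ under the empty environment, and using that $a_1$ is a value (hence either a constant or an abstraction), one rules out the constant case, since a constant is typed only at ${\tt int}$, not at an arrow type. Thus $a_1 = \lambda x. a'$ for some body $a'$, and rule ($\redonename$-$\beta$) applies with the side condition $\isvalue{a_2}$ already established, yielding $\redone{(\lambda x.a')~a_2}{a'[x \becomes a_2]}$. I expect the canonical-forms step to require the most care, as it is the only place where the typing rules and the value structure must be reconciled; the reduction-rule bookkeeping in the other subcases is routine.
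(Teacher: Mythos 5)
Your proof is correct and is exactly the standard Wright--Felleisen argument that the paper implicitly relies on: the paper states this Progress lemma without giving any proof (citing the technique as standard), and your plan --- induction on the typing derivation, with the empty-context observation killing the variable case and a canonical-forms step showing that a closed value of type $\tau' \fun \tau$ must be an abstraction --- fills that gap in the expected way. One point worth making explicit in your write-up: the paper's types are recursive (the type algebra is interpreted coinductively), but this does not disturb your argument, since typing derivations are still inductively defined (so your induction is well-founded) and canonical forms still hold because ${\tt int}$ and $\tau' \fun \tau$ have distinct head constructors even among infinite type trees.
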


The formal statement of type soundness in Felleisen and Wright's
approach is the following:

\begin{theorem}[Type soundness, 1]
If $\emptyset \vdash a : \tau$ and $\red a b$, then either $\isvalue b$
or $b$ reduces.
\end{theorem}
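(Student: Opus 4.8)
The plan is to follow the standard Wright--Felleisen recipe: lift the single-step Preservation lemma to the multi-step reduction relation $\redname$, then close with Progress. Since the two lemmas above encapsulate all the substantive work (the canonical-forms reasoning and the substitution lemma hidden inside Preservation), what remains is a short induction.

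First I would establish an auxiliary multi-step preservation statement: if $\emptyset \vdash a : \tau$ and $\red a b$, then $\emptyset \vdash b : \tau$. This proceeds by induction on the derivation of $\red a b$, that is, on the inductive interpretation of the reflexive-transitive closure rules. In the reflexive base case $\red a a$ the typing hypothesis is already the conclusion. In the transitive case we have $\redone a {a'}$ and $\red {a'} b$; applying the single-step Preservation lemma to $\emptyset \vdash a : \tau$ and $\redone a {a'}$ gives $\emptyset \vdash a' : \tau$, and the induction hypothesis applied to $\red {a'} b$ then yields $\emptyset \vdash b : \tau$.

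The theorem itself is then immediate. Given $\emptyset \vdash a : \tau$ and $\red a b$, the auxiliary statement gives $\emptyset \vdash b : \tau$, and applying Progress to $b$ produces exactly the required disjunction: either $\isvalue b$, or there exists $c$ with $\redone b c$, i.e.\ $b$ reduces.

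I expect no real obstacle here: all of the difficulty is confined to the proofs of Preservation and Progress, which are assumed. The only point requiring minimal care is to set up the induction on the structure of the $\redname$ derivation (rather than on the typing derivation), so that the single-step Preservation lemma can be threaded through each reduction step, with Progress invoked just once, at the final term $b$ of the sequence.
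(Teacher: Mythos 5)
Your proof is correct and follows exactly the paper's own argument: an induction over the derivation of $\red a b$ using the single-step Preservation lemma to obtain $\emptyset \vdash b : \tau$, followed by a single application of Progress to $b$. Nothing further is needed.
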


\begin{constrproof} We first show that $\emptyset \vdash b : \tau$ by
induction over $\red a b$, using the preservation lemma.  We then
conclude with the progress lemma.
\end{constrproof}

The authors that follow this approach then conclude that well-typed closed
terms either reduce to a value or reduce infinitely.  However, this
conclusion is generally neither expressed nor proved formally.  In our
approach, it is easy to do so:

\begin{theorem}[Type soundness, 2] \label{type-soundness-2}
If $\emptyset \vdash a : \tau$, then either $\redinf a$, or there exists
$v$ such that $\red a v$ and $\isvalue v$.
\end{theorem}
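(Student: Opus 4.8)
The plan is to combine the already-proved type-soundness result (Theorem "Type soundness, 1") with the dichotomy lemma that every term either diverges or reduces to a normal form (Lemma~\ref{red-or-redinf}). Let me sketch the structure.

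First I would invoke Lemma~\ref{red-or-redinf}, which states that for any term $a$, either $\redinf a$ holds, or there exists a term $b$ with $\red a b$ and $\notred b$. In the first case, $\redinf a$, we are immediately done: this is exactly the left disjunct of the desired conclusion, and no typing information is even needed. So the interesting case is the second one, where $a$ reduces in finitely many steps to some $b$ that is itself irreducible, i.e. $\forall b',~\lnot(\redone b {b'})$.

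In that second case, the plan is to show that $b$ must in fact be a value. Since $\emptyset \vdash a : \tau$ and $\red a b$, the first type-soundness theorem (applied with this $a$ and $b$) tells us that either $\isvalue b$ or $b$ reduces. But we are precisely in the situation $\notred b$, so $b$ cannot reduce; hence $\isvalue b$. Taking $v = b$, we obtain $\red a v$ with $\isvalue v$, which is the right disjunct. (Alternatively, and equivalently, one can first establish $\emptyset \vdash b : \tau$ by type preservation along the reduction sequence $\red a b$, exactly as in the proof of Theorem "Type soundness, 1", and then apply the progress lemma directly: progress on the irreducible, well-typed $b$ forces $\isvalue b$.)

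I do not expect any genuine obstacle here: the theorem is essentially a packaging of earlier results, and the only subtlety is bookkeeping — making sure the finite-reduction case yields a \emph{value} rather than merely a stuck term. This is guaranteed by progress, which rules out well-typed stuck non-values. The classical flavour of the argument comes entirely from Lemma~\ref{red-or-redinf}, whose proof uses excluded middle on $\redinf a$; the rest of the reasoning is constructive. The payoff, as the surrounding text emphasizes, is that this statement cleanly expresses the "terminate-with-a-value-or-diverge" guarantee that the traditional Wright--Felleisen formulation leaves informal, and it does so at the small-step level before the analogous big-step results are derived.
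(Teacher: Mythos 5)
Your proposal is correct and follows essentially the same route as the paper: invoke Lemma~\ref{red-or-redinf} to split into the diverging case and the finite-reduction-to-irreducible case, then use preservation plus progress (which is all that Theorem ``Type soundness, 1'' packages) to conclude the irreducible term is a value. Your parenthetical alternative is in fact exactly the paper's own proof.
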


\begin{classicproof}  By lemma~\ref{red-or-redinf}, either $\redinf a$
or $\exists b,~\red a b \land \notred b$.  The result is obvious in
the first case.  In the second case, we note that $\emptyset \vdash b : \tau$
as a consequence of the preservation lemma, then use the progress
lemma to conclude that $\isvalue b$.
\end{classicproof}

An alternate, equivalent formulation of this theorem uses the
coreduction relation $\coredname$.

\begin{theorem}[Type soundness, 3]
If $\emptyset \vdash a : \tau$, then there exists $v$ such that
$\cored a v$ and $\isvalue v$.
\end{theorem}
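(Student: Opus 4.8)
The plan is to reduce this ``Type soundness, 3'' statement to the previous formulation, Theorem~\ref{type-soundness-2}, which has already established the dichotomy that a well-typed closed term either reduces infinitely or reduces in finitely many steps to a value. The bridge between the two statements is Lemma~\ref{cored-red-or-redinf}, which tells us that $\cored a b$ holds if and only if $\red a b$ or $\redinf a$. So the work reduces almost entirely to bookkeeping: converting a disjunction about $\redname$ and $\redinfname$ into a statement about $\coredname$.

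Concretely, I would begin by applying Theorem~\ref{type-soundness-2} to the hypothesis $\emptyset \vdash a : \tau$, obtaining the two cases. In the first case, $\redinf a$ holds. Here I must produce \emph{some} value $v$ with $\cored a v$ and $\isvalue v$. Since $\redinf a$ gives $\cored a v$ for \emph{any} $v$ by Lemma~\ref{cored-red-or-redinf} (the right disjunct $\redinf a$ suffices regardless of the target term), I simply pick a convenient closed value, say the constant $0$, set $v = 0$, and note $\isvalue v$ holds trivially. In the second case, there exists $v$ with $\red a v$ and $\isvalue v$; then the left disjunct of Lemma~\ref{cored-red-or-redinf} immediately yields $\cored a v$, and this same $v$ witnesses the existential in the conclusion.

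The only subtlety worth flagging is the divergent case, where the conclusion demands an existentially quantified value even though the computation never reaches one. The resolution is precisely the asymmetry built into the coreduction relation: $\coredname$ is reflexive-transitive-closure interpreted coinductively, so an infinite reduction sequence is deemed to ``co-reduce'' to an arbitrary term, and in particular to any chosen value. This is what makes the $\coredname$-based formulation cleaner to state (a single existential, no disjunction) at the cost of the witness $v$ being semantically meaningless in the diverging case. I expect no genuine obstacle here; the proof is a short corollary, and the main point of interest is conceptual rather than technical, namely that the three formulations of type soundness convey the same information repackaged through the relationship between $\redname$, $\redinfname$, and $\coredname$ established earlier. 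If one wished to avoid invoking excluded middle, one could note that Lemma~\ref{cored-red-or-redinf} is already proved classically, so the classical marking is inherited and no additional use of the excluded middle is introduced at this step.
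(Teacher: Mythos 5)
Your proof is correct and follows essentially the same route as the paper, whose own proof is simply ``Follows from theorem~\ref{type-soundness-2} and lemma~\ref{cored-red-or-redinf}''; you have merely spelled out the case split and the choice of witness (e.g.\ $v=0$ with $\isvalue v$) in the diverging case, which is exactly the intended argument.
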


\begin{constrproof}  Follows from theorem~\ref{type-soundness-2}
and lemma~\ref{cored-red-or-redinf}.
\end{constrproof}

An arguably nicer characterisation of ``programs that do not go
wrong'' is given by the relation $\safered a$ (read: ``$a$ reduces
safely''), defined coinductively by the following rules:
\begin{ruleset}
\iruledouble{
    \isvalue v
}{% -------------------
    \safered v
}
\iruledouble{
    \redone a b \and \safered b
}{% ------------------------
    \safered a
}
\end{ruleset}
These rules are interpreted coinductively so that $\safered a$ holds
if $a$ reduces infinitely.  We can then state and show type soundness
without recourse to classical logic:

\begin{theorem}[Type soundness, 4]
If $\emptyset \vdash a : \tau$, then $\safered a$.
\end{theorem}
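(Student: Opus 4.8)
The plan is to argue by coinduction on the definition of $\saferedname$, discharging its two rules using the progress and preservation lemmas. Concretely, I would take as coinduction hypothesis the statement we are proving, generalised to carry the typing information along: for all terms $a$ and types $\tau$, if $\emptyset \vdash a : \tau$ then $\safered a$. Given a closed, well-typed term $a$ with $\emptyset \vdash a : \tau$, the progress lemma furnishes the crucial dichotomy: either $\isvalue a$, or there exists $b$ with $\redone a b$.

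In the first case, if $\isvalue a$, I conclude $\safered a$ immediately by the first rule for $\saferedname$, with no recursive appeal needed. In the second case, progress supplies a term $b$ with $\redone a b$, and the preservation lemma then yields $\emptyset \vdash b : \tau$, so $b$ is again a closed, well-typed term of the same type. Applying the coinduction hypothesis to $b$ gives $\safered b$, and the second rule for $\saferedname$, instantiated with $\redone a b$ and $\safered b$, delivers $\safered a$. The whole argument is thus a single pass combining the two standard soundness lemmas, with the coinductive predicate $\saferedname$ absorbing both the terminating and the diverging outcomes.

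The point to verify—and the only real obstacle—is guardedness (productivity) of the coinductive argument, namely that the coinduction hypothesis is consumed only beneath a constructor of $\saferedname$. This holds here: in the value case the hypothesis is not used at all, and in the reduction case it is used solely to supply the $\safered b$ premise of the second rule, which is an argument of that rule's constructor rather than being inspected or passed elsewhere. Because $\safered a$ packages ``reduces to a value or reduces infinitely'' into one coinductively defined predicate, no disjunction needs to be resolved externally, and in particular no case split on $\redinf a$ is required. The proof therefore goes through productively in Coq's constructive logic, with no appeal to excluded middle—in contrast to theorem~\ref{type-soundness-2}, whose formulation forced the use of classical reasoning.
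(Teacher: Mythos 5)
Your proof is correct and follows essentially the same route as the paper's: a single coinduction that invokes the progress lemma to split between the value case and the reduction case, uses preservation to re-establish typing of the reduct, and then closes with the coinduction hypothesis under the second rule for $\saferedname$. Your explicit check of guardedness and your observation that the argument is constructive (unlike Theorem~\ref{type-soundness-2}) are both accurate and consistent with the paper's marking of this proof as constructive.
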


\begin{constrproof}  By coinduction.  Applying the progress lemma, either
$\isvalue a$ and we are done, or $\redone a b$ for some $b$.
In the latter case,
$\emptyset \vdash b : \tau$ by the preservation property, and the result
follows from the coinduction hypothesis.
\end{constrproof}

\subsection{Type soundness proofs using big-step semantics}
\label{s-soundness-bigstep}

The standard big-step semantics (defined by the $\evalname$ relation) is awkward
for proving type soundness because it does not distinguish between
terms that diverge and terms that go wrong: in both cases, there is no
value $v$ such that $\eval a v$.  Consequently, the obvious type
soundness statement ``if $\emptyset \vdash a : \tau$, there exists $v$
such that $\eval a v$'' is false for all type systems that do not
guarantee normalization.  The best result we can prove, then,
is the following big-step equivalent to the preservation lemma:

\begin{lemma}[Preservation, big-step style] \label{eval-preservation}
If $\eval a v$ and $\emptyset \vdash a : \tau$, then $\emptyset \vdash v : \tau$.
\end{lemma}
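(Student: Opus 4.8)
The plan is to proceed by structural induction on the derivation of $\eval a v$, following the three inference rules that define the big-step evaluation relation. The goal is to show that typing is preserved along evaluation: from a derivation of $\eval a v$ together with a typing derivation $\emptyset \vdash a : \tau$, I want to conclude $\emptyset \vdash v : \tau$. Since evaluation preserves closedness, it suffices to work with the empty environment throughout.

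For the base cases, if the last rule is ($\evalname$-const) then $a = v = c$ and $v$ retains the type of $a$ trivially. If the last rule is ($\evalname$-fun) then $a = v = \lambda x.b$, and again $v = a$ carries exactly the hypothesized type. The only interesting case is ($\evalname$-app), where $a = a_1~a_2$ with subderivations $\eval {a_1} {\lambda x.b}$, $\eval {a_2} {v_2}$, and $\eval {b[x \becomes v_2]} v$. Inverting the application typing rule gives $\emptyset \vdash a_1 : \tau' \fun \tau$ and $\emptyset \vdash a_2 : \tau'$ for some $\tau'$. The induction hypotheses then yield $\emptyset \vdash \lambda x.b : \tau' \fun \tau$ and $\emptyset \vdash v_2 : \tau'$.

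The crux of the argument, and the step I expect to be the main obstacle, is handling the function body. From $\emptyset \vdash \lambda x.b : \tau' \fun \tau$ I invert the abstraction typing rule to obtain $\{x : \tau'\} \vdash b : \tau$. I then need a \emph{substitution lemma} stating that typing is stable under substitution of a well-typed value: if $\{x : \tau'\} \vdash b : \tau$ and $\emptyset \vdash v_2 : \tau'$, then $\emptyset \vdash b[x \becomes v_2] : \tau$. This lemma is itself proved by induction on the typing derivation of $b$, with the variable case splitting on whether the variable is $x$. Once the substitution lemma gives $\emptyset \vdash b[x \becomes v_2] : \tau$, the third induction hypothesis applied to $\eval {b[x \becomes v_2]} v$ delivers $\emptyset \vdash v : \tau$, completing the application case and the induction.

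One subtlety worth flagging is that the type algebra here is interpreted coinductively (recursive types), so the inversion steps on the arrow type must be justified with care: from a coinductively-defined type equality $\tau = \tau' \fun \tau''$ one extracts the domain and codomain. In the simply-typed setting these inversions are nonetheless routine, since the typing rules are syntax-directed on terms and each term shape forces a unique rule. The proof is fully constructive, requiring no appeal to excluded middle, and mirrors the standard small-step preservation argument with the single-step reduction replaced by the three-fold recursive structure of the evaluation derivation.
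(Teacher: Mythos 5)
Your proof is correct and follows exactly the paper's argument: induction on the derivation of $\eval a v$, with the application case resolved by inversion of the typing rules and the substitution lemma (typing is stable under substitution of a well-typed closed term), which is precisely the auxiliary fact the paper invokes. Your elaboration of the base cases and the remark on syntax-directed inversion are consistent with the paper's (much terser) proof sketch.
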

\begin{constrproof} Easy induction on the derivation of $\eval a v$, using
  the fact that typing is stable by substitution: if
$\{ x: \tau' \} \vdash a : \tau$ and $\emptyset \vdash b : \tau'$,
then $\emptyset \vdash a[x \becomes b] : \tau$.
\end{constrproof}

The standard approach for proving type soundness using big-step
semantics is to provide inductive inference rules to define
a predicate $\evalerr a$ characterizing terms that go wrong because of
a type error, 
and prove the statement ``if $\emptyset \vdash a :
\tau$, then it is not the case that $\evalerr a$'' \cite{Tofte-phd}.
This approach is not fully satisfactory for two reasons: (1) extra rules must
be provided to define $\evalerr a$, which increases the size of the
semantics; (2) there is a risk that the rules for $\evalerr a$ are
incomplete and miss some cases of ``going wrong'', in which case the
type soundness statement does not guarantee that well-typed terms
either evaluate to a value or diverge.

Let us revisit these trade-offs in the light of our characterizations
of divergence and coevaluation.  We can now formally state what it
means for a term to evaluate or to diverge.  This leads to the
following alternate statement of type soundness:

\begin{theorem}[Type soundness, 5] \label{type-soundness-5}
If $\emptyset \vdash a : \tau$, then either $\evalinf a$ or there exists
$v$ such that $\eval a v$.
\end{theorem}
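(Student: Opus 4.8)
The plan is to reduce this big-step statement to the small-step type soundness already established in Theorem~\ref{type-soundness-2}, and then transport the two resulting cases back across the equivalence theorems relating the big-step and small-step semantics. Since Theorem~\ref{type-soundness-2} has already done the real work of combining preservation and progress, what remains is essentially bookkeeping through the equivalences.

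First I would invoke Theorem~\ref{type-soundness-2} on the hypothesis $\emptyset \vdash a : \tau$. This yields the dichotomy that either $\redinf a$ holds, or there exists $v$ with $\red a v$ and $\isvalue v$. These are precisely the small-step analogues of the two disjuncts I want to establish.

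In the first case, $\redinf a$, I would apply the ``if'' direction of Theorem~\ref{evalinf-redinf-equiv} to obtain $\evalinf a$ directly. In the second case, $\red a v$ together with $\isvalue v$, I would apply the ``if'' direction of Theorem~\ref{eval-red-equiv} to obtain $\eval a v$. Combining the two cases gives the desired disjunction, and the proof is complete.

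This argument presents no genuine obstacle: its content is entirely inherited from the small-step development. The only point worth flagging is that it is classical rather than constructive, since it relies on Theorem~\ref{type-soundness-2} (which in turn uses the classical Lemma~\ref{red-or-redinf}) and on the classical equivalence Theorem~\ref{evalinf-redinf-equiv}. A direct big-step proof would be awkward, because the terminating relation $\evalname$ is interpreted inductively while $\evalinfname$ is interpreted coinductively, so there is no single fixed point against which to run a coinduction; routing through the one-step relation $\redonename$, where convergence and divergence are unified, is exactly what makes the statement easy to prove.
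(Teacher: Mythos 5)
Your proof is correct: Theorem~\ref{type-soundness-2} gives the dichotomy $\redinf a$ versus $\red a v$ with $\isvalue v$, and Theorems~\ref{evalinf-redinf-equiv} and~\ref{eval-red-equiv} transport each case to the big-step side, so the argument goes through. However, it is genuinely different from the paper's proof, which never leaves the big-step world: the paper applies excluded middle to $\exists v,\ \eval a v$ and then proves a new coinductive lemma (``Progress, big-step style'', Lemma~\ref{evalinf-progress}) stating that if $\emptyset \vdash a : \tau$ and no $v$ satisfies $\eval a v$, then $\evalinf a$; that lemma is proved by coinduction and case analysis on $a$, using the big-step preservation lemma (Lemma~\ref{eval-preservation}) and excluded middle. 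The trade-off is clear: your route is shorter given the already-established equivalences, but it inherits the entire small-step development, whereas the paper's route is the methodological point of the section --- demonstrating that type soundness can be established purely with big-step semantics, via a big-step analogue of progress, and arguing that this style is robust against omissions in the divergence rules. Both proofs are classical, so there is no constructivity advantage either way. One side remark of yours is wrong, though: you claim a direct big-step proof ``would be awkward'' because $\evalname$ is inductive and $\evalinfname$ is coinductive, leaving ``no single fixed point against which to run a coinduction.'' The paper's proof refutes this: the coinduction runs on $\evalinfname$ alone, and the inductive relation $\evalname$ only ever appears in hypotheses and in premises of the $\evalinfname$ rules, where it causes no guardedness problem.
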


By excluded middle, either $\exists v.~\eval a v$ or
$\forall v, ~ \lnot(\eval a v)$.  Theorem~\ref{type-soundness-5}
therefore follows from lemma~\ref{evalinf-progress} below,
which is a big-step analogue to the progress lemma.

\begin{lemma}[Progress, big-step style] \label{evalinf-progress}
If $\emptyset \vdash a : \tau$ and $\forall v, ~ \lnot(\eval a v)$,
then $\evalinf a$.
\end{lemma}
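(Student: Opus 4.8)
The plan is to prove $\evalinf a$ by coinduction on the definition of $\evalinfname$, carrying the statement itself as coinduction hypothesis: for every closed, well-typed term that evaluates to no value, we may assume it diverges, provided we appeal to this hypothesis only underneath an application of one of the ($\evalinfname$-\ldots) rules. After introducing the typing derivation and the hypothesis $\forall v,~\lnot(\eval a v)$, I would proceed by case analysis on $a$ (equivalently, inversion on the typing derivation). The cases $a = x$, $a = c$ and $a = \lambda x.b$ are all vacuous: a free variable is not typable in the empty environment, while a constant and an abstraction both evaluate to themselves by ($\evalname$-const) and ($\evalname$-fun), contradicting the assumption that $a$ has no value.

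The only substantial case is $a = a_1~a_2$, where inversion of the typing rule gives $\emptyset \vdash a_1 : \tau' \fun \tau$ and $\emptyset \vdash a_2 : \tau'$. Here I would branch by excluded middle (this is what makes the proof classical) on whether each subterm has a value. If $a_1$ has no value, the coinduction hypothesis yields $\evalinf{a_1}$ and rule ($\evalinfname$-app-l) concludes. Otherwise $\eval{a_1}{v_1}$; by preservation (lemma~\ref{eval-preservation}) the value $v_1$ is closed of type $\tau' \fun \tau$, so by a canonical-forms inversion (a value is a constant, of type {\tt int}, or an abstraction, and $\tau' \fun \tau \neq {\tt int}$) we get $v_1 = \lambda x.b$. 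Splitting again on $a_2$: if $a_2$ has no value, the coinduction hypothesis gives $\evalinf{a_2}$ and rule ($\evalinfname$-app-r) applies; if $\eval{a_2}{v_2}$, then preservation together with stability of typing under substitution gives $\emptyset \vdash b[x \becomes v_2] : \tau$. I would then observe that $b[x \becomes v_2]$ itself has no value, for if $\eval{b[x \becomes v_2]}{v}$ held, rule ($\evalname$-app) would furnish $\eval{a_1~a_2}{v}$, contradicting the hypothesis. The coinduction hypothesis then yields $\evalinf{b[x \becomes v_2]}$, and rule ($\evalinfname$-app-f) concludes.

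The main point to watch is guardedness of the coinduction: in each of the three subcases I first apply an $\evalinfname$ constructor and only afterwards discharge its divergence premise by the coinduction hypothesis, so the recursive appeal always occurs under a constructor and the argument is productive in Coq's sense. The remaining delicate ingredients, namely the canonical-forms step and the claim that $b[x \becomes v_2]$ has no value, are short inversions, while the classical case splits on the existence of values for $a_1$ and $a_2$ are exactly what Theorem~\ref{type-soundness-5} already pays for with excluded middle.

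Finally, I would note a shorter but less illuminating route that avoids coinduction altogether: since $\emptyset \vdash a : \tau$, theorem~\ref{type-soundness-2} gives either $\redinf a$ or $\red a v$ with $\isvalue v$; the latter would yield $\eval a v$ by theorem~\ref{eval-red-equiv}, contradicting the hypothesis, so $\redinf a$ holds and $\evalinf a$ follows by theorem~\ref{evalinf-redinf-equiv}. I prefer the direct coinductive proof above, since it exhibits the big-step progress argument in its own right rather than importing the small-step development.
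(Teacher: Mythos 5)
Your proof is correct and is essentially identical to the paper's own argument: coinduction with case analysis on $a$, the three vacuous cases, then in the application case repeated appeals to excluded middle on whether $a_1$, $a_2$, and $b[x \becomes v_2]$ evaluate, discharged via preservation, canonical forms, stability of typing under substitution, and the rules ($\evalinfname$-app-l/r/f) with the coinduction hypothesis applied in guarded position. The alternative small-step route you sketch at the end is also valid but, as you note, bypasses the point of the lemma, which is to exhibit a self-contained big-step progress argument.
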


\begin{classicproof}  The proof is by coinduction and case analysis
over $a$.  The cases $a = x$, $a = c$ and $a = \lambda x.b$ lead to
contradictions: variables have no types in the empty environment;
constants and abstractions evaluate to themselves.  The interesting
case is therefore $a = a_1~a_2$.  By excluded middle, either $a_1$
evaluates to some value $v_1$, or not.  In the latter case, $\evalinf
a$ follows from rule ($\evalinfname$-app-l) and from $\evalinf {a_1}$,
which we obtain by coinduction hypothesis.  In the former case, $v_1$
has a function type $\tau' \fun \tau$ by lemma~\ref{eval-preservation},
and therefore $v_1 = \lambda x.b$ for some $x$
and $b$.  Moreover, $\{x : \tau' \} \vdash b : \tau$.  Using excluded
middle again, either $a_2$ evaluates to some value $v_2$, or not.  In
the latter case, $\evalinf a$ follows from rule ($\evalinfname$-app-r)
and the coinduction hypothesis.  In the former case,
$\emptyset \vdash v_2 : \tau'$.  Since typing is stable by substitution,
$\emptyset \vdash b[x \becomes v_2] : \tau$.  Using excluded middle for
the third time, it must be that $\forall v.~\lnot(\eval {b[x \becomes
v_2]} v)$, otherwise $a$ would evaluate to some value.  The result
$\evalinf a$ then follows from rule ($\evalinfname$-app-f) and the
coinduction hypothesis.
\end{classicproof}

The proof above is an original alternative to the standard approach of
showing $\lnot (\evalerr a)$ for all well-typed terms $a$.  From a
methodological standpoint, our proof addresses one of the shortcomings
of the standard approach, namely the risk of not putting in enough error
rules.  If we forget some divergence rules,
the proof of lemma~\ref{evalinf-progress} will, in all likelihood, not
go through.  Therefore, this novel approach to proving
type soundness using big-step semantics appears rather robust with
respect to mistakes in the specification of the semantics.

The other methodological shortcoming remains, however: just like the
``not goes wrong'' approach, our approach requires more evaluation
rules than just those for normal evaluations, namely the rules for
divergence.  This can easily double the size of the specification of a
dynamic semantics, which is a concern for realistic languages
where the normal evaluation rules number in dozens.  

The coevaluation relation $\coevalname$ is attractive for this
pragmatic reason, as it has the same number of rules as normal
evaluation.  Of course, we have seen that $\coeval a v$ is not
equivalent to $\eval a v \lor \evalinf a$, but the example we gave was
for a diverging term $a$ that is not typeable and where an early diverging
evaluation ``hides'' a later evaluation that goes wrong.  Since type
systems ensure that all subterms of a term do not go wrong, we could
hope that the following conjecture holds:

\begin{conjecture}[Type soundness, 6] \label{type-soundness-6}
If $\emptyset \vdash a : \tau$, there exists $v$ such that $\coeval a v$.
\end{conjecture}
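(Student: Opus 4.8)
The plan is to attempt a proof by coinduction together with a case analysis on the structure of $a$, in the same style as the proof of the big-step progress lemma~\ref{evalinf-progress}. The three non-application cases are immediate: a variable cannot be typed in the empty environment, while a constant or an abstraction coevaluates to itself by rule~($\coevalname$-const) or~($\coevalname$-fun). Hence the whole difficulty is concentrated in the application case $a = a_1~a_2$, where inversion of the typing derivation gives $\emptyset \vdash a_1 : \tau' \fun \tau$ and $\emptyset \vdash a_2 : \tau'$.

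First I would dispose of the subcase in which both subterms converge, which should go through cleanly. Applying theorem~\ref{type-soundness-5} to $a_1$ and $a_2$, suppose $\eval {a_1} {v_1}$ and $\eval {a_2} {v_2}$. By preservation (lemma~\ref{eval-preservation}) we get $\emptyset \vdash v_1 : \tau' \fun \tau$, so $v_1 = \lambda x.b$ with $\{x : \tau'\} \vdash b : \tau$, and $\emptyset \vdash v_2 : \tau'$. Since typing is stable under substitution, $\emptyset \vdash b[x \becomes v_2] : \tau$, so the coinduction hypothesis applied to this well-typed subterm yields $\coeval {b[x \becomes v_2]} v$ for some $v$. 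Combining this with $\coeval {a_1} {\lambda x.b}$ and $\coeval {a_2} {v_2}$, obtained from lemma~\ref{eval-coeval}, gives $\coeval {a_1~a_2} v$ by rule~($\coevalname$-app).

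The hard part will be the subcases in which a subterm diverges. Suppose $\evalinf {a_1}$ (the case $\eval {a_1} {v_1}$ with $\evalinf {a_2}$ is analogous and in fact more acute). Intuitively $a_1~a_2$ diverges and we would like to build $\coeval {a_1~a_2} v$, but the sole application rule~($\coevalname$-app) forces us to produce $\coeval {a_1} {\lambda x.b}$, $\coeval {a_2} {v_2}$ and $\coeval {b[x \becomes v_2]} v$. Applying the coinduction hypothesis to the well-typed subterm $a_1$ only gives $\coeval {a_1} {v_1}$ for some value $v_1$, and because $a_1$ diverges this $v_1$ carries no typing guarantee whatsoever: preservation (lemma~\ref{eval-preservation}) constrains the results of genuine terminating evaluations, not the value produced by coevaluating a diverging term, and lemma~\ref{coeval-eval-or-evalinf} tells us only that $\coeval {a_1} {v_1}$ together with $\evalinf {a_1}$ leaves $v_1$ completely unconstrained. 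Thus we cannot even conclude that $v_1$ is an abstraction, still less that the ensuing body coevaluates. In the dual subcase, where $a_1$ converges to $\lambda x.b$ but $a_2$ diverges, the coinduction hypothesis supplies $\coeval {a_2} {v_2}$ for an ill-typed $v_2$, and substituting this junk into the well-typed body $b$ may produce a stuck term --- precisely the phenomenon behind the untyped counterexample $\omega~(0~0)$, in which a diverging argument hides a later ``goes wrong''.

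To force the proof through, I would try to strengthen the coinduction hypothesis so that a diverging well-typed term coevaluates not merely to \emph{some} value but to a value that is harmless in any later application --- ideally an absorbing value in the spirit of $\Omega = \lambda x.\omega$ from lemma~\ref{evalinf-coeval}, which coevaluates to everything by lemma~\ref{coeval-omega}. The CPS development (lemma~\ref{evalinf-coeval}) shows such a strengthening is attainable when all arguments are syntactically values. The obstacle, and the reason this statement remains a conjecture, is that in the general non-CPS setting a diverging well-typed term need not coevaluate to such an absorbing value: the term $(\lambda x.0)~\omega$ is well-typed, diverges, yet coevaluates only to the constant $0$. Controlling which values a diverging well-typed term may coevaluate to --- tightly enough to guarantee that they flow harmlessly through the surrounding well-typed context --- is exactly the gap I do not see how to close, and it is where I expect the whole argument to stand or fall.
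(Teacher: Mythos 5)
There is a genuine gap, but it is not the one you think: the statement is \emph{false}, which is why the paper records it only as a conjecture and then refutes it rather than proving it. Your diagnosis of where the coinductive proof breaks down --- in the diverging subcases the coinduction hypothesis hands you a coevaluation value with no typing guarantee, and no strengthening in the spirit of lemma~\ref{evalinf-coeval} can force every diverging well-typed term to coevaluate to an absorbing value like $\Omega$, as $(\lambda x.0)~\omega$ already shows --- is accurate and closely mirrors the paper's own discussion. But where you stop at ``this is where I expect the whole argument to stand or fall,'' the paper settles the matter with a counterexample due to Andrzej Filinski: take
$$ Y~F~0 \qquad \mbox{where} \qquad F = \lambda f.\lambda x.~(\lambda g.\lambda y.~g~ y)~(f~x), $$
$Y$ being the call-by-value fixpoint combinator, which is typeable thanks to the recursive types of section~\ref{s-type-soundness}. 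This term is well-typed with type $\tau \fun \tau'$, it diverges, and --- unlike $\omega~(0~0)$ --- it has no subterm that goes wrong; yet there is no $v$ with $\coeval {Y~F~0} v$. Unfolding rule ($\coevalname$-app) along the recursion shows that any candidate $v$ would have to satisfy, in essence, $v = \lambda y.~v'~y$ where $v'$ is itself a coevaluation value of the same recursive call, so the only ``solution'' is the infinite term $\lambda y.~(\lambda y.~(\lambda y.\ldots~y)~y)~y$, which does not exist in the finite term algebra.

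The moral, which your proposal misses, is that there are \emph{two} distinct ways coevaluation of a diverging term can fail: the hidden goes-wrong phenomenon of $\omega~(0~0)$, which typing does rule out, and the need for an infinite value arising from an infinite nesting of ($\coevalname$-app) rules, which typing cannot rule out in any type system strong enough to type a general fixpoint operator. Note also that your example $(\lambda x.0)~\omega$ refutes only your proposed strengthening (it \emph{does} coevaluate, namely to $0$), whereas Filinski's term refutes the conjecture itself; the paper adds that the conjecture does hold for some weak non-normalizing systems, e.g.\ simply-typed $\lambda$-calculus without recursive types but with a diverging constant of type ${\tt int} \fun {\tt int}$. So the correct resolution of this statement is a disproof, and your attempted proof could never have been completed.
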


We were able to prove this conjecture for some uninteresting but
nonetheless non-normalizing type systems, such as simply-typed
$\lambda$-calculus without recursive types, but with a predefined
constant of type ${\tt int} \fun {\tt int}$ that diverges when applied.
However, the conjecture is false for simply-typed
$\lambda$-calculus with recursive types, and probably 
for all type systems with a general fixpoint operator.  Andrzej
Filinski provided the following counterexample. Consider
$$ Y~F~0 \qquad \mbox{where} \qquad
   F = \lambda f.\lambda x.~(\lambda g.\lambda y.~g~ y)~(f~x) $$
or, in more readable ML notation
\begin{verbatim}
     let rec f x = (let g = f x in fun y -> g y) in f 0
\end{verbatim}
The term $Y~F~0$ is well-typed with type $\tau \fun \tau'$, yet it
fails to coevaluate: the only possible value $v$ 
such that $\coeval {Y~F~0} v$
would be an infinite term, $\lambda y.~ (\lambda y.~ (\lambda y.~\ldots~y)~~y)~y$.

\section{Compiler correctness proofs} \label{s-compiler-correctness}

We now return to the original motivation of this work: proving
that compilers preserve the semantics of source programs (including
diverging ones), using big-step semantics.  We demonstrate this
approach on the compilation of call-by-value $\lambda$-calculus down
to a simple abstract machine.

\subsection{Big-step semantics with environments and closures}

Our abstract machine uses closures and environments indexed by de
Bruijn indices.  It is therefore
convenient to reformulate the big-step evaluation predicates in these
terms.  Variables, written $x_n$, are now identified by their de Bruijn
indices~$n$. 
Values (which are no longer a subset of terms) and
environments are defined as:
\begin{syntaxleft}
\syntaxclass{Values:}
& v & ::=  & c & \explanation{integer values} \\
&   & \alt & (\lambda a)[e] & \explanation{function closures} \\
\syntaxclass{Environments:}
& e & ::=  & \epsilon \alt v.e & \explanation{sequences of values}
\end{syntaxleft}
As in section~\ref{s-big-step}, we define three
evaluation relations by the inference rules given below.
$$\begin{array}{ll}
e \vdash \eval a v & \mbox{~~~~finite evaluations (inductive)} \\
e \vdash \evalinf a & \mbox{~~~~infinite evaluations (coinductive)} \\
e \vdash \coeval a v & \mbox{~~~~coevaluations (coinductive)}
\end{array}$$

\begin{ruleset}
\irule{
    e = v_1 \ldots v_n \ldots
}{% ----------------
    e \vdash \eval {x_n} {v_n}
}
\srule{
    e \vdash \eval c c
}
\srule{
    e \vdash \eval {\lambda a} {(\lambda a)[e]}
}
\irule{
    e \vdash \eval {a_1} {(\lambda b)[e']} \and
    e \vdash \eval {a_2} {v_2} \and
    v_2.e' \vdash \eval b v
}{% ----------------------------------------
    e \vdash \eval {a_1 ~ a_2} v
}
\\
\iruledouble{
         e \vdash \evalinf {a_1}
}{%      ---------------------
         e \vdash \evalinf {a_1~a_2}
}
\iruledouble{
         e \vdash \eval {a_1} v \and e \vdash \evalinf {a_2}
}{%      ------------------------------
         e \vdash \evalinf {a_1~a_2}
}
\iruledouble{
         e \vdash \eval {a_1} {(\lambda b)[e']} \and e \vdash \eval {a_2} v \and 
         v.e' \vdash \evalinf b
}{%      -------------------------------------------
         e \vdash \evalinf {a_1~a_2}
}
\\
\iruledouble{
    e = v_1 \ldots v_n \ldots
}{% ----------------
    e \vdash \coeval {x_n} {v_n}
}
\srule{
    e \vdash \coeval c c
}
\srule{
    e \vdash \coeval {\lambda a} {(\lambda a)[e]}
}
\iruledouble{
    e \vdash \coeval {a_1} {(\lambda b)[e']} \and
    e \vdash \coeval {a_2} {v_2} \and
    v_2. e' \vdash \coeval b v
}{% ----------------------------------------
    e \vdash \coeval {a_1 ~ a_2} v
}
\end{ruleset}

We will not formally study these relations, but note that they enjoy
the same properties as the environment-less relations studied in
section~\ref{s-big-step}.

\subsection{The abstract machine and its compilation scheme}

The abstract machine we use as target of compilation follows the
call-by-value strategy and the ``eval-apply'' model \cite{Peyton-Jones-Marlow-06}.  It is close in
spirit to the SECD, CAM, FAM and CEK machines
\cite{Landin-SECD,Cousineau-Curien-Mauny-CAM,%
Cardelli-FAM,Felleisen-Friedman-86}.
The machine state has three components: a code sequence, a stack and
an environment.  The syntax for these components is as follows.
\begin{syntaxleft}
\syntaxclass{Instructions:}
& I & ::=  & {\tt Var}(n)  & \explanation{push the value of variable number $n$} \\
&   & \alt & {\tt Const}(c) & \explanation{push the constant $c$} \\
&   & \alt & {\tt Clos}(C)  & \explanation{push a closure for code $C$} \\
&   & \alt & {\tt App}     & \explanation{perform a function application} \\
&   & \alt & {\tt Ret}  & \explanation{return to calling function} \\
\syntaxclass{Code:}
& C & ::=  & \epsilon \alt I, C  & \explanation{instruction sequences} \\
\syntaxclass{Values:}
& V & ::=  & c  & \explanation{constant values} \\
&   & \alt & C[E] & \explanation{code closures} \\
\syntaxclass{Environments:}
& E & ::=  & \epsilon \alt V.E \\
\syntaxclass{Stacks:}
& S & ::=  & \epsilon  & \explanation{empty stack} \\
&   & \alt & V.S & \explanation{pushing a value} \\
&   & \alt & (C, E). S & \explanation{pushing a return frame} 
\end{syntaxleft}%
The behaviour of the abstract machine is defined
as a transition relation
$\redone {C; S; E} {C'; S'; E'}$ that relates the machine states
($C;S;E$) and ($C';S';E'$) respectively before and after the
execution of the first instruction of the code~$C$.
The transitions are as follows.

$$
\def\arraystretch{1.5}
\begin{array}{l|l|l||l|l|l@{}l}
\multicolumn{3}{c||}{\mbox{State before transition}} &
\multicolumn{3}{|c}{\mbox{State after transition}} \\
\mbox{~Code} & \mbox{~Stack} & \mbox{~Env.~} &
\mbox{~Code~} & \mbox{~Stack} & \mbox{~Env.~} \\
\cline{1-6}
{\tt Var}(n), C & ~S & ~E &
~C & ~V_n.S & ~E
& \mbox{if $E = V_1\ldots V_n \ldots$}
\\
{\tt Const}(c), C ~ & ~S & ~E &
~C & ~c.S & ~E
\\
{\tt Clos}(C'), C & ~S & ~E &
~C & ~C'[E].S & ~E
\\
{\tt App}, C & ~V.C'[E'].S & ~E &
~C' & ~(C,E).S & ~V.E'
\\
{\tt Ret}, C & ~V.(C',E').S & ~E &
~C' & ~V.S & ~E'
\end{array}$$

As in section~\ref{s-small-step}, we consider the following closures of the
one-step transition relation:
$$\begin{array}{rcl@{~~~~}l}
C;S;E & \redname & C';S';E' &
  \mbox{zero, one or several transitions (inductive)} \\
C;S;E & \redplusname & C';S';E' &
  \mbox{one or several transitions (inductive)} \\
C;S;E & \redinfname & &
  \mbox{infinitely many transitions (coinductive)} \\
C;S;E & \coredname & C';S';E' &
  \mbox{zero, one, several or infinitely many} \\
& & & \mbox{transitions (coinductive)}
\end{array}$$

The compilation scheme from terms to code is straightforward:
\begin{eqnarray*}
\compile {x_n} & = & {\tt Var}(n) \\
\compile {c}   & = & {\tt Const}(c) \\
\compile {\lambda a} & = & {\tt Clos}(\compile{a}, {\tt Ret}) \\
\compile {a_1~a_2} & = & \compile{a_1}, \compile{a_2}, {\tt App}
\end{eqnarray*}
The intended effect for the code $\compile a$ is to evaluate the term
$a$ and push its value at the top of the machine stack, leaving the
rest of the stack and the environment unchanged.

\subsection{Proofs of semantic preservation} \label{s-semantic-preservation}

We expect the compilation to abstract machine code to preserve the
behaviour of the source term, in the following general sense.
Consider a closed term $a$ and start the abstract machine in the
initial state corresponding to $a$.  If $a$ diverges, the machine
should perform infinitely many transitions.  If $a$ evaluates to
the value~$v$, the machine should reach a final state corresponding to
$v$ in a finite number of transitions.  Here, the initial state
corresponding to $a$ is $\compile a; \epsilon; \epsilon$.  
The final state corresponding to the result value $v$ is
$\epsilon; \compile{v}.\epsilon; \epsilon$, that is, the code has been
entirely consumed and the machine value $\compile{v}$ corresponding to
the source-level value $v$ is left on top of the stack.  The
correspondence between source-level values and machine values, as well
as between source-level environments and machine environments,
is defined by:
$$\compile{c} = c \qquad
\compile{(\lambda a)[e]} = (\compile{a}, {\tt Ret})[\compile e] \qquad
\compile{v_1 \ldots v_n} = \compile{v_1} \ldots \compile{v_n}$$

Semantic preservation is easy to show for terminating terms $a$ using
the big-step semantics.  We just need to strengthen the statement of
preservation so that it lends itself to induction over the derivation
of $e \vdash \eval a v$.

\begin{theorem} \label{compile-eval}
If $e \vdash \eval a v$, then
$ \redplus {(\compile{a}, C) ;~ S ;~ \compile{e}}
            {C ;~ \compile{v}.S ;~ \compile{e}} $
for all codes~$C$ and stacks~$S$.
\end{theorem}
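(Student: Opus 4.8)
The plan is to proceed by induction on the derivation of $e \vdash \eval a v$, following the four inference rules for the inductive evaluation predicate. The strengthening already present in the statement---quantifying over an arbitrary trailing code~$C$ and arbitrary stack~$S$, and concluding with $\redplusname$ (one or more transitions) rather than $\redname$---is exactly what makes the induction go through, so no further generalization should be needed. I would treat each evaluation rule in turn, tracing the abstract machine through the transitions generated by the compiled code $\compile a$.

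First I would dispatch the three base or near-base cases. For $a = x_n$ with $e = v_1\ldots v_n\ldots$, the code $\compile{x_n} = {\tt Var}(n)$, so the machine takes a single ${\tt Var}(n)$ transition pushing $V_n = \compile{v_n}$, which matches $\compile{v}.S$ since $v = v_n$. For $a = c$, the single ${\tt Const}(c)$ transition pushes $c = \compile{c}$. For $a = \lambda b$, the single ${\tt Clos}(\compile b, {\tt Ret})$ transition pushes the closure $(\compile b, {\tt Ret})[\compile e] = \compile{(\lambda b)[e]}$, matching the definition of $\compile{(\lambda b)[e]}$. Each of these is a single transition, hence a valid $\redplusname$ sequence.

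The application case $a = a_1~a_2$ is where the work lies. Here $\compile{a_1~a_2} = \compile{a_1}, \compile{a_2}, {\tt App}$, and the premises give $e \vdash \eval{a_1}{(\lambda b)[e']}$, $e \vdash \eval{a_2}{v_2}$, and $v_2.e' \vdash \eval b v$. The strategy is to concatenate three induction-hypothesis sequences with the transitions for ${\tt App}$ and ${\tt Ret}$. Instantiating the induction hypothesis for $a_1$ with trailing code $\compile{a_2}, {\tt App}, C$ and the given stack drives the machine to leave $\compile{(\lambda b)[e']} = (\compile b, {\tt Ret})[\compile{e'}]$ on top of the stack. Then the hypothesis for $a_2$, instantiated with trailing code ${\tt App}, C$ and the now-augmented stack, pushes $\compile{v_2}$. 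At this point the ${\tt App}$ transition fires: it pops the value $\compile{v_2}$ and the closure, pushes a return frame $(C, \compile e)$, jumps into the closure body $\compile b, {\tt Ret}$ and sets the environment to $\compile{v_2}.\compile{e'} = \compile{v_2.e'}$. The induction hypothesis for $b$, instantiated with trailing code ${\tt Ret}$ and stack $(C, \compile e).S$ under environment $\compile{v_2.e'}$, then pushes $\compile v$; a final ${\tt Ret}$ transition pops the return frame, restoring code $C$, environment $\compile e$, and leaving $\compile v.S$ on the stack, exactly the desired final state.

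The main obstacle is purely bookkeeping: I must thread the trailing code and stack correctly through each induction-hypothesis instantiation so that the pieces compose, and verify that the closure and environment encodings line up at the ${\tt App}$ and ${\tt Ret}$ steps---in particular that $\compile{v_2.e'} = \compile{v_2}.\compile{e'}$ and that the return frame pushed by ${\tt App}$ is precisely the one consumed by ${\tt Ret}$. Since each compiled subterm preserves the stack and environment below it (the intended invariant stated for $\compile a$), the concatenation of the three $\redplusname$ subsequences with the ${\tt App}$ and ${\tt Ret}$ transitions is itself a nonempty transition sequence, yielding $\redplusname$ as required. No classical reasoning is needed; the proof is a straightforward structural induction once the machine transitions are matched against the compilation scheme.
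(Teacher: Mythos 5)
Your proposal is correct and follows essentially the same approach as the paper: structural induction on the derivation of $e \vdash \eval a v$, with the application case handled by chaining the three induction-hypothesis transition sequences (instantiated with the appropriate trailing code and stack) around the {\tt App} and {\tt Ret} transitions, concluding by transitivity of $\redplusname$. The bookkeeping details you identify (the closure encoding $\compile{(\lambda b)[e']} = (\compile{b},{\tt Ret})[\compile{e'}]$ and the environment equation $\compile{v_2.e'} = \compile{v_2}.\compile{e'}$) are exactly the ones used in the paper's transition chain.
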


\begin{constrproof} By induction on the derivation of $e \vdash \eval a v$.
The base cases where $a$ is a variable, a constant or an abstraction
are straightforward.  The inductive case is $a = a_1~a_2$ with
$e \vdash \eval {a_1} {(\lambda b)[e']}$ and
$e \vdash \eval {a_2} {v_2}$ and
$v_2.e' \vdash \eval b v$.  We build the following sequence of machine
transitions:
$$
\def\comment#1{ & \mbox{\hspace*{2cm} (#1)}}
\begin{array}{cl}
&
   (\compile{a_1}, \compile{a_2}, {\tt App}, C) ;~ S ;~ \compile{e} \\
\comment{induction hypothesis applied to the evaluation of $a_1$} \\
\redplusname &
   (\compile{a_2}, {\tt App}, C) ;~ \compile{(\lambda b)[e']}.S ;~ \compile{e} \\
\comment{induction hypothesis applied to the evaluation of $a_2$} \\
\redplusname &
   ({\tt App}, C) ;~ \compile{v_2}.\compile{(\lambda b)[e']}.S ;~ \compile{e} \\
\comment{{\tt App} transition, since 
      $\compile{(\lambda b)[e']} = (\compile{b}, {\tt Ret})[\compile{e'}]$} \\
\redonename &
   (\compile{b}, {\tt Ret}) ;~ (C, \compile{e}).S ;~ \compile{v_2}.\compile{e'} \\
\comment{induction hypothesis applied to the evaluation of $b$} \\
\redplusname &
   {\tt Ret} ;~ \compile{v}.(C, \compile{e}).S ;~ \compile{v_2}.\compile{e'} \\
\comment{{\tt Ret} transition} \\
\redonename &
   C ;~ \compile{v}.S ;~ \compile{e}
\end{array}$$
The result follows by transitivity of $\redplusname$.
\end{constrproof}

It is impossible, however, to prove semantic preservation for
diverging terms using only the standard big-step semantics, since
it does not describe divergence.  This led
several authors to prove semantic preservation for compilation to
abstract machines using small-step semantics with explicit
substitutions \cite{Rittri-88,Hardin-Maranget-Pagano}.  To this end,
they prove a simulation result between machine transitions and
source-level reductions: every machine transition corresponds to zero
or one source-level reductions.  To make the correspondence precise,
they need to define a {\em decompilation} relation that maps
intermediate machine states back to source-level terms.  However,
decompilation relations are difficult to define, especially for
optimizing compilation schemes; see \cite[section 4.3]{Gregoire-phd} for an
example.

The coinductive big-step semantics studied in this article provide a
simpler way to prove semantic preservation for non-terminating terms.
Namely, the following two theorems hold, showing that compilation
preserves divergence and coevaluation as characterized by the
$\evalinfname$ and $\coevalname$ predicates.

\begin{theorem} \label{compile-evalinf}
If $e \vdash \evalinf a$, then
$ \redinf {(\compile{a}, C) ;~ S ;~ \compile{e}} $
for all codes~$C$ and stacks~$S$.
\end{theorem}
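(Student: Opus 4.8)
The plan is to mirror the structure of the proof of theorem~\ref{compile-eval}, but since both $\evalinf$ and the machine-level divergence predicate $\redinfname$ are coinductive, I would proceed by coinduction rather than by induction on a derivation. The delicate point is that the coinductive rule for $\redinf{s}$ forces each use of the coinduction hypothesis to be \emph{guarded}: at least one genuine machine transition must be exhibited before the hypothesis is invoked.

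First I would analyse which rule concludes $e \vdash \evalinf a$. All three divergence rules force $a = a_1~a_2$, so $\compile a = \compile{a_1}, \compile{a_2}, {\tt App}$ and the starting state is $(\compile{a_1}, (\compile{a_2}, {\tt App}, C));\, S;\, \compile e$. In the app-r case ($e \vdash \eval{a_1} v$ and $e \vdash \evalinf{a_2}$), theorem~\ref{compile-eval} drives the machine through one or more transitions to $(\compile{a_2}, ({\tt App}, C));\, \compile v.S;\, \compile e$, after which the coinduction hypothesis applied to $a_2$ finishes the case; the transitions supplied by compile-eval provide the guard. The app-f case is analogous: two applications of compile-eval (for $a_1$ and $a_2$) followed by one {\tt App} transition reach $(\compile b, {\tt Ret});\, (C, \compile e).S;\, \compile v.\compile{e'}$, and the coinduction hypothesis applied to the diverging body $b$ under environment $v.e'$ closes the case, again with the needed steps already performed.

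The main obstacle is the app-l case ($e \vdash \evalinf{a_1}$): here the initial state is already $(\compile{a_1}, C');\, S;\, \compile e$ with $C' = \compile{a_2}, {\tt App}, C$, so applying the coinduction hypothesis directly to $a_1$ would make no machine transition and violate guardedness. To break this, I would factor out an auxiliary progress lemma, proved by structural induction on the source term (equivalently, on $\leftappheight a$): if $e \vdash \evalinf a$, then there exist $a'$, $C'$, $S'$, $e'$ with $\redplus{(\compile a, C);\, S;\, \compile e}{(\compile{a'}, C');\, S';\, \compile{e'}}$ and $e' \vdash \evalinf{a'}$. In its app-r and app-f subcases this lemma makes the finite progress above via compile-eval; in its app-l subcase it recurses on the strict subterm $a_1$, whose divergence forces it to be an application, so the descent along the left spine terminates precisely because the source term is finite. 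With this lemma established, the theorem follows by a one-line coinduction: the lemma yields a strictly positive, hence guarded, sequence of transitions to a state of the same diverging-term shape, to which the coinduction hypothesis then applies.

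The essential ingredients are therefore theorem~\ref{compile-eval} (whose conclusion is a $\redplusname$-sequence of length at least one, which is what supplies the guard in every case) and the finiteness of $a$ (which makes the left-spine recursion in the auxiliary lemma well-founded); once these are in place the coinductive argument is routine.
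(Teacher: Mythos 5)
Your proof is correct, and it identifies exactly the difficulty the paper wrestles with (the unguarded app-l case), but it resolves it by a genuinely different decomposition. The paper, following Bertot, introduces an auxiliary coinductive relation $\redinfNname{n}$ that tolerates up to $n$ consecutive non-transitions: lemma~\ref{compile-evalinf-aux} establishes $\redinfN{\leftappheight{a}}{(\compile{a}, C) ;~ S ;~ \compile{e}}$ by a pure coinduction in which the app-l case is guarded by the ``sleep'' constructor rather than by a machine transition, and a separate lemma~\ref{redinfN-redinf} converts $\redinfNname{n}$ into $\redinfname$ by Peano induction on $n$ (extracting one genuine transition) followed by a coinduction. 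You instead front-load all the well-founded reasoning into an inductive progress lemma---by structural induction along the left application spine, every compiled diverging state performs at least one transition and reaches another compiled diverging state---and finish with a single coinduction. Both proofs hinge on the same fact, the finiteness of $\leftappheight{a}$, and your route needs no new coinductive relation, making it the more elementary argument on paper. What the paper's extra machinery buys is direct mechanizability: in your final coinductive step, the set of compiled diverging states is not by itself consistent for the one-step divergence operator, because the intermediate states inside your $\redplusname$ blocks are not of compiled shape. To be fully rigorous you must either enlarge the coinduction invariant to all states that reduce in finitely many steps to a compiled diverging state, or peel off the $\redplusname$ block one transition at a time inside the coinductive proof; the latter is precisely the induction-inside-coinduction pattern that Coq's syntactic guardedness condition handles poorly, and which the relation $\redinfNname{n}$ was engineered to avoid. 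In set-theoretic terms this enlargement is routine, so your argument stands as given; it simply would not transcribe into Coq as directly as the paper's.
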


\begin{theorem} \label{compile-coeval}
If $e \vdash \coeval a v$, then
$ \cored   {(\compile{a}, C) ;~ S ;~ \compile{e}}
            {C ;~ \compile{v}.S ;~ \compile{e}} $
for all codes~$C$ and stacks~$S$.
\end{theorem}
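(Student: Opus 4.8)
The plan is to reduce the statement to the two preceding theorems by a classical case analysis on the behaviour of $a$, rather than attempting a fresh coinductive argument. By the environment-indexed analogue of lemma~\ref{coeval-eval-or-evalinf} (which, as noted above, holds for the relations of this section), the hypothesis $e \vdash \coeval a v$ gives that either $e \vdash \eval a v$ or $e \vdash \evalinf a$; I would split on this disjunction.

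In the terminating case $e \vdash \eval a v$, theorem~\ref{compile-eval} yields $\redplus {(\compile{a}, C);~ S;~ \compile{e}} {C;~ \compile{v}.S;~ \compile{e}}$ for the given $C$ and $S$. A one-or-more-step reduction is in particular a finite reduction, and finite reduction is included in coreduction by the (constructive) ``if'' part of lemma~\ref{cored-red-or-redinf}; this delivers the required $\coredname$. In the diverging case $e \vdash \evalinf a$, theorem~\ref{compile-evalinf} yields $\redinf {(\compile{a}, C);~ S;~ \compile{e}}$, and lemma~\ref{cored-red-or-redinf} says an infinitely reducing state coreduces to an \emph{arbitrary} target; instantiating that target with $C;~ \compile{v}.S;~ \compile{e}$ finishes the case. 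Both appeals to lemma~\ref{cored-red-or-redinf} are for the machine's reduction relations rather than the source-level ones, but its proof depends only on the shared structure of the reflexive--transitive and infinite closures, so it carries over unchanged.

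I expect the only real subtlety to be understanding \emph{why} this indirect route is the pragmatic one, and it is worth recording where a direct, constructive proof by coinduction would break down. One would like to mirror the inductive construction of theorem~\ref{compile-eval}: in the application case $a = a_1~a_2$, run $\compile{a_1}$, then $\compile{a_2}$, then ${\tt App}$, then $\compile{b}$, then ${\tt Ret}$, appealing to the coinduction hypothesis for each subterm and to a concatenation principle for $\coredname$. The hard part is \emph{guardedness}: when the left subterm $a_1$ itself diverges, the machine reduction obtained for it from the coinduction hypothesis is infinite, and passing that hypothesis as an argument to a concatenation lemma uses it outside of a reduction-step constructor --- exactly what Coq's productivity check forbids. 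Keeping the coinduction hypothesis under a $\redonename$-step constructor while first replaying the \emph{finite} reductions coming from already-evaluated subterms is the delicate work that theorem~\ref{compile-evalinf} must already carry out; routing the present theorem through it lets us reuse that effort and avoid a second productivity argument, at the modest price of the excluded middle hidden inside lemma~\ref{coeval-eval-or-evalinf}.
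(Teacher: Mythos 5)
Your proof is correct, but it takes a genuinely different route from the paper. The paper proves theorem~\ref{compile-coeval} the same way it proves theorem~\ref{compile-evalinf}: a direct, constructive coinductive argument that sidesteps the guardedness problem by introducing a counting variant of the closure relation (the analogue for $\coredname$ of $\redinfNname{n}$, with sleep/perform rules bounded by the left application height $\leftappheight{a}$), followed by a transfer lemma in the style of lemma~\ref{redinfN-redinf}; it is stated only as ``similar'' and omitted. You instead reduce the theorem classically to the two previous ones: split $e \vdash \coeval a v$ into $e \vdash \eval a v$ or $e \vdash \evalinf a$ via the environment-indexed analogue of lemma~\ref{coeval-eval-or-evalinf}, then conclude by theorem~\ref{compile-eval} together with $\redplusname \subseteq \redname \subseteq \coredname$, or by theorem~\ref{compile-evalinf} together with the fact that an infinitely reducing state coreduces to anything (the machine-level analogue of lemma~\ref{cored-red-or-redinf}). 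Each step checks out, and your diagnosis of why a naive coinductive proof fails matches the paper's. What your route buys is brevity and reuse --- no second productivity argument is needed; what it costs is constructivity (excluded middle enters through lemma~\ref{coeval-eval-or-evalinf}) and two lemma transfers the paper never formally makes: the paper only asserts that the environment-based relations share the properties of the \emph{section~\ref{s-big-step}} relations, whereas lemma~\ref{coeval-eval-or-evalinf} lives in section~\ref{s-coeval}, and lemma~\ref{cored-red-or-redinf} must likewise be re-proved for machine states. Both transfers are routine (the proofs depend only on the shape of the closure rules and on inversion of the $\coevalname$ rules), but a referee would ask you to state them explicitly.
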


Both theorems cannot be proved directly by coinduction and
case analysis over~$a$.  The problem is in the
application case $a = a_1 ~ a_2$, where the code component of the
initial machine state is of the form $\compile{a_1}, \compile{a_2},
{\tt App}, C$.  It is not possible to invoke the coinduction hypothesis to
reason over the execution of $\compile{a_1}$, because this use of the
coinduction hypothesis is not guarded by an inference rule for the
$\redinfname$ relation, or in other terms because no machine
instruction is executed before invoking the hypothesis.  In the
approach to coinduction based on systems of equations presented in
section~\ref{s-proof-approach}, the problem manifests itself as a
non-guarded equation
$x_j = x_{j'}$ when $j$ is the judgment
$\redinf {(\compile{a_1~a_2}, C) ;~ S ;~ \compile{e}}$
associated with the state
$e \vdash \coeval {a_1~a_2} v$, $C$ and $S$,
while $j'$ is the equivalent judgment
$\redinf {(\compile{a_1}, (\compile{a_2}, {\tt App}, C)) ;~ S ;~ \compile{e}}$
associated with the state
$e \vdash \coeval {a_1} v$, $(\compile{a_2}, {\tt App}, C)$ and $S$.

There are two ways to address this issue.  The first is to modify the
compilation scheme for applications, in order to insert a ``no
operation'' instruction in front of the generated sequence:
$ \compile{a_1~a_2} = {\tt Nop}, \compile{a_1}, \compile{a_2} $.
The {\tt Nop} operation has the obvious machine transition
$ \redone {({\tt Nop}, C) ;~ S ;~ E} {C ;~ S ;~ E} $.
With this modification, the coinductive proof for
lemma~\ref{compile-evalinf} performs a {\tt Nop} transition before
invoking the coinduction hypothesis to deal with the evaluation of
$\compile{a_1}$.  This makes the coinductive proof properly guarded.

Of course, it is inelegant to pepper the generated code with {\tt Nop}
instructions just to make one proof go through.  We therefore use an
alternate approach where the compilation scheme for applications is
unchanged, but we exploit the fact that the
number of such recursive calls that do not perform a machine
transition is necessarily finite, because our term algebra is finite.
More precisely, this number is the left application height
$\leftappheight a$ of the term $a$ being compiled, where
$\leftappheight a$ is defined by
$$
\leftappheight {a_1~a_2} = \leftappheight{a_1} + 1
\qquad\qquad
\leftappheight x = \leftappheight c = \leftappheight {\lambda a} = 0
$$

To prove theorem~\ref{compile-evalinf}, we follow the approach described
by Bertot \cite{Bertot-05} in his coinductive presentation and proof of
Eratosthenes' sieve algorithm.  We first define the coinductive
relation $\redinfNname{n}$ where $n$ is a nonnegative integer:
\begin{ruleset}
\irulenumberdouble{$\redinfNname{n}$-sleep}{
        \redinfN n {C;S;E}
}{%     ------------------------------
        \redinfN {n+1} {C;S;E}
}
\irulenumberdouble{$\redinfNname{n}$-perform}{
        \redplus {C;S;E} {C';S';E'} \and
        \redinfN {n'} {C';S';E'}
}{%     --------------------------------------------------
        \redinfN {n} {C;S;E}
}
\end{ruleset}
The relation $\redinfNname{n}$ is similar to $\redinfname$, but
allows the abstract machine to remain in the same state, not
performing any transitions, for at most $n$ steps (rule
$\redinfNname{n}$-sleep).
If $n$ drops to zero, one or several transitions must be performed
(rule $\redinfNname{n}$-perform).
In exchange for performing at least one transition, the
count $n$ can be reset to any value $n'$, allowing an arbitrary but
finite number of non-transitions to be taken afterwards.

A proof by coinduction shows the following variant of
theorem~\ref{compile-evalinf}, using $\redinfNname{n}$ with $n$ equal to
the left application height of the term under consideration.

\begin{lemma} \label{compile-evalinf-aux}
If $e \vdash \evalinf a$, then
$ \redinfN {\leftappheight a} {(\compile{a}, C) ;~ S ;~ \compile{e}} $
\end{lemma}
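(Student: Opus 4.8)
The plan is to prove lemma~\ref{compile-evalinf-aux} by coinduction, using the relation $\redinfNname{n}$ as a technical device to absorb the finitely many non-transition steps. I would proceed by coinduction on the derivation of $e \vdash \evalinf a$, simultaneously analyzing the last rule applied. Since $\evalinfname$ is defined by the three application rules, it must be the case that $a = a_1~a_2$, so the code component of the machine state is $(\compile{a_1}, \compile{a_2}, {\tt App}, C)$. The crucial observation is that $\leftappheight{a_1~a_2} = \leftappheight{a_1} + 1$, so the allowed sleep budget strictly decreases when we recurse into $a_1$ without performing a machine transition, while it can be reset to any finite value whenever we do perform transitions.

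First I would handle the easy case where divergence comes from the rule $e \vdash \eval {a_1} v \land e \vdash \evalinf{a_2}$ (and, symmetrically, the function-body case). Here theorem~\ref{compile-eval} applies to the terminating evaluation of $a_1$: it gives $\redplus{(\compile{a_1}, \ldots);~ S;~ \compile e}{(\compile{a_2}, {\tt App}, C);~ \compile{v}.S;~\compile e}$, a genuine sequence of at least one transition. This performs a real transition, so I invoke rule ($\redinfNname{n}$-perform) and reset the counter to $\leftappheight{a_2}$ before appealing to the coinduction hypothesis for $e \vdash \evalinf{a_2}$. The function-body case is analogous, applying theorem~\ref{compile-eval} to both $a_1$ and $a_2$, then executing the ${\tt App}$ transition, which again performs a real machine step and resets the counter to $\leftappheight b$ for the recursive call on the body.

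The delicate case is divergence via rule ($\evalinfname$-app-l): $e \vdash \evalinf{a_1}$, and no machine transition is performed before we recurse into $\compile{a_1}$. Here the machine state $(\compile{a_1}, \compile{a_2}, {\tt App}, C);~S;~\compile e$ is reached with code $(\compile{a_1}, C')$ where $C' = (\compile{a_2}, {\tt App}, C)$, so the coinduction hypothesis for $e \vdash \evalinf{a_1}$ would yield precisely $\redinfN{\leftappheight{a_1}}{(\compile{a_1}, C');~ S;~ \compile e}$. I bridge this using rule ($\redinfNname{n}$-sleep): since $\leftappheight{a_1~a_2} = \leftappheight{a_1} + 1$, one sleep step lets me pass from budget $\leftappheight{a_1}+1$ to budget $\leftappheight{a_1}$ and hand off to the coinduction hypothesis. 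The sleep step is legitimate precisely because the counter is positive.

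\textbf{The main obstacle} is guardedness: the coinductive proof must produce a constructor application of $\redinfNname{n}$ before each recursive call, and in the app-l case that constructor is the \emph{sleep} rule, which does not itself perform any machine transition. What makes this sound—and what I must verify carefully—is that the sleep rule strictly decreases the counter $n$, so an unbounded chain of sleeps is impossible; the counter is bounded by $\leftappheight a$, a finite quantity, guaranteeing that after at most $\leftappheight{a_1}+1$ sleeps a genuine transition must occur. This is exactly Bertot's technique~\cite{Bertot-05}: the extra numeric index turns a merely productive-looking but unguarded corecursion into a properly guarded one, where the guard alternates between constructor applications of the two $\redinfNname{n}$ rules and progress is ensured by the well-founded decrease of $n$ between consecutive real transitions.
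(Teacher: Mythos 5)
Your proof is correct and follows essentially the same route as the paper's: coinduction with case analysis on the last divergence rule, using ($\redinfNname{n}$-sleep) with the identity $\leftappheight{a_1~a_2} = \leftappheight{a_1}+1$ for the ($\evalinfname$-app-l) case, and theorem~\ref{compile-eval} followed by ($\redinfNname{n}$-perform) with a counter reset for the other two cases. Your closing remarks on guardedness correctly identify why Bertot's indexed relation makes the corecursion acceptable, with the justification that sleeps cannot go on forever being exactly what lemma~\ref{redinfN-redinf} discharges separately.
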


\begin{constrproof}  By coinduction and case analysis on the last rule
used to derive $e \vdash \evalinf a$.  In the first case, $a = a_1~a_2$
and $e \vdash \evalinf {a_1}$.  Applying the coinduction hypothesis, we obtain
$
\redinfN {\leftappheight {a_1}}
         {(\compile{a_1}, \compile{a_2}, {\tt App}, C) ;~ S ;~ \compile{e}} $
and the result follows by one application of rule
($\redinfNname{n}$-sleep), noticing that 
$\leftappheight{a} = \leftappheight{a_1} + 1$.

In the second case, $a = a_1~a_2$, $e \vdash \eval {a_1} v$ 
and $e \vdash \evalinf {a_2}$.  By lemma~\ref{compile-eval}, we obtain
$ \redplus {(\compile{a_1}, \compile{a_2}, {\tt App}, C) ;~ S ;~ \compile{e}}
            {(\compile{a_2}, {\tt App}, C) ;~ \compile{v_1}.S ;~ \compile{e}}
$.
Using the coinduction hypothesis, we also have
$
\redinfN {\leftappheight {a_2}}
         {(\compile{a_2}, {\tt App}, C) ;~ \compile{v_1}.S ;~ \compile{e}}
$.
The result follows by rule ($\redinfNname{n}$-perform).
The third case of divergence is similar and we omit it.
\end{constrproof}

We then show the following implication between
$\redinfNname{n}$ and $\redinfname$.

\begin{lemma} \label{redinfN-redinf}
If $\redinfN{n}{C;S;E}$, then $\redinf{C;S;E}$.
\end{lemma}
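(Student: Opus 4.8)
The plan is to prove the implication by coinduction on the definition of $\redinfname$, but the difficulty is that a single step of $\redinfNname{n}$ need not perform a machine transition: the rule ($\redinfNname{n}$-sleep) leaves the state unchanged while merely decrementing the counter. Since $\redinfname$ is guarded by an actual transition $\redonename$, a naive coinduction would produce a recursive call that is not guarded by any constructor. I would therefore first extract, from any judgment $\redinfN{n}{C;S;E}$, a genuine first transition together with a residual $\redinfNname{}$ judgment on the state immediately reached, so that the coinduction hypothesis can later be applied under the $\redinfname$ constructor.

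First I would establish an auxiliary \emph{prepend} lemma: if $\red{X}{Y}$ and $\redinfN{n}{Y}$, then $\redinfN{m}{X}$ for some $m$. This follows by a routine induction on the derivation of $\red{X}{Y}$. In the reflexive base case we keep $m = n$. In the step case $\redone{X}{X'}$ and $\red{X'}{Y}$, the induction hypothesis gives $\redinfN{m'}{X'}$; since a single transition is in particular a $\redplusname$ reduction, rule ($\redinfNname{n}$-perform) then yields $\redinfN{m}{X}$ for an arbitrary count $m$.

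Next I would prove the key extraction lemma by induction on the natural number $n$: if $\redinfN{n}{C;S;E}$, then there exist a state $(C_1;S_1;E_1)$ and an integer $m$ such that $\redone{(C;S;E)}{(C_1;S_1;E_1)}$ and $\redinfN{m}{C_1;S_1;E_1}$. Inverting the $\redinfNname{n}$ judgment gives two cases. If the last rule is ($\redinfNname{n}$-sleep), the state is unchanged and the counter decreases, so the induction hypothesis applies directly. If it is ($\redinfNname{n}$-perform), we obtain $\redplus{(C;S;E)}{(C';S';E')}$ and $\redinfN{n'}{C';S';E'}$; decomposing the $\redplusname$ reduction as a first transition $\redone{(C;S;E)}{(C_1;S_1;E_1)}$ followed by $\red{(C_1;S_1;E_1)}{(C';S';E')}$, the prepend lemma supplies the required $\redinfN{m}{C_1;S_1;E_1}$.

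Finally, the lemma itself follows by coinduction on $\redinfname$. Given $\redinfN{n}{C;S;E}$, the extraction lemma provides a transition $\redone{(C;S;E)}{(C_1;S_1;E_1)}$ and a judgment $\redinfN{m}{C_1;S_1;E_1}$; I close the $\redinfname$ derivation with this transition and obtain its second premise $\redinf{C_1;S_1;E_1}$ from the coinduction hypothesis applied to $\redinfN{m}{C_1;S_1;E_1}$. Because the coinduction hypothesis is used directly as the residual premise of the $\redinfname$ rule, \emph{after} a real transition has been performed, the corresponding corecursive definition is productive. The main obstacle is precisely this guardedness requirement, which the extraction lemma is designed to restore; everything else reduces to the two straightforward inductions described above.
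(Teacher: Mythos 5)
Your proof is correct and follows essentially the same route as the paper: the paper likewise first shows, by Peano induction on $n$, that $\redinfN{n}{C;S;E}$ yields a single transition $\redone{C;S;E}{C';S';E'}$ together with a residual judgment $\redinfN{n'}{C';S';E'}$, and then concludes by a guarded coinduction. Your auxiliary prepend lemma merely spells out a detail (re-attaching the tail of the decomposed $\redplusname$ sequence) that the paper leaves implicit in its one-line induction.
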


\begin{constrproof}  We first show that
$\redinfN{n}{C;S;E}$ implies
the existence of $n'$, $C'$, $S'$ and $E'$ such that
$\redone {C;S;E} {C';S';E'}$ and $\redinfN{n'}{C';S';E'}$
by Peano induction over $n$.  The result then follows by 
coinduction. 
\end{constrproof}

Theorem~\ref{compile-evalinf} then follows from lemmas
\ref{compile-evalinf-aux}~and~\ref{redinfN-redinf}.
We omit the proof of theorem~\ref{compile-coeval}, which is similar.

\section{Related work}

There are few instances of coinductive definitions and proofs for
big-step semantics in the literature.  Cousot and Cousot \cite{Cousot-92}
proposed the coinductive big-step characterization of divergence that
we use in this article and studied its applicability for abstract
interpretation, as pursued later by Schmidt \cite{Schmidt98}.
This approach was applied to call-by-name
$\lambda$-calculus by Hughes and Moran \cite{Hughes-Moran-95} and by
Crole \cite{Crole-coind-98}, and to call-by-value
$\lambda$-calculus by Grall \cite{Grall-phd}.

Following up on \cite{Cousot-92}, Cousot and Cousot recently
introduced bi-inductive semantics and applied it to the call-by-value
$\lambda$-calculus \cite{Cousot-bi-ind-sos07}.  Bi-inductive
semantics are defined in terms of smallest fixed points with respect to a
nonstandard ordering.  This approach captures both terminating and
diverging executions using a common set of inference rules.  For
instance, in the case of the call-by-value $\lambda$-calculus,
a single inference rule replaces the two rules ($\evalname$-app) and
($\evalinfname$-app-f) of our presentation.  It is not entirely clear
yet how the bi-inductive approach could be mechanized in a proof assistant.
Another difference with the present article is that Cousot and Cousot
\cite{Cousot-bi-ind-sos07} start from a big-step trace semantics,
then systematically derive the other semantics (big-step and
small-step) by abstraction: this is an interesting alternative to our
approach that separately deals with each semantics.

Gunter and R\'emy \cite{Gunter-Remy-ravl} and
Stoughton \cite{Stoughton-98}
have the same initial goal as us, namely describe both terminating and
diverging computations with big-step semantics, but use increasing
sequences of finite, incomplete derivations to do so, instead of
infinite derivations.  We do not know yet how their approach relates
to our $\evalinfname$ and $\coevalname$ relations.

Milner and Tofte \cite{Milner-Tofte-coinduction} and later 
Leroy and Rouaix \cite{Leroy-Rouaix-99} used coinduction in the context of
big-step semantics for functional and imperative languages, not to
describe diverging evaluations, but to capture safety properties over
possibly cyclic memory stores.  

Of course, coinductive techniques are routinely used in the context of
small-step semantics, especially for the labeled transition systems
arising from process calculi.  
The flavours of coinduction used there,
especially proofs by bisimulations, are quite different from the
present work.
These techniques closely resemble the way coinduction can be used  
for defining the contextual equivalence in an operational setting 
\cite{Pitts-equiv-97}
and the approximation order in
the recursively defined domains involved in denotational semantics
\cite{Pitts94}.

The infinitary $\lambda$-calculus
\cite{Kennaway-KSV-97,Berarducci-Dezani-99} studies diverging
computations from a very different angle: not only the authors use
reduction semantics, but their terms are also infinite, and they use
topological techniques (metrics, convergence, etc) instead of coinduction.

\section{Conclusions}

We investigated two coinductive approaches to giving big-step
semantics for non-terminating computations.  The first, based on
\cite{Cousot-92} and using separate evaluation rules for
terminating terms and diverging terms, appears very well-behaved: it
corresponds exactly to finite and infinite reduction sequences, 
and lends itself well to type soundness proofs and to compiler
correctness proofs.  The second approach, consisting in a coinductive
interpretation of the standard evaluation rules, is less satisfactory:
while amenable to compiler correctness proofs as well,
it captures only a subset of the diverging computations of interest
--- and it is not yet clear which subset exactly.

To evaluate the applicability of the coinductive techniques presented
here to languages other than small functional languages, we developed
coinductive big-step semantics for three low-level imperative
languages used in the Compcert verified compiler \cite{Leroy-Compcert-Coq}:
the source language Clight (a large subset of the C language) and
the two intermediate languages C\#minor and Cminor.  These semantics
characterize non-terminating programs and the traces of input/output
events they perform.  These semantics were used to mechanically prove
that the first four passes of the Compcert compiler preserve the
semantics of diverging programs.  Some of the proofs use techniques similar to
those presented in section~\ref{s-semantic-preservation} to combine
co-inductive and inductive reasoning.  The results of this
experiment are encouraging.  In particular, the addition of
coinductive rules for divergence increases the size of the semantics
by 40\% only.

\section*{Acknowledgments}

Andrzej Filinski disproved the conjecture from
section~\ref{s-soundness-bigstep} very shortly after it was stated.  We
thank Eduardo Bonelli, the anonymous reviewers for the ESOP 2006 conference,
the participants of the 22nd meeting of IFIP Working Group 2.8
(Functional Programming), and the anonymous reviewers of this special
issue for their feedback.

\bibliographystyle{elsart-num}
\bibliography{biblio}

\end{document}